\newtheorem{proof}{Proof}
\newtheorem{claim}{Claim}
\newtheorem{lemma}{Lemma}
\newcommand*{\bs}{\boldsymbol}
\newcommand*{\bl}{{\bs l}}
\newcommand*{\bk}{{\bs k}}
\newcommand*{\bx}{{\bs x}}
\newcommand*{\gw}{ g^\text{\tiny $X\!Y$} }
\newcommand*{\aw}{A^\text{\tiny $X\!Y$} }
\newcommand*{\xy}{\text{\tiny $X\!Y$} }
\newcommand*{\xandy}[2]{\text{\tiny $#1\!#2$} }
\def\Ver{1}
\def\LongVer{0}
\begin{document}


\title{Decomposing CMB lensing power with simulation}

\author{Ethan Anderes}
\thanks{Supported by NSF grant 1007480.}
 \email{anderes@stat.ucdavis.edu}
\affiliation{%
Statistics Department\\ University of California, Davis CA 95616}%

%
%


\begin{abstract}
The reconstruction of the CMB lensing potential is based on a Taylor expansion of lensing effects which is known to have poor convergence properties. For lensing of temperature fluctuations, an understanding of the higher order terms in this expansion which is accurate enough for current experimental sensitivity levels has been developed in Hanson et. al. (2010), as well as a slightly modified Okamoto and Hu quadratic estimator which incorporates lensed rather than unlensed spectra into the estimator weights to mitigate the effect of higher order terms. We extend these results in several ways:
(1) We generalize this analysis to the full set of quadratic temperature/polarization lensing estimators,
(2) We study the effect of higher order terms for more futuristic experimental noise levels,
(3) We show that the ability of the modified quadratic estimator to mitigate the effect of higher order terms relies on a delicate cancellation which occurs only when the true lensed spectra are known. We investigate the sensitivity of this cancellation to uncertainties in or knowledge of these spectra.
We find that higher order terms in the Taylor expansion can impact projected error bars at experimental sensitivities similar to those found in future ACTpol/SPTpol experiments.
\end{abstract}

\maketitle


\section{Introduction}

 Over the past year, data from two ground based telescopes, ACT and SPT, have resulted in the first direct measurement of the weak lensing power spectrum solely from CMB  measurements \cite{Das:2011fk,vanE}. In the coming years, the data  from  {\it Planck} and  upcoming  experiments ACTpol and SPTpol will begin probing this lensing at much greater resolution. The state-of-the-art estimator of weak lensing, the quadratic estimator developed by Hu and Okomoto  \cite{Hu2001b, HuOka2002}, works in part through a delicate cancelation of terms in a Taylor expansion of the lensing effect on the CMB. In this paper we present a simulation based approach for exploring the nature of this cancelation for both the CMB intensity and the polarization fields. In particular, we study a  slightly modified quadratic estimator which: incorporates lensed rather than unlensed spectra into the estimator weights to mitigate the effect of higher order terms; and uses the observed lensed CMB fields to correct for the, so called, $N^{(0)}_\bl$ bias.

 The simulation methodology presented here  allows a stochastic exploration of the higher order bias terms of the quadratic estimate and can be used to reduce the computational load associated with iterative de-biasing algorithms for the quadratic estimate.  In this paper, we use our simulation methodology  to present a detailed study of the, so called, $N^{(1)}_\bl$ and $N^{(2)}_\bl$ bias for the full set of quadratic temperature/polarization lensing estimators. The $N^{(1)}_\bl$ bias was first explored  for the standard flat sky quadratic estimate in Kesden et al. \cite{Kesden:2003uq}. For full sky CMB temperature maps, Hanson et al. \cite{Hanson:2011fk}   developed an approximation to the higher order bias terms, including $N^{(2)}_\bl$, which is accurate enough for current experimental sensitivity levels as well as for the slightly modified quadratic estimator which incorporates lensed rather than unlensed spectra into the estimator weights. We generalize this analysis to the full set of modified quadratic temperature/polarization lensing estimators and demonstrate that, indeed, the lensed spectra weights mitigate the combined higher order bias. However, this  mitigation is obtained only by an increase in the magnitude  of both $N^{(1)}_\bl$ and $N^{(2)}_\bl$ to the extent that they nearly cancel. We explore the extent with which this cancelation is sensitive to fiducial uncertainty in the way the lensed spectra weights are computed. We find that, under experimental conditions similar to those in future ACTpol and SPTpol  experiments, the $EB$ quadratic estimator is not sensitive to low $l$ fiducial uncertainty whereas the $EE$ and $TE$ are sensitive to the point of degrading inferential power.

 The remainder of the paper is organized as follows. In sections \ref{tqe}, \ref{tsd} and \ref{ecpp} we give an overview of the modified quadratic estimate, derive the spectral density of the quadratic estimate in terms of higher order bias terms, and discuss the estimation of the spectral density of the lensing potential. In Section \ref{fmca} we present two simulation based methods for  estimating the higher order bias terms. The first simulation method works exclusively for estimating $N^{(1)}_\bl$, and is mainly used to validate the second algorithm  which can produce all higher order terms $N^{(j)}_\bl$ for $j\geq 1$.
In Section \ref{sim} we use these methods to study the  higher order terms for experimental noise levels similar to those found in future ACTpol and SPTpol experiments.  
The paper concludes with the Appendix which gives fast Fourier transform  (FFT) algorithms for computing the modified quadratic estimate. These algorithms extend the FFT techniques developed in  \cite{Hu2001b}  to the computation of all quadratic normalization constants and provides fast (non-stochastic) algorithms which extend the simulation techniques found in  \cite{Das:2011fk} for computing the `Gaussian bias' from the lensed CMB four-point function for all temperature/polarization quadratic estimators.

\section{The quadratic estimator}
\label{tqe}

  The effect of weak lensing is to simply remap the CMB temperature  $T(\bs x)$ and 
  Stokes polarization fields $Q(\bs x)$ and $U(\bs x)$ for a flat sky coordinate system $\bs x\in \Bbb R^2$.   Up to leading order, the remapping  displacements are given by $\nabla \phi(\bs x)$, where $\phi(\bs x)$ denotes a lensing potential and is the planar projection of a three dimensional gravitational potential (see \cite{DodelsonBook}). Therefore, for any CMB field  $X\in\{T,Q,U \}$ the corresponding lensed field can be written $X(\bs x+\nabla \phi(\bs x))$. For the remainder of the paper we let 
  \[\tilde X(\bs x)\equiv X(\bs x+\nabla \phi(\bs x)) + N^\text{\tiny $X$}(\bs x)\] 
   denote the corresponding lensed CMB field with additive independent experimental noise given by $N^\text{\tiny $X$}$ (which includes a beam deconvolution). Using this notation the corresponding lensed $E$ and $B$ modes are given by 
$\tilde E_\bl  \equiv -\cos(2\varphi_\bl) \tilde Q_\bl - \sin(2\varphi_\bl) \tilde U_\bl $ and
$\tilde B_\bl  \equiv \sin(2\varphi_\bl) \tilde Q_\bl - \cos(2\varphi_\bl) \tilde U_\bl$ where $X_\bl\equiv \int \frac{d^2\bs x}{2\pi}  e^{-i\bs x\cdot \bl} X(\bs x) $   (unitary angular frequency) and $\varphi_\bl$ denotes the phase angle of frequency $\bl$.

For any field $X, Y\in \{ T, Q, U, E, B\}$ the spectral density $C^{\xandy XY}_{\bl}$ is defined to satisfy $\langle X^{\vphantom{*}}_\bl Y_{\bl^\prime}^* \rangle =\delta_{\bl-\bl^\prime} C^{\xandy XY}_{\bl} $ where $\delta_\bl \equiv \int \frac{d^2\bs x}{(2\pi)^2}  e^{i\bs x\cdot \bl} $.
The angled brackets $\langle \cdot \rangle$  denote ensemble averaging (or expected value) over both the CMB fields and the large scale structure given by $\phi$. In addition, we let $\langle \cdot \rangle_{\xandy XY}$  denote  expected value with respect to the unlensed CMB  fields $ T, Q, U, E, B$  and $\langle\cdot \rangle_\phi$ denote expected value with respect to large scale structure given by $\phi$. Throughout this paper we stipulate $\phi$ is independent of $ T, Q, U, E, B$ which implies: $\langle \cdot \rangle =\langle  \langle \cdot \rangle_{\xandy XY} \rangle_\phi$.
 We let $\widetilde C^{\xandy XY}_{\bl}$  denote the lensed CMB spectral density  {\em without} experimental noise and  let $C^{\xandy XY}_{\bl,\text{\tiny exp}}\equiv C^{\xandy XY}_{\bl} +  C^{\text{$N^X\! N^Y$}}_\bl$  and  $\widetilde C^{\xandy XY}_{\bl,\text{\tiny exp}} \equiv \widetilde C^{\xandy XY}_{\bl} +  C^{\text{ $N^X\!N^Y$}}_\bl$  denote the corresponding unlensed and lensed spectral densities {\em with} the additional experimental noise.

The quadratic estimate, based on two lensed CMB fields $\tilde X$ and $\tilde Y$, is derived from the following two statements:
\begin{align}
\label{fundEQ1}
\bigl\langle \tilde X^{\phantom{*}}_{\bk + \bl} \tilde Y^*_{\bk}&\bigr\rangle_{\xandy XY}  \approx    \phi^{\phantom{*}}_\bl f^{\xandy XY}_{\bl,\bk},\quad\text{when $\bl\neq 0$;}\\
\label{fundEQ2}
\bigl\langle \tilde X^{\phantom{*}}_{\bk + \bl} \tilde Y^*_{\bk}&\bigr\rangle_{\phantom{\xandy XY}}  =  \delta_{\bl} \widetilde C^\xy_{\bk,\text{\tiny exp}}
\end{align}
 which hold for any $X,Y\in \{ T,E, B \}$ and where the coefficients $f^{\xandy XY}_{\bl,\bk}$ are given in the Appendix.
Equation (\ref{fundEQ1}) approximates the cross frequency correlation  (at separation lag $\bl$) induced by the  nonstationarity in $\tilde X$ (when regarding $\phi$ as a fixed nonrandom field). 
This is derived through a Taylor expansion of  the lensing operation for any  $X\in\{T,Q, U\}$ 
\begin{equation}
\label{ex1}
\tilde X(\bs x)= \delta^0X(\bs x)  + \delta^1 X(\bs x) +\delta^2X(\bs x) +\cdots
\end{equation}
where $\delta^0X(\bs x)\equiv X(\bs x) + N^\text{\tiny $X$}(\bs x)$,  $\delta^1X(\bs x) \equiv \nabla^aX(\bx) \nabla_a \phi (\bx)$, etc. 
When $X\in\{E,B\}$  one defines $
\delta^j E_\bl \equiv -\cos(2\varphi_\bl) \delta^jQ_\bl - \sin(2\varphi_\bl) \delta^j U_\bl $ and $
\delta^j B_\bl \equiv \sin(2\varphi_\bl) \delta^j Q_\bl - \cos(2\varphi_\bl) \delta^j U_\bl$. Then  by expanding $\tilde X^{\phantom{*}}_{\bk + \bl} \tilde Y^*_{\bk}$ with (\ref{ex1}), regrouping terms by the order of $\phi$, one obtains
$f^\xy_{\bl,\bk} \phi_\bl = \langle \delta^1X_{\bk+\bl} \delta^0 Y_\bk^* \rangle_{\xy} +  \langle \delta^0X_{\bk+\bl} \delta^1 Y_\bk^* \rangle_{\xy} $ which gives approximation (\ref{fundEQ1}).
Equation (\ref{fundEQ2}), on the other hand, is obtained by treating both the CMB and the large scale structure $\phi$ as random so that $\tilde X$, from this viewpoint, is isotropic (but non-Gaussian). 

Hu and Okamoto  \cite{Hu2001b,HuOka2002} used approximations (\ref{fundEQ1}) and (\ref{fundEQ2}) to construct  the optimal quadratic estimate of $\phi$ based on  $\tilde X$ and $\tilde Y$ as follows
\begin{equation}
\label{qe1}
\hat \phi_{\bs l}^\xy \equiv  \aw_\bl \int \frac{d^2\bs k}{2\pi} \gw_{\bs l,\bs k\,}  \tilde X_{\bs k+\bs l} \tilde Y^*_{\bs k} 
\end{equation}
where $\gw_{\bs l,\bs k\,}  \equiv 2\pi{f^\xy_{\bl, \bk}}[\widetilde C^\text{\tiny X\!X}_{\bk+\bl,\text{\tiny exp}} \widetilde C^\text{\tiny $Y\!Y$}_{\bk,\text{\tiny exp}} ]^{-1} $. 
The normalizing constant $\aw_\bl$ is determined through an unbiased constraint. In particular, using the fact that  $f_{\bl,\bk}^\text{\tiny $X\!Y$}$ is real we have that $ \langle \hat\phi^\text{\tiny $X\!Y$}_\bl \rangle_\xy = \phi_\bl A_\bl^\text{\tiny $X\!Y$} \int \frac{d^2 \bk}{(2\pi)^2}  |g^\text{\tiny $X\!Y$}_{\bl,\bk} |^2 \widetilde C^\text{\tiny X\!X}_{\bk+\bl,\text{\tiny exp}} \widetilde C^\text{\tiny $Y\!Y$}_{\bk,\text{\tiny exp}} $ by equation (\ref{fundEQ1}). 
Then requiring that $\bigl\langle \hat\phi^\text{\tiny $X\!Y$}_\bl\bigr\rangle = \phi_\bl$ determines $A^\text{\tiny $X\!Y$}_\bl$ as follows
\begin{equation}
\label{fA} 
A_\bl^\text{\tiny $X\!Y$}
= \Bigl[ \int \frac{d^2 \bk}{(2\pi)^2}  |\gw_{\bl,\bk}|^2 \widetilde C^\text{\tiny X\!X}_{\bk+\bl,\text{\tiny exp}} \widetilde C^\text{\tiny $Y\!Y$}_{\bk,\text{\tiny exp}} \Bigr]^{-1} .   
\end{equation}

\subsection{Lensed versus unlensed  weights}
\label{defB}
There is a small modification to the standard quadratic estimate $\hat \phi^\xy_\bl$ which can mitigate the low $l$ bias (arising from the $N^{(2)}_\bl$ term discussed in the next section) when using the observed power $|\hat \phi^{\xandy XY}_\bl|^2$ to estimate $C^{\phi\phi}_\bl$. This modified estimate, denoted $\tilde \phi^\xy_\bl$, is obtained   by replacing all occurrences of  unlensed spectra in  $g_{\bl, \bk}^\xy$ and  $A_\bl^\xy$  with the corresponding lensed spectra. In particular $\tilde \phi^\xy_\bl$ is defined as
\begin{equation}
\label{qe2}
\tilde  \phi_{\bs l}^\xy \equiv  \tilde A^\xy_\bl \int \frac{d^2\bs k}{2\pi} \tilde g^\xy_{\bs l,\bs k\,}  \tilde X_{\bs k+\bs l} \tilde Y^*_{\bs k} 
\end{equation}
where $\tilde A_\bl^\text{\tiny $X\!Y$}= \bigl[ \int \frac{d^2 \bk}{(2\pi)^2}  |\tilde g^\xy_{\bl,\bk}|^2 \widetilde C^\text{\tiny X\!X}_{\bk+\bl,\text{\tiny exp}} \widetilde C^\text{\tiny $Y\!Y$}_{\bk,\text{\tiny exp}}\bigr]^{-1} $ and $\tilde g^\xy_{\bs l,\bs k\,}$ 
is obtained from $g^\xy_{\bs l,\bs k\,}$ by  replacing every occurrence of $C^\xy_\bl$, $C^\xandy{X}{X}_\bl$ and $C^\xandy{X}{X}_\bl$ with the corresponding lensed spectra ${\widetilde C}^{\xy}_\bl$, $\widetilde C^\xandy{X}{X}_\bl$ and ${\widetilde C}^\xandy{Y}{Y}_\bl$. 
For example, when $X=Y=T$ one has
\begin{align*}
g^\text{\tiny $TT$}_{\bl, \bk} \equiv \frac{ \bl \cdot [(\bk + \bl) C^\text{\tiny $TT$}_{\bk+\bl}  - \bk  C^\text{\tiny $TT$}_{\bk}]}{\widetilde C^\text{\tiny T\!T}_{\bk+\bl,\text{\tiny exp}} \widetilde C^\text{\tiny $T\!T$}_{\bk,\text{\tiny exp}}}; \\
\tilde g^\text{\tiny $TT$}_{\bl, \bk} \equiv \frac{\bl \cdot [(\bk + \bl)  \widetilde C^\text{\tiny $TT$}_{\bk+\bl}  - \bk   \widetilde C^\text{\tiny $TT$}_{\bk}]}{\widetilde C^\text{\tiny T\!T}_{\bk+\bl,\text{\tiny exp}} \widetilde C^\text{\tiny $T\!T$}_{\bk,\text{\tiny exp}}}.
\end{align*}
Notice that the estimate  $\tilde \phi^\xy_\bk$ is not normalized to be unbiased. Indeed from equation (\ref{fundEQ1}) one has 
\begin{align*}
\langle \tilde \phi^\xy_\bl\rangle_\xy
& \approx \phi_\bl \underbrace{\Bigl[\tilde A_\bl^\xy  \int \frac{d^2 \bk}{(2\pi)^2}  \tilde g^\xy_{\bs l,\bs k\,}  g^\xy_{\bs l,\bs k\,} \widetilde C^\text{\tiny X\!X}_{\bk+\bl,\text{\tiny exp}} \widetilde C^\text{\tiny $Y\!Y$}_{\bk,\text{\tiny exp}}\Bigr]}_{\equiv B_\bl^\xy}.
\end{align*}

\section{The spectral density of the quadratic estimate}
\label{tsd}

In this section we derive the following all order decomposition of the spectral density of the quadratic estimate 
\begin{align}
\bigl\langle\hat\phi^\text{\tiny $X\!Y$}_\bl  \hat\phi^{\text{\tiny $X\!Y$}^*}_{\bl^\prime}  \bigr\rangle 
&= \delta_{\bl - \bl^\prime}[ C_\bl^{\phi\phi} + N^{(0)}_{\bl} + N^{(1)}_{\bl} + \cdots ]\label{expan}
\end{align}
which  will then be used, in a subsequent section, to derive the estimation bias for $C_\bl^{\phi\phi}$.
The first term $N^{(0)}_\bl$ is related to the disconnected terms of the lensed CMB four-point function, whereas the higher order terms $N_{\bl}^{(j)}$ for $j\geq 1$ are related to the connected terms of the four-point function segmented by the order of  $C_\bl^{\phi\phi}$. Most of this section focuses on  the quadratic estimate  $\hat\phi^\xy$ followed by a brief discussion of the corresponding decomposition for the modified quadratic estimate $\tilde \phi^\xy$.

 Our derivation of the  spectral density of $\hat\phi^\xy_\bl$  in the flat sky is similar  to the analysis of the full sky trispectrum  done in Hanson et al. \cite{Hanson:2011fk}.  One starts by relating $\langle \hat\phi^\text{\tiny $X\!Y$}_\bl  \hat\phi^{\text{\tiny $X\!Y$}^*}_{\bl^\prime} \rangle$ to the lensed CMB four-point function by  distributing the expected value as follows
 \begin{align}
\langle &\hat\phi^\text{\tiny $X\!Y$}_\bl  \hat\phi^{\text{\tiny $X\!Y$}^*}_{\bl^\prime} \rangle=A_\bl^\text{\tiny $X\!Y$}A_{\bl^\prime}^\text{\tiny $X\!Y$}\nonumber\\
&\times \int\frac{ d^2 \bk}{2\pi} \int \frac{d^2 \bk^\prime}{2\pi} g^\text{\tiny $X\!Y$}_{\bl,\bk\,}   g^\text{\tiny $X\!Y$}_{\bl^\prime,\bk^\prime}  \bigl\langle \widetilde X^{\phantom{*}}_{\bs k +\bs l}    \widetilde Y^*_{\bk}   \widetilde X^*_{\bs k^\prime +\bl^\prime} \widetilde Y_{\bk^\prime}\bigr\rangle.
\label{totalSum}
\end{align}
To decompose (\ref{totalSum}) one then expands the four-point product term in the above integrand by expanding   the lensed CMB  Taylor expansion  (\ref{ex1}) to obtain
\begin{align}
\bigl\langle \widetilde X^{\phantom{*}}_{\bs k +\bs l}    \widetilde Y^*_{\bk}   \widetilde X^*_{\bs k^\prime +\bs l^\prime} \widetilde Y_{\bk^\prime}\bigr\rangle
&= \underbrace{ \sum_{i,j,p,q}\bigl\langle  \delta^i X^{\phantom{*}}_{\bs k +\bs l}   \delta^j Y^*_{\bk}   \delta^p X^*_{\bs k^\prime +\bs l^\prime}  \delta^q Y_{\bk^\prime}\bigr\rangle}_{\shortstack{ \text{\small \it decomposes further into connected } \\ \text{\small  \it and disconnected terms}}}
\label{SSum}
\end{align}

\begin{figure*}[ht]
\includegraphics[height = 2.8in]{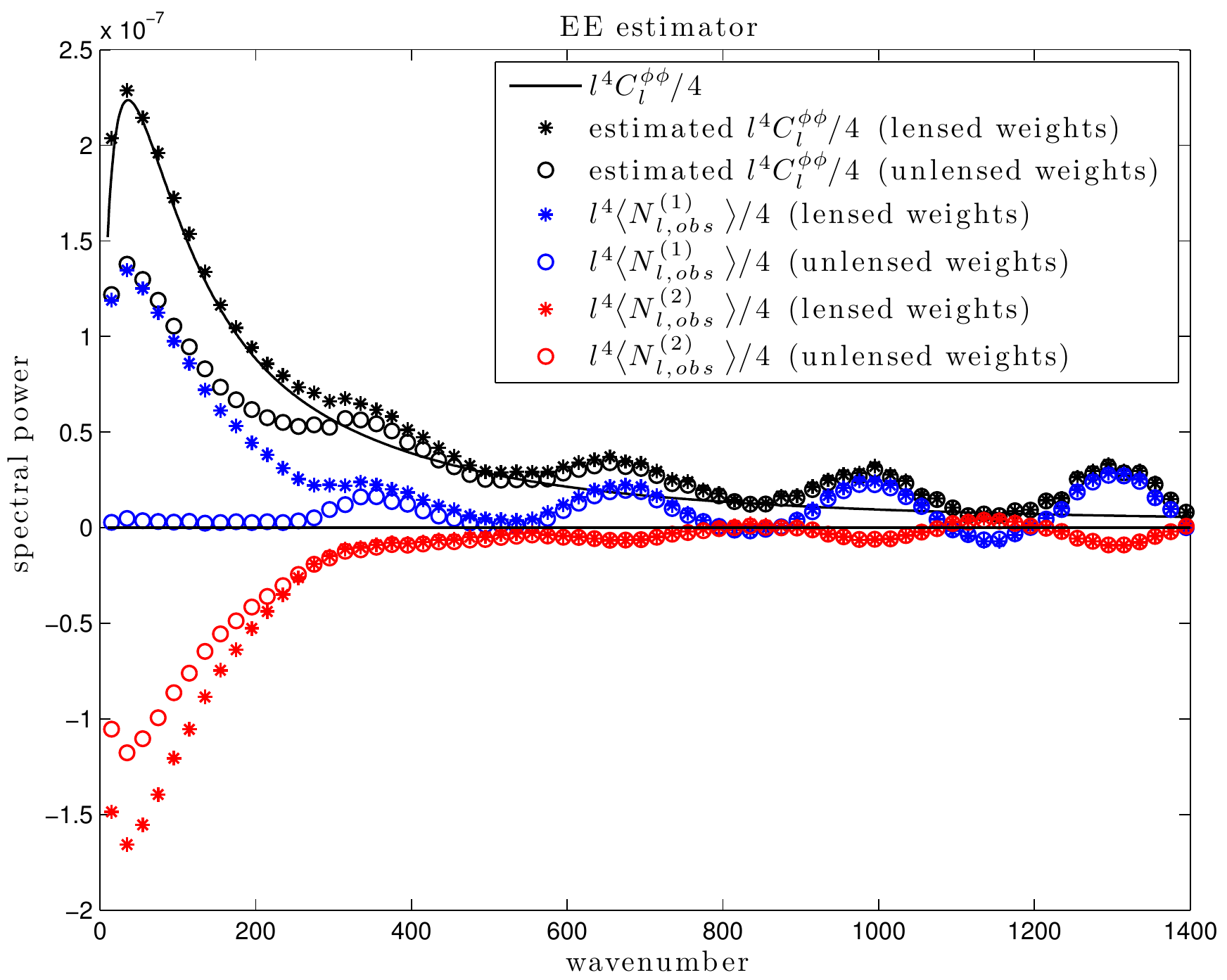} 
\includegraphics[height = 2.8in]{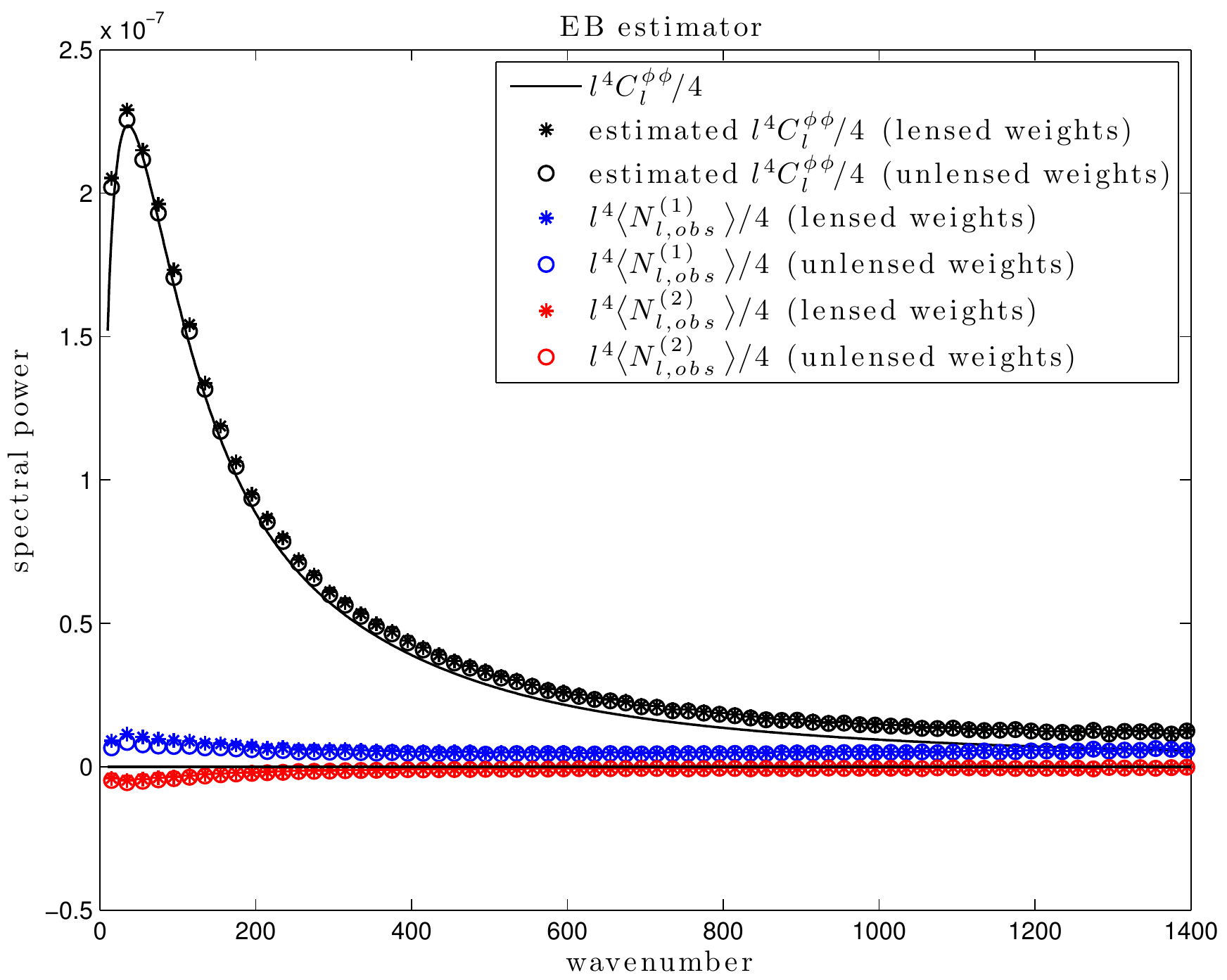} 
\caption{
These figures demonstrate the difference between EE and EB quadratic estimators when using lensed versus unlensed spectra into the estimator weights.
The main feature of the left hand plot  is the reduction of bias---comparing `{\large{$\ast$}}'  with `{\large {$\circ$}}'---in the EE estimator when the lensed spectra are used for the estimator weights.  Also by comparing `{\large\textcolor{red}{$\ast$}}'  with `{\large \textcolor{red}{$\circ$}}' and `{\large\textcolor{blue}{$\ast$}}'  with `{\large \textcolor{blue}{$\circ$}}'  it is clear that this reduction   is obtained  by an  increase in magnitude of both the so called $N^{(1)}$  and  $N^{(2)}$ bias  to extent that they nearly cancel.
The first order bias term  $l^4\bigl\langle N^{(1)}_{\bl, \text{\tiny obs}}\bigr\rangle/4$ is denoted `{\large\textcolor{blue}{$\ast$}}' for $\tilde\phi^\xy_\bl$ and  `{\large \textcolor{blue}{$\circ$}}' for $\hat\phi^\xy_\bl$. The second order bias term  $l^4\bigl\langle N^{(2)}_{\bl, \text{\tiny obs}}\bigr\rangle/4$ is  denoted  `{\large\textcolor{red}{$\ast$}}' for $\tilde\phi^\xy_\bl$ and  `{\large \textcolor{red}{$\circ$}}' for $\hat\phi^\xy_\bl$. Finally, the expected value of the spectral density estimates  $l^4\langle\delta_0^{-1}  |\hat \phi_\bl^\xy  |^2 -N^{(0)}_{\bl,\text{\tiny obs}}\rangle/4$ and $l^4\langle \delta_0^{-1} |\tilde  \phi_\bl^\xy  |^2 -N^{(0)}_{\bl,\text{\tiny obs}}\rangle/4$  are denoted by `{\large\textcolor{black}{$\circ$}}' and `{\large\textcolor{black}{$\ast$}}' respectively. 
See Section \ref{sim} for details.
 \label{fig1} }
\end{figure*}

\subsection{Disconnected terms}

After distributing the expected value in the right hand side of (\ref{SSum}) through the $j$-fold convolution which makes up $\delta^j X_\bl$ 
and subsequently applying Wicks theorem,  one can further decompose  $\bigl\langle  \delta^i X^{\phantom{*}}_{\bs k +\bs l}   \delta^j Y^*_{\bk}   \delta^p X^*_{\bs k^\prime +\bs l^\prime}  \delta^q Y_{\bk^\prime}\bigr\rangle$ into what are called connected and disconnected terms. 
The disconnected terms in the four-point product  are the terms which factor into cross-spectra of the fields  $\delta^i X$,  $\delta^j X$,  $\delta^p X$ and  $\delta^q X$. 
For example, if $(i,j,p,q)=(1,1,0,0)$ and $\phi$ is assumed independent of $X$ and $Y$ then
\begin{align*}
\bigl\langle  \delta^1 &X^{\phantom{*}}_{\bs k +\bs l}   \delta^1 Y^*_{\bk}   \delta^0 X^*_{\bs k^\prime +\bs l^\prime}  \delta^0 Y_{\bk^\prime}\bigr\rangle\\
&=
\bigl\langle 
\contraction{}{ \delta^1\!X^{\phantom{*}}_{\bs k +\bs l} } {} {\delta^1\! Y^*_{\bk}\,} 
\contraction{{ \delta^1\! X^{\phantom{*}}_{\bs k +\bs l} }  {\delta^1\! Y^*_{\bk}\,}}   {\delta^0\! X^*_{\bs k^\prime +\bs l^\prime}\,} {} {\delta^0\! Y_{\bk^\prime}\,} 
\bcontraction{}{ \delta^1\!X^{\phantom{*}}_{\bs k +\bs l} } {} {\delta^1\! Y^*_{\bk}\,} 
{ \delta^1\! X^{\phantom{*}}_{\bs k +\bs l} }  {\delta^1\! Y^*_{\bk}\,}   {\delta^0\! X^*_{\bs k^\prime +\bs l^\prime}\,}  {\delta^0\! Y_{\bk^\prime}\,}
\bigr\rangle\leftarrow \text{\small \it disconnected term}  \\
&\quad+ 
\bigl\langle 
\contraction[2ex]{} { \delta^1\! X^{\phantom{*}}_{\bs k +\bs l} } { {\delta^1\! Y^*_{\bk}\,}   {\delta^0\! X^*_{\bs k^\prime +\bs l^\prime}\,}}  {\delta^0\! Y_{\bk^\prime}\,}
\contraction{ \delta^1\! X^{\phantom{*}}_{\bs k +\bs l} }  {\delta^1\! Y^*_{\bk}\,} {}  {\delta^0\! X^*_{\bs k^\prime +\bs l^\prime}\,} 
\bcontraction{}{ \delta^1\!X^{\phantom{*}}_{\bs k +\bs l} } {} {\delta^1\! Y^*_{\bk}\,} 
{ \delta^1\! X^{\phantom{*}}_{\bs k +\bs l} }  {\delta^1\! Y^*_{\bk}\,}   {\delta^0\! X^*_{\bs k^\prime +\bs l^\prime}\,}  {\delta^0\! Y_{\bk^\prime}\,}
\bigr\rangle \leftarrow \text{\small \it connected term}\\
&\quad+ 
\bigl\langle 
\contraction{}{ \delta^1\! X^{\phantom{*}}_{\bs k +\bs l} }  {\delta^1\! Y^*_{\bk}\,}   {\delta^0\! X^*_{\bs k^\prime +\bs l^\prime}\,} 
\contraction[2ex]{ \delta^1\! X^{\phantom{*}}_{\bs k +\bs l} }{\delta^1\! Y^*_{\bk}\,}   {\delta^0\! X^*_{\bs k^\prime +\bs l^\prime}\,}  {\delta^0\! Y_{\bk^\prime}\,}
\bcontraction{}{ \delta^1\!X^{\phantom{*}}_{\bs k +\bs l} } {} {\delta^1\! Y^*_{\bk}\,} 
{ \delta^1\! X^{\phantom{*}}_{\bs k +\bs l} }  {\delta^1\! Y^*_{\bk}\,}   {\delta^0\! X^*_{\bs k^\prime +\bs l^\prime}\,}  {\delta^0\! Y_{\bk^\prime}\,}
\bigr\rangle\leftarrow \text{\small \it connected term} 
\end{align*}
where the top contraction symbols correspond to pairing the CMB fields and the bottom contraction symbols correspond to pairing the lensing potential in $\delta^1\!X$ and $\delta^1\!Y$. 
The contraction pairings on the above disconnected term results in a product of two spectra as follows
\begin{align*} 
&\bigl\langle 
\contraction{}{ \delta^1\!X^{\phantom{*}}_{\bs k +\bs l} } {} {\delta^1\! Y^*_{\bk}\,} 
\contraction{{ \delta^1\! X^{\phantom{*}}_{\bs k +\bs l} }  {\delta^1\! Y^*_{\bk}\,}}   {\delta^0\! X^*_{\bs k^\prime +\bs l^\prime}\,} {} {\delta^0\! Y_{\bk^\prime}\,} 
\bcontraction{}{ \delta^1\!X^{\phantom{*}}_{\bs k +\bs l} } {} {\delta^1\! Y^*_{\bk}\,} 
{ \delta^1\! X^{\phantom{*}}_{\bs k +\bs l} }  {\delta^1\! Y^*_{\bk}\,}   {\delta^0\! X^*_{\bs k^\prime +\bs l^\prime}\,}  {\delta^0\! Y_{\bk^\prime}\,}
\bigr\rangle \\
&\qquad\qquad\quad= \bigl\langle { \delta^1\! X^{\phantom{*}}_{\bs k +\bs l} }  {\delta^1\! Y^*_{\bk}\,}  \bigr\rangle  \bigl\langle {\delta^0\! X^*_{\bs k^\prime +\bs l^\prime}\,}  {\delta^0\! Y_{\bk^\prime}\,}\bigr\rangle.
\end{align*}
In a similar manner,  for general $(i,j,p,q)$, the disconnected terms can be grouped into the three types: one for each possible configuration of the of top contraction symbols. Then regrouping all disconnected terms in (\ref{SSum}), by top contraction type, results in the following three terms:
\begin{align} 
 &\text{\small \it disconnected terms in} \sum_{i,j,p,q}\bigl\langle \delta^i X^{\phantom{*}}_{\bs k +\bs l}    \delta^j Y^*_{\bk}   \delta^p X^*_{\bs k^\prime +\bs l^\prime}  \delta^q Y_{\bk^\prime}\bigr\rangle\nonumber \\
&\qquad= 
\bigl\langle 
\contraction[2ex]{} { \tilde  X^{\phantom{*}}_{\bs k +\bs l} } { {\tilde  Y^*_{\bk}\,}   {\tilde  X^*_{\bs k^\prime +\bs l^\prime}\,}}  {\tilde  Y_{\bk^\prime}\,}
\contraction{ \tilde  X^{\phantom{*}}_{\bs k +\bs l} }  {\tilde  Y^*_{\bk}\,} {}  {\tilde  X^*_{\bs k^\prime +\bs l^\prime}\,} 
{ \tilde  X^{\phantom{*}}_{\bs k +\bs l} }  {\tilde  Y^*_{\bk}\,}   {\tilde  X^*_{\bs k^\prime +\bs l^\prime}\,}  {\tilde  Y_{\bk^\prime}\,}
\bigr\rangle
+ 
\bigl\langle 
\contraction{}{ \tilde  X^{\phantom{*}}_{\bs k +\bs l} }  {\tilde  Y^*_{\bk}\,}   {\tilde  X^*_{\bs k^\prime +\bs l^\prime}\,} 
\contraction[2ex]{ \tilde  X^{\phantom{*}}_{\bs k +\bs l} }{\tilde  Y^*_{\bk}\,}   {\tilde  X^*_{\bs k^\prime +\bs l^\prime}\,}  {\tilde  Y_{\bk^\prime}\,}
{ \tilde  X^{\phantom{*}}_{\bs k +\bs l} }  {\tilde  Y^*_{\bk}\,}   {\tilde  X^*_{\bs k^\prime +\bs l^\prime}\,}  {\tilde  Y_{\bk^\prime}\,}
\bigr\rangle 
 \nonumber\\\nonumber
 &\qquad\qquad\qquad\qquad\qquad\qquad\quad\!\!\!+
 \underbrace{
\bigl\langle 
\contraction{}{ \tilde X^{\phantom{*}}_{\bs k +\bs l} } {} {\tilde Y^*_{\bk}\,} 
\contraction{{ \tilde  X^{\phantom{*}}_{\bs k +\bs l} }  {\tilde  Y^*_{\bk}\,}}   {\tilde  X^*_{\bs k^\prime +\bs l^\prime}\,} {} {\tilde  Y_{\bk^\prime}\,} 
{ \tilde X^{\phantom{*}}_{\bs k +\bs l} }  {\tilde  Y^*_{\bk}\,}   {\tilde  X^*_{\bs k^\prime +\bs l^\prime}\,}  {\tilde  Y_{\bk^\prime}\,}
\bigr\rangle }_{=0 \text{  when $\bl \neq 0$ or $\bl^\prime \neq 0$} }\\
 &\qquad= \widetilde  C^\text{\tiny X\!Y}_{\bk+\bl,\text{\tiny exp}}  \widetilde  C^\text{\tiny X\!Y}_{\bk,\text{\tiny exp}} \delta_{\bk+\bk^\prime+\bl^\prime}  \delta_{\bk+\bk^\prime+\bl}   \label{disSum} \\
 &\qquad\qquad\qquad\qquad +   \widetilde C^\text{\tiny X\!X}_{\bk+\bl,\text{\tiny exp}}  \widetilde C^\text{\tiny Y\!Y}_{\bk,\text{\tiny exp}} \delta_{\bk-\bk^\prime} \delta_{\bk+\bl -\bk^\prime-\bl^\prime}  \nonumber
\end{align}
where the last line is obtained by applying approximation (\ref{fundEQ2}).
Substituting the four-point product term in (\ref{totalSum}) with (\ref{disSum})  results in, what is typically called, the $N^{(0)}_\bl$ bias:
 \begin{align*}
 \text{\small \it disconnected terms in $\bigl\langle\hat\phi^\text{\tiny $X\!Y$}_\bl  \hat\phi^{\text{\tiny $X\!Y$}^*}_{\bl^\prime} \bigr\rangle$}= \delta_{\bl - \bl^\prime}N^{(0)}_{\bl}
 \end{align*}
 where 
 \begin{align*}
N^{(0)}_{\bl} \equiv [\aw_\bl]^2
 \int \frac{d^2\bk}{(2\pi)^2} \Bigl(  \gw_{\bs l,\bk\,}   \gw_{\bl, -\bk-\bl\,}   \widetilde C^\text{\tiny $X\!Y$}_{\bk+\bl,\text{\tiny exp}}   \widetilde C^\text{\tiny $X\!Y$}_{\bk,\text{\tiny exp}}&\\
  + |\gw_{\bs l,\bs k}|^2 \widetilde C^\text{\tiny $X\!X$}_{\bk+\bl,\text{\tiny exp}}  \widetilde C^\text{\tiny $Y\!Y$}_{\bk,\text{\tiny exp}}\Bigr)&.
\end{align*}

\subsection{Connected terms}

The connected terms decompose further into what we call the `first connected terms' and the `second connected terms'.
There are only four `first connected terms' and are defined as follows:
\begin{align*}
\text{\small\it{}first connected } &\text{\small\it{}terms in}\sum_{i,j,p,q} \bigl\langle  \delta^i X^{\phantom{*}}_{\bs k +\bs l}   \delta^j Y^*_{\bk}   \delta^p X^*_{\bs k^\prime +\bs l^\prime}  \delta^q Y_{\bk^\prime}\bigr\rangle\\
 &\equiv
 \bigl\langle
\contraction{}{ \delta^1X_{\bk+\bl}} {}{\delta^0 Y_\bk^* } 
\contraction{{ \delta^1X_{\bk+\bl}} {\delta^0 Y_\bk^* }} { \delta^1X^*_{\bk^\prime +\bl^\prime}} {}{\delta^0 Y_{\bk^\prime}}
\bcontraction{}{ \delta^1X_{\bk+\bl}} {\delta^0 Y_\bk^* } { \delta^1X^*_{\bk^\prime +\bl^\prime}} 
{ \delta^1X_{\bk+\bl}} {\delta^0 Y_\bk^* } { \delta^1X^*_{\bk^\prime +\bl^\prime}} {\delta^0 Y_{\bk^\prime}}
\bigr\rangle\\
&\qquad\qquad
+
 \bigl\langle
\contraction{}{\delta^1X_{\bk+\bl}}{} {\delta^0 Y_\bk^*} 
\contraction{{\delta^1X_{\bk+\bl}} {\delta^0 Y_\bk^*}} {\delta^0X^*_{\bk^\prime +\bl^\prime}} {}{\delta^1 Y_{\bk^\prime}}
\bcontraction{}{\delta^1X_{\bk+\bl}} {{\delta^0 Y_\bk^*} {\delta^0X^*_{\bk^\prime +\bl^\prime}}} {\delta^1 Y_{\bk^\prime}}
{\delta^1X_{\bk+\bl}} {\delta^0 Y_\bk^*} {\delta^0X^*_{\bk^\prime +\bl^\prime}} {\delta^1 Y_{\bk^\prime}}
\bigr\rangle
\\
&\qquad\qquad+
\bigl\langle
\contraction {}{\delta^0X_{\bk+\bl}}{} {\delta^1 Y_\bk^*}
\contraction {{\delta^0X_{\bk+\bl}} {\delta^1 Y_\bk^*}} { \delta^1X^*_{\bk^\prime +\bl^\prime}} {} {\delta^0 Y_{\bk^\prime}}
\bcontraction {\delta^0X_{\bk+\bl}} {\delta^1 Y_\bk^*}{} { \delta^1X^*_{\bk^\prime +\bl^\prime}} 
 {\delta^0X_{\bk+\bl}} {\delta^1 Y_\bk^*} { \delta^1X^*_{\bk^\prime +\bl^\prime}} {\delta^0 Y_{\bk^\prime}}
 \bigr\rangle\\
&\qquad\qquad+
\bigl \langle
\contraction{} {\delta^0X_{\bk+\bl}} {} {\delta^1 Y_\bk^*} 
\contraction {\delta^0X_{\bk+\bl} {\delta^1 Y_\bk^*} } {\delta^0X^*_{\bk^\prime +\bl^\prime}} {} {\delta^1 Y_{\bk^\prime} }
\bcontraction {\delta^0X_{\bk+\bl}}  {\delta^1 Y_\bk^*}  {\delta^0X^*_{\bk^\prime +\bl^\prime}} { \delta^1 Y_{\bk^\prime} }
{\delta^0X_{\bk+\bl}}  {\delta^1 Y_\bk^*}  {\delta^0X^*_{\bk^\prime +\bl^\prime}} { \delta^1 Y_{\bk^\prime} }
\bigr\rangle\\
&=\delta_{\bl - \bl^\prime} f^\xy_{\bl,\bk}f^\xy_{\bl^\prime,\bk^\prime} C_\bl^{\phi\phi}.
\end{align*}
Then, by substituting  the four-point product term in  (\ref{totalSum}) with the  first connected terms, one gets
\begin{align}
\text{\small \it first connected terms in}\, \bigl\langle \hat\phi^\text{\tiny $X\!Y$}_\bl  \hat\phi^{\text{\tiny $X\!Y$}^*}_{\bl^\prime}\bigr\rangle  =\delta_{\bl - \bl^\prime} C_\bl^{\phi\phi}.  \label{thisGivesC}
\end{align}
The remaining connected terms in  the four-point product of (\ref{totalSum}), called `second connected terms', are then re-grouping corresponding to the order of $\phi$. After noticing that any term of order $\phi^{2j+1}$ has expected value zero one obtains the following expansion 
\begin{align*}
\text{\small \it second connected terms in}\, &\bigl\langle \hat\phi^\text{\tiny $X\!Y$}_\bl  \hat\phi^{\text{\tiny $X\!Y$}^*}_{\bl^\prime}\bigr\rangle \\
& = \delta_{\bl - \bl^\prime}\bigl[ N^{(1)}_\bl + N^{(2)}_\bl + \cdots\bigr]
\end{align*}
where $N^{(j)}_\bl$ is of order  $\phi^{2j}$. Putting all disconnected and connected terms together 
gives the desired expansion (\ref{expan}).

For reasons which will become clear in the next section, we define $N^{(1)}_\bl$ in a slightly different, but equivalent, way that extends more naturally to the modified quadratic estimate. In particular, instead of defining $N^{(1)}_\bl$ as the 
 as  the total contribution of the second connected terms  in (\ref{totalSum}) of order  $\phi^2$, we define  $N^{(1)}_\bl$ simply as the total contribution of {\em all} connected terms of order $\phi^2$ minus $C_\bl^{\phi\phi}$.

\begin{figure*}[ht]
\includegraphics[height = 2.8in]{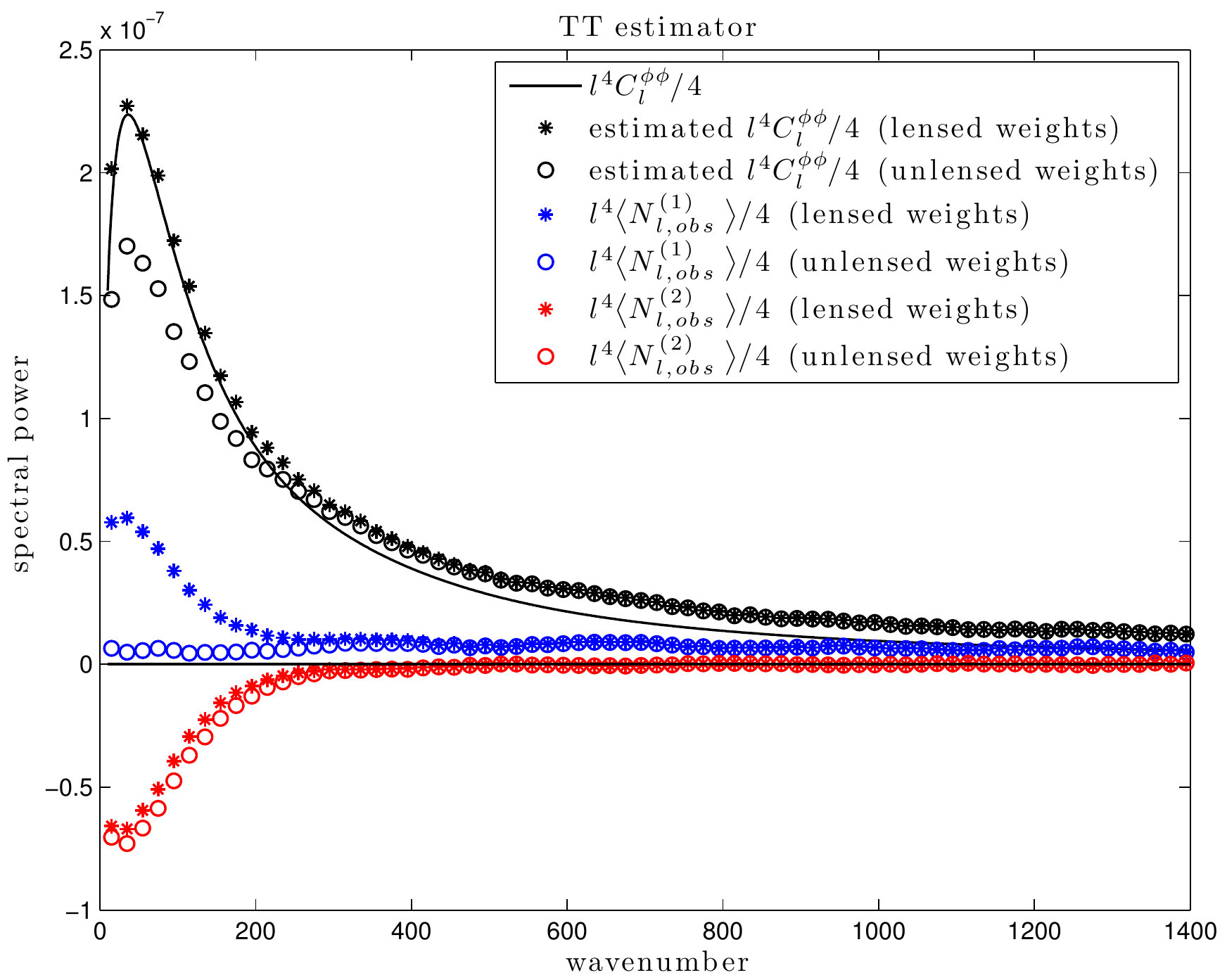} 
\includegraphics[height = 2.8in]{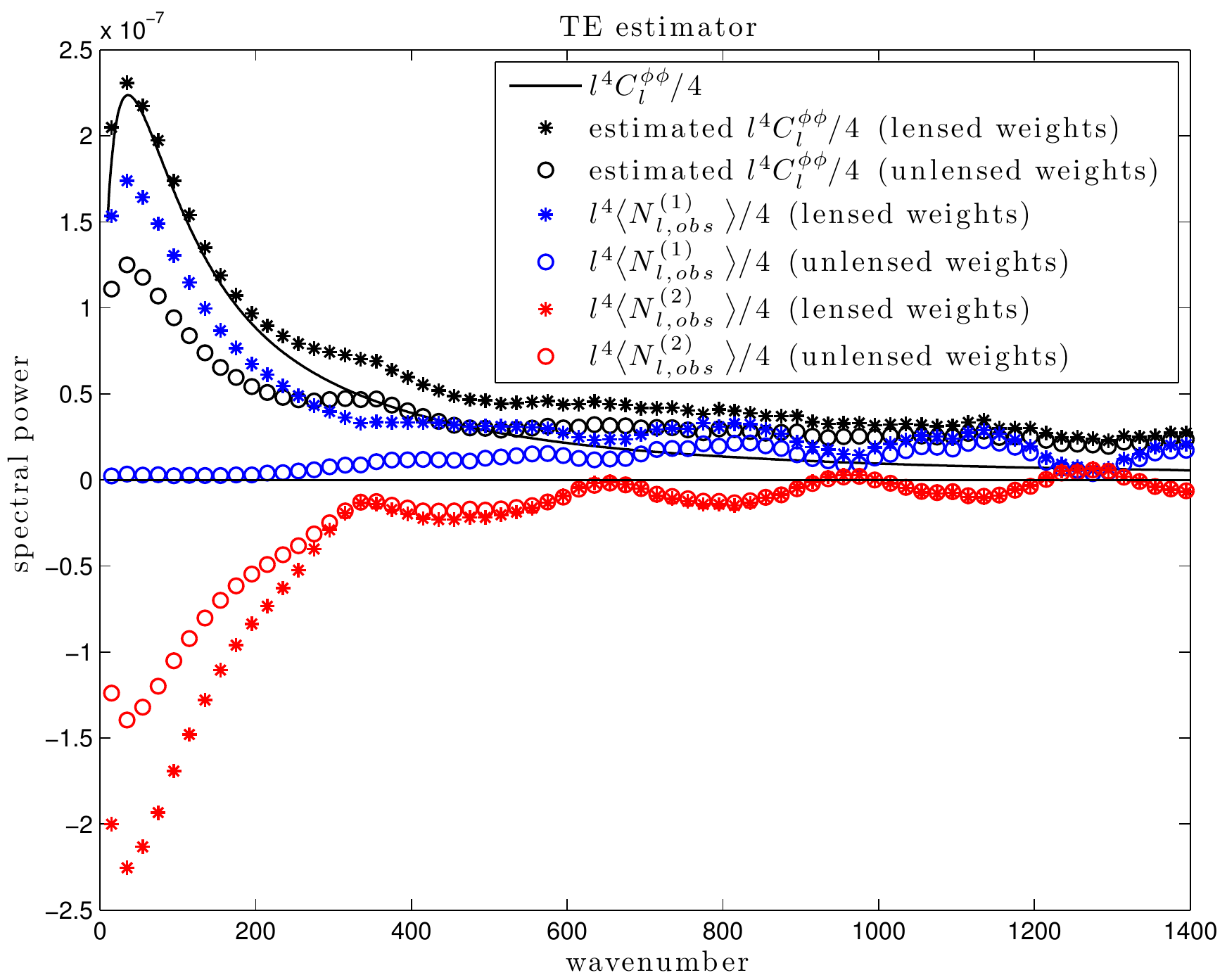} 
\caption{The first order bias term  $l^4\bigl\langle N^{(1)}_{\bl, \text{\tiny obs}}\bigr\rangle/4$ is denoted `{\large\textcolor{blue}{$\ast$}}' for $\tilde\phi^\xy_\bl$ and  `{\large \textcolor{blue}{$\circ$}}' for $\hat\phi^\xy_\bl$. The second order bias term  $l^4\bigl\langle N^{(2)}_{\bl, \text{\tiny obs}}\bigr\rangle/4$ is  denoted  `{\large\textcolor{red}{$\ast$}}' for $\tilde\phi^\xy_\bl$ and  `{\large \textcolor{red}{$\circ$}}' for $\hat\phi^\xy_\bl$. Finally, the expected value of the spectral density estimates  $l^4\langle\delta_0^{-1}  |\hat \phi_\bl^\xy  |^2 -N^{(0)}_{\bl,\text{\tiny obs}}\rangle/4$ and $l^4\langle \delta_0^{-1} |\tilde  \phi_\bl^\xy  |^2 -N^{(0)}_{\bl,\text{\tiny obs}}\rangle/4$  are denoted by `{\large\textcolor{black}{$\circ$}}' and `{\large\textcolor{black}{$\ast$}}' respectively. 
See Section \ref{sim} for details.
 \label{fig2} }
\end{figure*}

\subsection{The lensed weights}
For the modified quadratic estimator, $\tilde \phi^\xy_\bl$, one can derive the expansion (\ref{expan}) with a few minor adjustments.
In particular, the disconnected terms can be written
\begin{align*}
 \text{\small \it disconnected terms in $\bigl\langle\tilde\phi^\text{\tiny $X\!Y$}_\bl  \tilde\phi^{\text{\tiny $X\!Y$}^*}_{\bl^\prime} \bigr\rangle$}= \delta_{\bl - \bl^\prime}N^{(0)}_{\bl}
 \end{align*}
 where 
 \begin{align*}
N^{(0)}_{\bl} \equiv [\tilde A^\xy_\bl]^2
 \int \frac{d^2\bk}{(2\pi)^2} \Bigl(  \tilde g^\xy_{\bs l,\bk\,}  \tilde g^\xy_{\bl, -\bk-\bl\,}   \widetilde C^\text{\tiny $X\!Y$}_{\bk+\bl,\text{\tiny exp}}   \widetilde C^\text{\tiny $X\!Y$}_{\bk,\text{\tiny exp}}&\\
  + |\tilde g^\xy_{\bs l,\bs k}|^2 \widetilde C^\text{\tiny $X\!X$}_{\bk+\bl,\text{\tiny exp}}  \widetilde C^\text{\tiny $Y\!Y$}_{\bk,\text{\tiny exp}}\Bigr)&.
\end{align*}
The first connected terms are slightly different due to the  intentional bias in the modified quadratic estimate yielding
\begin{align*}
\text{\small \it first connected terms in}\, \bigl\langle \tilde\phi^\text{\tiny $X\!Y$}_\bl  \tilde\phi^{\text{\tiny $X\!Y$}^*}_{\bl^\prime}\bigr\rangle  =\delta_{\bl - \bl^\prime} [B^\xy_\bl]^2 C_\bl^{\phi\phi}.
\end{align*}
After these adjustments are made the remaining terms  $N^{(j)}_\bl$  are defined  exactly the same way as for  $\hat\phi^\xy_\bl$: when $j\geq 2$, $N_\bl^{(j)}$ is defined as the total contribution of the connected terms in (\ref{totalSum}) which are order $\phi^j$; and  $N^{(1)}_\bl$ as  the total contribution of the connected terms  in (\ref{totalSum}) which are order $\phi^2$ minus $C_\bl^{\phi\phi}$. This is consistent with the previous definition of $N^{(j)}_\bl$ for $\hat\phi_\bl^\xy$ and preserves the natural expansion
\begin{align}
\bigl\langle\tilde\phi^\text{\tiny $X\!Y$}_\bl  \tilde\phi^{\text{\tiny $X\!Y$}^*}_{\bl^\prime}  \bigr\rangle 
&= \delta_{\bl - \bl^\prime}[ C_\bl^{\phi\phi} + N^{(0)}_{\bl} + N^{(1)}_{\bl} + \cdots ].\label{expan2}
\end{align}

\section{Estimation of $C_\bl^{\phi\phi}$}
\label{ecpp}

In this section  we show how the expansions (\ref{expan}) and (\ref{expan2}) are used to derive and analyze estimates of  $C_\bl^{\phi\phi}$.
All the estimates presented in this section can also be additionally radially averaged, with inverse variance weights,  to reduce estimation variability.
We derive the following results using the estimate   $\hat \phi_\bl^\xy$ and simply remark that the results can be similarly derived for the modified quadratic estimate $\tilde \phi_\bl^\xy$.

For current experimental conditions the first term $N^{(0)}_\bl$  dominates the sum  $\sum_{j=0}^\infty N^{(j)}_{\bl} $. Therefore a natural bias corrected estimate  of $C_\bl^{\phi\phi}$  is given by $\delta_0^{-1} |\hat \phi_\bl^\xy  |^2 -N^{(0)}_{\bl}$.  
Notice, however,  that to compute $N^{(0)}_{\bl}$ one needs a model for the  lensed spectra with experimental noise: $\widetilde C^{\xandy XX}_{\bl,\text{\tiny exp}}$, $\widetilde C^{\xandy YY}_{\bl,\text{\tiny exp}}$ and $\widetilde C^\xy_{\bl,\text{\tiny exp}}$. 
This normally requires knowledge of the very quantity we are estimating: 
$C_\bl^{\phi\phi}$. One way to circumvent this difficulty is to replace  the experimental lensed spectrums with estimates from the observations $\tilde X$ and $\tilde Y$. This results 
 in  the {\em observed} $N^{(0)}_\bl$ bias and is given by 
  \begin{align*}
N^{(0)}_{\bl,\text{\tiny obs}} \equiv \delta_0^{-2}[\aw_\bl]^2 \int \frac{d^2\bk}{(2\pi)^2} \Bigl(   \gw_{\bs l,\bk\,}   &\gw_{\bl, -\bk-\bl\,}   \tilde X_{\bk+\bl}\tilde Y^*_{\bk+\bl}\tilde X_{\bk}\tilde Y^*_{\bk} 
\\&+ |\gw_{\bs l,\bs k}|^2 |\tilde X_{\bk+\bl}|^2 |\tilde Y_{\bk}|^2\Bigr).
\end{align*}
Using $N^{(0)}_{\bl,\text{\tiny obs}}$ in place of $N^{(0)}_{\bl}$ yields  $\delta_0^{-1} |\hat \phi_\bl^\xy  |^2 -N^{(0)}_{\bl,\text{\tiny obs}}$ as an estimate of $C_\bl^{\phi\phi}$ which does require knowledge of $C_\bl^{\phi\phi}$ to compute it.

{\em Remark:}
Up to this point we have been assuming an infinite sky when computing the Fourier transform $\int \frac{d^2 \bs x}{2\pi} e^{-i\bl\cdot \bs x}$. However,  the above Fourier transforms will typically be done on a pixelized periodic finite sky. In this case, $\delta_0$ is approximated as $1/ \Delta \bl$ where $\Delta \bl$ is the area element of the grid in Fourier space induced by finite area sky. For the remainder of the paper we do not distinguish the finite versus infinite case and simply equate $\delta_0$ with $1/ \Delta \bl$ leaving it understood that equality holds in the limit as $\Delta \bl\rightarrow 0$.

\begin{figure}[h]
\includegraphics[height = 2.8in]{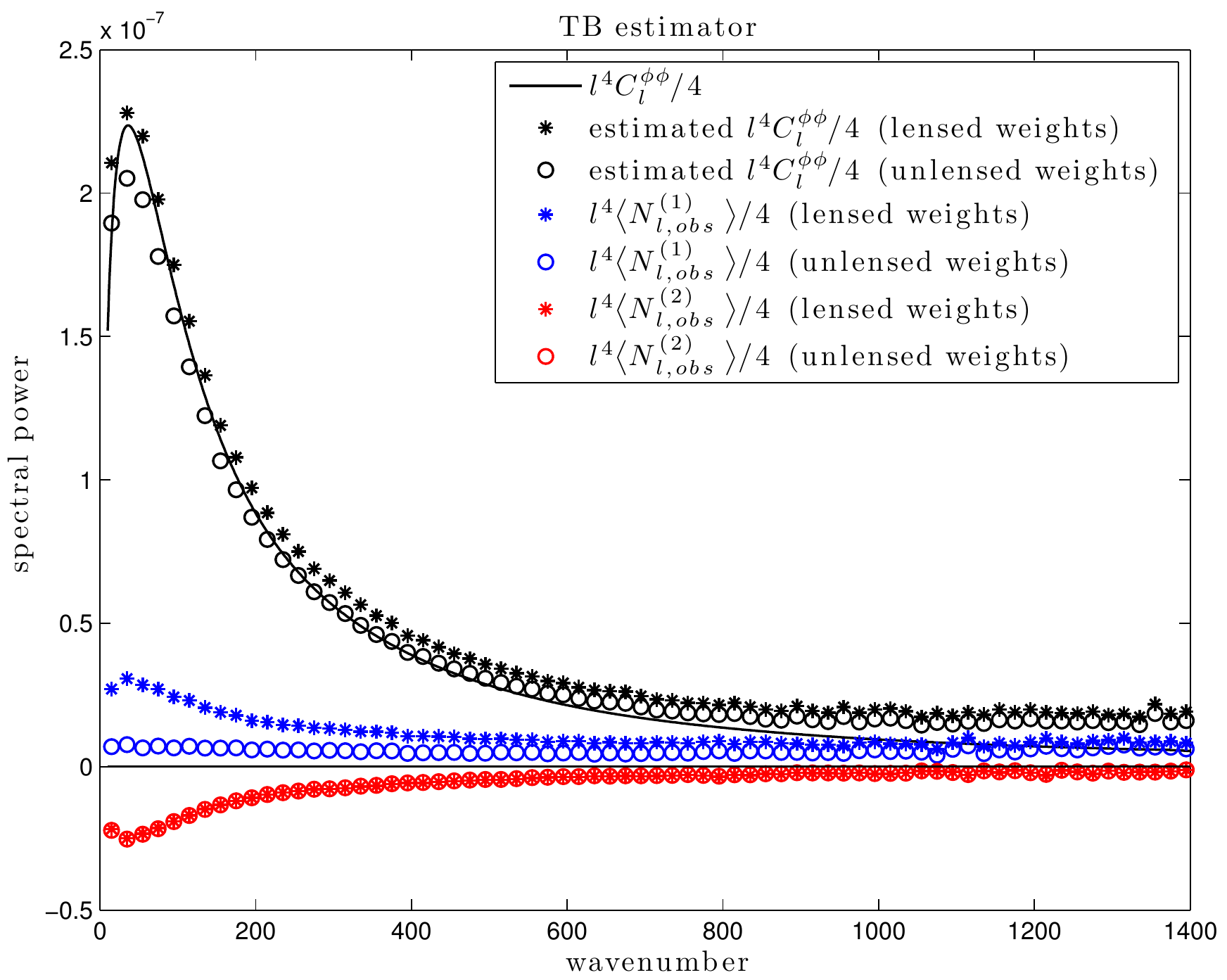} 
\caption{The first order bias term  $l^4\bigl\langle N^{(1)}_{\bl, \text{\tiny obs}}\bigr\rangle/4$ is denoted `{\large\textcolor{blue}{$\ast$}}' for $\tilde\phi^\xy_\bl$ and  `{\large \textcolor{blue}{$\circ$}}' for $\hat\phi^\xy_\bl$. The second order bias term  $l^4\bigl\langle N^{(2)}_{\bl, \text{\tiny obs}}\bigr\rangle/4$ is  denoted  `{\large\textcolor{red}{$\ast$}}' for $\tilde\phi^\xy_\bl$ and  `{\large \textcolor{red}{$\circ$}}' for $\hat\phi^\xy_\bl$. Finally, the expected value of the spectral density estimates  $l^4\langle\delta_0^{-1}  |\hat \phi_\bl^\xy  |^2 -N^{(0)}_{\bl,\text{\tiny obs}}\rangle/4$ and $l^4\langle \delta_0^{-1} |\tilde  \phi_\bl^\xy  |^2 -N^{(0)}_{\bl,\text{\tiny obs}}\rangle/4$  are denoted by `{\large\textcolor{black}{$\circ$}}' and `{\large\textcolor{black}{$\ast$}}' respectively. 
See Section \ref{sim} for details.
 \label{fig3} }
\end{figure}

\subsection{The bias of $\delta_0^{-1} |\hat \phi_\bl^\xy  |^2 -N^{(0)}_{\bl,\text{\tiny obs}}$}
\label{biasP}
We will derive three main expansions, which decompose the bias:  $\bigl\langle\delta_0^{-1} |\hat \phi_\bl^\xy  |^2 -N^{(0)}_{\bl,\text{\tiny obs}}\bigr\rangle - C_\bl^{\phi\phi}$ (and similar definitions for the modified quadratic estimate).  These expansions will be denoted as follows
\begin{align}
\hat \phi^\xy_{\bs l} &=  \mathscr L^{(0)}_\bl + \mathscr L^{(1)}_\bl+\mathscr L^{(2)}_\bl+\cdots \label{eex1} \\
|\hat \phi^\xy_{\bs l}|^2 &=  \mathscr O^{(0)}_\bl\,\, + \mathscr O^{(1)}_\bl\,\,+\mathscr O^{(2)}_\bl\,\,+\cdots \label{eex2}\\
 N^{(0)}_{\bl,\text{\tiny obs}}  &= \mathscr N^{(0)}_\bl + \mathscr N^{(1)}_\bl+\mathscr N^{(2)}_\bl+\cdots \label{eex3}
\end{align}
where the terms $\mathscr L^{(j)}_\bl$,  $\mathscr O^{(j)}_\bl$ and $\mathscr N^{(j)}_\bl$ are each of order $j$ in $\phi$.
Expansions (\ref{eex1}), (\ref{eex2}) and  (\ref{eex3}) are all obtained by expanding $\tilde X_\bk$ and $\tilde Y_\bk$ using (\ref{ex1}) then regrouping the terms by the order of $\phi$. This expansion and subsequent re-grouping yields the following analytic expressions for each term:
\begin{align*}
\mathscr L^{(j)}_\bl&\equiv \sum_{p=0}^j  A^{\xy}_{\bs l} \int \frac{d^2\bs k}{2\pi} g^{\xy}_{\bs l,\bs k\,} \left[    \delta^p X_{\bs k+\bs l}\delta^{j-p} Y_{\bs k}^*\right];  \\
 \mathscr O^{(j)}_\bl&\equiv \sum_{p=0}^j  \mathscr L^{(p)}_\bl  \mathscr L^{(j-p)^*}_\bl ; \\
 \mathscr N^{(j)}_\bl &\equiv \sum_{p=0}^j \! \delta_0^\text{\tiny $-2$} [\aw_\bl]^2\! \int \!\! \frac{d^2\bk}{(2\pi)^2}  \Bigl(   \gw_{\bs l,\bk\,}   \gw_{\bl, -\bk-\bl\,}  \mathscr P^{\xandy XY\!,p}_{\bk+\bl} \mathscr P^{\xandy XY\!,j-p}_{\bk} \\
 &\qquad\qquad\qquad\qquad\qquad+ |\gw_{\bs l,\bs k}|^2   \mathscr P^{\xandy XX,p}_{\bk+\bl} \mathscr P^{\xandy YY,j-p}_{\bk}  \Bigr),
\end{align*}
 where  $\mathscr P_\bk^{\xandy XY,j }\equiv \sum_{p=0}^j \delta^pX_\bk \delta^{j-p}Y^*_\bk $ so that $\tilde X_\bk \tilde Y_\bk^* = \sum_{p=0}^\infty \mathscr P^{\xandy XY,
 p}_\bk$.
Using the fact that $\langle\mathscr O^{(2j+1)}_\bl\rangle=\langle\mathscr N^{(2j+1)}_\bl\rangle=0$ one then gets that
\begin{align}
\label{ee22}
\bigl\langle |\hat \phi_\bl^\xy  |^2-  \delta_0 N^{(0)}_{\bl,\text{\tiny obs} }   \bigr\rangle & = \sum_{j=0}^\infty \bigl\langle\mathscr O^{(2j)}_\bl - \delta_0 \mathscr N^{(2j)}_\bl\bigr\rangle.
\end{align}
Now by defining $N_{\bl, \text{\tiny obs}}^{(j)}$ for $j\geq 1$ as
\begin{align}
\label{N1b}
N^{(j)}_{\bl,\text{\tiny obs}} \equiv 
\begin{cases}
 \delta_0^{-1} \mathscr O^{(2j)}_\bl - \mathscr N^{(2j)}_\bl  - C_\bl^{\phi\phi}& \text{when $j=1$}\\
 \delta_0^{-1} \mathscr O^{(2j)}_\bl - \mathscr N^{(2j)}_\bl & \text{when $j> 1$}\\
\end{cases}
\end{align}
we have that
\begin{align}
\bigl\langle |\hat \phi_\bl^\xy  |^2\bigr\rangle&= \delta_0\bigl[C_\bl^{\phi\phi} + \bigl\langle N^{(0)}_{\bl,\text{\tiny obs}}\bigr\rangle + \bigl\langle N^{(1)}_{\bl,\text{\tiny obs}} \bigr\rangle +\cdots   \bigr] \nonumber\\
&\quad+\bigl\langle\mathscr O^{(0)}_\bl - \delta_0  \mathscr N^{(0)}_{\bl}   \bigr\rangle
\end{align}
where $N^{(j)}_{\bl,\text{\tiny obs}}$ is of order $\phi^{2j}$. 
 Therefore the order $\phi^{2j}$ term in the bias of  $\delta_0^{-1} |\hat \phi_\bl^\xy  |^2 -N^{(0)}_{\bl,\text{\tiny obs}}$ is given by $ \bigl\langle N^{(j)}_{\bl,\text{\tiny obs}} \bigr\rangle$ for $j\geq 1$.

{\em Remark:} 
Notice that the zero order term can be computed easily 
\begin{align*}
\bigl\langle\mathscr O^{(0)}_\bl - \delta_0  \mathscr N^{(0)}_{\bl}   \bigr\rangle  &=-\frac{[\aw_\bl \gw_{\bs l,-\bl/2}]^2}{(2\pi)^2}  \\
&\quad\times \Bigl[   C^{\xandy XX}_{\bl/2,\text{\tiny exp}}   C^{\xandy YY}_{\bl/2,\text{\tiny exp}} +   (C^{\xandy XY}_{\bl/2,\text{\tiny exp}})^2 \Bigr]
\end{align*}
 when $\bl \neq 0$. Moreover, since this term does not depend on $\phi$ one can simply subtract it out of the estimator when using $\delta_0^{-1} |\hat \phi_\bl^\xy  |^2 -N^{(0)}_{\bl,\text{\tiny obs}}$  for estimation.

\subsection{The difference between $N^{(1)}_\bl$ and $\bigl\langle N^{(1)}_{\bl, \text{\tiny obs}}\bigr\rangle$}
By matching  the right hand side of equation (\ref{eex2}) with the right hand side of (\ref{expan}) one gets that 
   $\delta_0 N^{(1)}_\bl$ equals $\langle\mathscr O^{(2)}_\bl\rangle$ minus  $\delta_0C_\bl^{\phi\phi}$ and any disconnected terms.  
In contrast,  $\delta_0 \bigl\langle N^{(1)}_{\bl,\text{\tiny obs}}\bigr\rangle$ equals $\langle\mathscr O^{(2)}_\bl\rangle$ minus  $\delta_0C_\bl^{\phi\phi}$ and $\langle \delta_0 \mathscr N_\bl^{(2)}\rangle$.  
Therefore the difference between $\delta_0N^{(1)}_\bl$ and $\delta_0 \bigl\langle N^{(1)}_{\bl,\text{\tiny obs}}\bigr\rangle$ is the difference between $\langle \delta_0 \mathscr N_\bl^{(2)}\rangle$ and the disconnected terms  in  $\langle\mathscr O^{(2)}_\bl\rangle$.
By  expanding  $\delta_0\widetilde C^\xy_{\bl,\text{\tiny exp}} =\sum_{p=0}^\infty  \langle\mathscr P^{\xandy XY,p}_\bl \rangle$ in the definition of $\delta_0 N^{(0)}_\bl$ (which equals all the disconnected terms),  and retaining only the order $\phi^2$ terms (what remains equals all the disconnected terms in $\langle\mathscr O^{(2)}_\bl\rangle$)
we get 
\begin{align}
&\sum_{p=0}^2  \delta_0^{-1} [\aw_\bl]^2 \int \frac{d^2\bk}{(2\pi)^2}\Bigl(   \gw_{\bs l,\bk\,}   \gw_{\bl, -\bk-\bl\,}  \bigl\langle\mathscr P^{\xandy XY,p}_{\bk+\bl} \bigr\rangle  \bigl\langle\mathscr P^{\xandy XY,2-p}_{\bk} \bigr\rangle\nonumber \\
&\qquad\qquad\qquad\qquad
+ |\gw_{\bs l,\bs k}|^2   \bigl\langle \mathscr P^{\xandy XX,p}_{\bk+\bl}\bigr\rangle \bigl\langle \mathscr P^{\xandy YY,2-p}_{\bk} \bigr\rangle \Bigr). \label{ddee}
\end{align} 
Therefore the difference between $\delta_0N^{(1)}_\bl$ and $\delta_0\bigl\langle N^{(1)}_{\bl,\text{\tiny obs}}\bigr\rangle$ is given by the difference between (\ref{ddee}) and $\langle \delta_0 \mathscr N_\bl^{(2)}\rangle$ which equals 
\begin{align}
 &\sum_{p=0}^2  \delta_0^{-1} [\aw_\bl]^2 \int \frac{d^2\bk}{(2\pi)^2} \Bigl(   \gw_{\bs l,\bk\,}   \gw_{\bl, -\bk-\bl\,}  \left\langle\mathscr P^{\xandy XY,p}_{\bk+\bl} \mathscr P^{\xandy XY,2-p}_{\bk} \right\rangle\nonumber\\
&\qquad\qquad\qquad\qquad+ |\gw_{\bs l,\bs k}|^2   \left\langle \mathscr P^{\xandy XX,p}_{\bk+\bl} \mathscr P^{\xandy YY,2-p}_{\bk} \right\rangle \Bigr).
\label{enn}
\end{align}
For the experimental conditions analyzed in this paper this difference is small.

\section{Fast Monte Carlo algorithms}
\label{fmca}

In this section we give two simulation based methods for quickly estimating $ N_{\bl}^{(j)}$ and $\langle N_{\bl, \text{\tiny obs}}^{(j)}\rangle$ for $j\geq 0$. The first method  simply observes that each term  $\mathscr L^{(j)}_\bl$,  $\mathscr O^{(j)}_\bl$ and $\mathscr N^{(j)}_{\bl}$ have fast Fourier transform characterizations which can be used for simulating $N_{\bl, \text{\tiny obs}}^{(j)}$ and---by averaging multiple realizations---for estimating $\bigl\langle N_{\bl, \text{\tiny obs}}^{(j)}\bigr\rangle$. This algorithm also extends to $N^{(j)}_{\bl}$ by replacing the expansion of $N_{\bl, \text{\tiny obs}}^{(0)}$ in (\ref{eex3}) with the corresponding expansion for $N_{\bl}^{(0)}$. The second method is exclusive to $N^{(1)}_{\bl}$ and  uses correlated and uncorrelated  CMB fields to mimic the appropriate Wick contractions for  the connected terms in (\ref{totalSum}). 

\subsection{FFT algorithms for $N^{(j)}_{\bl,\text{\tiny obs}}$ and $N^{(j)}_{\bl}$ }
\label{fft1}

The fast simulation techniques presented in this section depend on the fact that the transforms which characterize $\hat\phi_\bl^\xy$ and $\tilde\phi_\bl^\xy$ can be derived as Fourier transforms of point-wise products of gradients in pixel space. This was first utilized  in \cite{Hu2001b}  and \cite{HuOka2002} for the flat sky quadratic estimators.   In the appendix, we present these transforms along with some additional FFT transforms which allow fast simulation of the fields $\mathscr L^{(j)}_\bl$,  $\mathscr O^{(j)}_\bl$ and $\mathscr N^{(j)}_{\bl}$. This is the basis of the algorithm which  then uses equation (\ref{N1b}) to simulate  $N^{(j)}_{\bl,\text{\tiny obs}}$.

For $\mathscr L^{(j)}_\bl$ and  $\mathscr O^{(j)}_\bl$ first notice that each term $\delta^jX_{\bs l}$ can be easily simulated  in the pixel domain since $\delta^j X(\bs x)$ is the point-wise product of derivatives of $X(\bs x)$ and $\phi(\bs x)$. 
 Moreover the quadratic estimate applied to these terms,  resulting in $A^{\xy}_{\bs l} \int \frac{d^2\bs k}{2\pi} g^{\xy}_{\bs l,\bs k\,}  \left[ \delta^p X_{\bs k+\bs l}\delta^{j-p} Y_{\bs k}^*\right]$, is also easily computed  by direct application of the formulas presented in the Appendix. Now summing over $p\in\{0,\ldots, j\}$ gives  fast simulation of $\mathscr L^{(j)}_\bl$ which, in turn, gives $\mathscr O^{(j)}_\bl$ by  taking quadratic combinations of  $\mathscr L^{(0)}_\bl\!, \ldots, \mathscr L^{(j)}_\bl$.
Finally,  to simulate  $\mathscr N^{(j)}_\bl$ start by  noticing that each term $\mathscr P_\bk^{\xandy XY,j }\equiv \sum_{p=0}^j \delta^pX_\bk \delta^{j-p}Y^*_\bk $ simulates easily from  $\delta^pX_{\bs k}$  and $\delta^{j-p}Y_{\bs k}^*$. 
 Then to recover $\mathscr N^{(j)}_\bl$ one uses the FFT transformations presented in the appendix for quick simulations of $ [\aw_\bl]^2\! \int \!\! \frac{d^2\bk}{(2\pi)^2}    \gw_{\bs l,\bk\,}   \gw_{\bl, -\bk-\bl\,}  \mathscr P^{\xandy XY\!,p}_{\bk+\bl} \mathscr P^{\xandy XY\!,j-p}_{\bk}$ and $  [\aw_\bl]^2\! \int \!\! \frac{d^2\bk}{(2\pi)^2}  |\gw_{\bs l,\bs k}|^2   \mathscr P^{\xandy XX,p}_{\bk+\bl} \mathscr P^{\xandy YY,2-p}_{\bk} $.
 
 Monte Carlo averaging $n$ independent simulations of $N_{\bl, \text{\tiny obs}}^{(j)}$ will yield estimates of $\bigl\langle N_{\bl, \text{\tiny obs}}^{(j)}\bigr\rangle$ with error bars that can be approximated by $s_\bl/\sqrt{n}$ where $s_\bl$ denotes the standard deviation of the samples at each frequency $\bl$. In addition, when the noise and beam structure are isotropic one can reduce the error bars by radially averaging  each simulation of $N_{\bl, \text{\tiny obs}}^{(j)}$. 
 These error bars can then be used for re-fitting algorithms where $C_\bl^{\phi\phi}$ is iteratively fit to $\delta_0^{-1} |\hat \phi_\bl^\xy  |^2 - N^{(0)}_{\bl,\text{\tiny obs}} - \mu_\bl$ where $\mu_\bl$ is an approximate bias correction based on the estimates of the higher order bias correction terms $\bigl\langle N_{\bl, \text{\tiny obs}}^{(j)}\bigr\rangle$. Notice that for fast re-fitting algorithms, it maybe be advantageous initially to tolerate relatively large the Monte Carlo error bars, then iteratively increase the number of Monte Carlo samples at each re-fit.

We finally mention that the above simulation methods which are used to estimate  $\bigl\langle N_{\bl, \text{\tiny obs}}^{(j)}\bigr\rangle$ can also be used to estimate  $N^{(j)}_{\bl}$. The only change is to replace  each term $\mathscr P^{\xandy XY,p}_{\bk+\bl}$,  $\mathscr P^{\xandy XY,2j-p}_{\bk}$,  $\mathscr P^{\xandy XX,p}_{\bk+\bl}$ and $\mathscr P^{\xandy XX,2j-p}_{\bk}$ in the definition of $\mathscr N^{(2j)}_\bl $ with their respective expected values:  $\bigl\langle\mathscr P^{\xandy XY,p}_{\bk+\bl}\bigr\rangle$,  $\bigl\langle\mathscr P^{\xandy XY,2j-p}_{\bk}\bigr\rangle$,  $\bigl\langle\mathscr P^{\xandy XX,p}_{\bk+\bl}\bigr\rangle$ and $\bigl\langle\mathscr P^{\xandy XX,2j-p}_{\bk}\bigr\rangle$.

%
%
%

\begin{figure*}[ht]
\includegraphics[height = 2.8in]{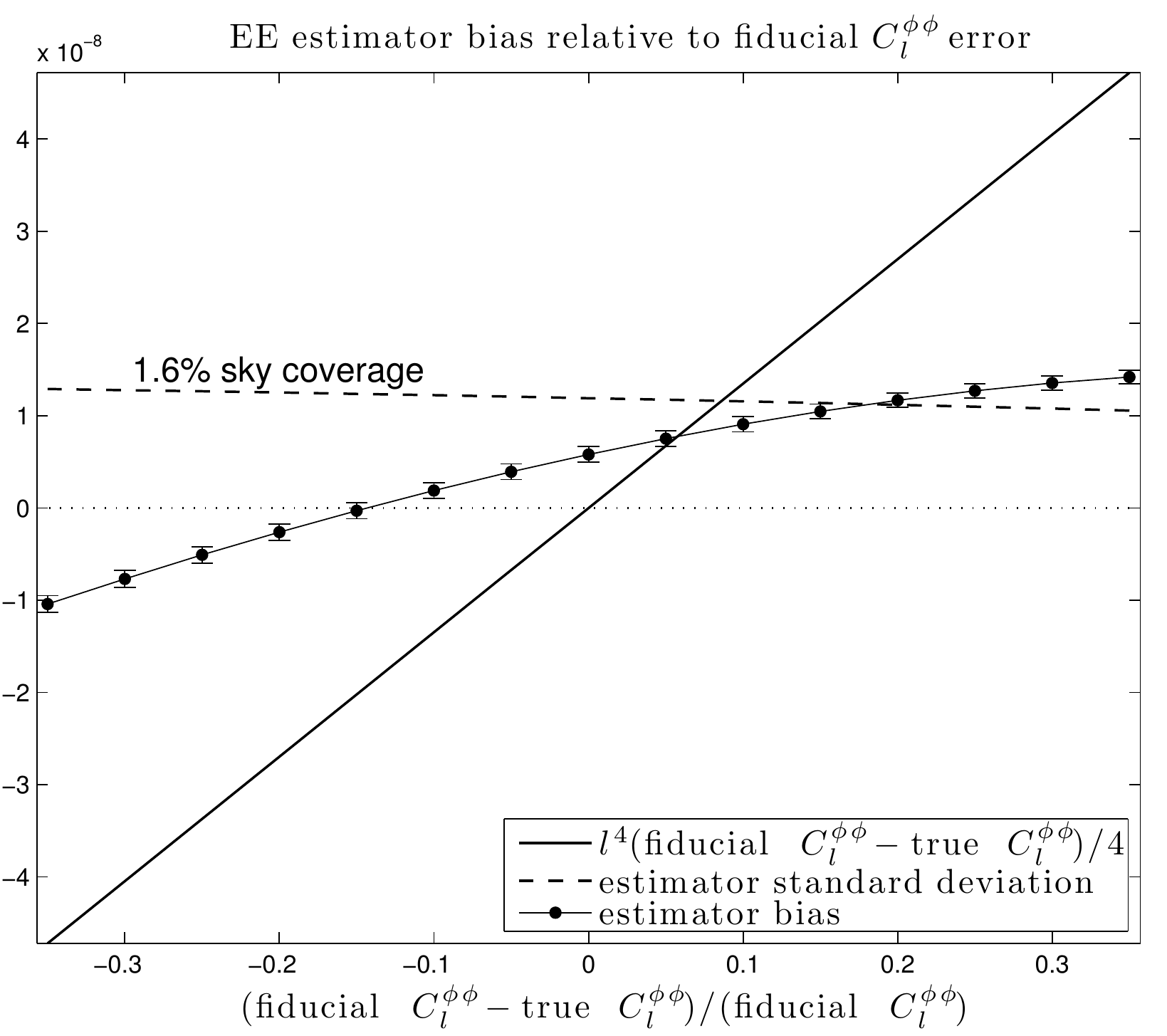}
\includegraphics[height = 2.8in]{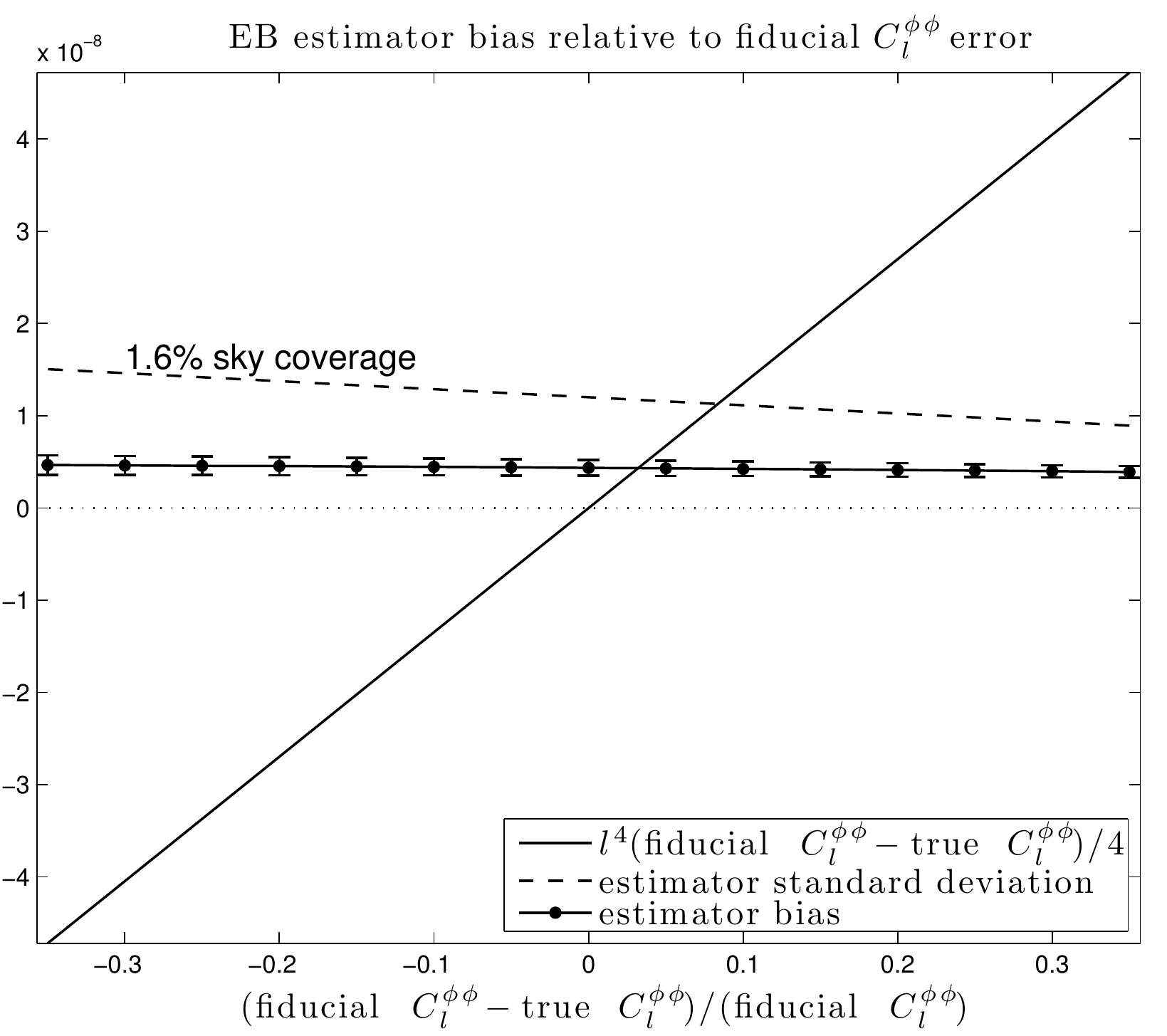}
\caption{
Spectral estimation bias averaged over the multiple bin $l\in[10,200]$ as a function of fiducial uncertainty. See Section \ref{sim} for details.
\label{fig4}
 }
\end{figure*}

\subsection{Coupling lensing fields for $N^{(1)}_{\bl}$}
\label{fft2}

A second algorithm, for Monte Carlo estimation of $N^{(1)}_{\bl}$, is to use correlated and uncorrelated CMB fields to mimic the Wick contraction structure appearing in the definition of $N^{(1)}_{\bl}$.
 The algorithm is easiest to illustrate  with the EB  quadratic estimator since the definition of $N^{(1)}_{\bl}$ only involves a small number of  connected terms in (\ref{totalSum}) (when assuming a zero B mode in the unlensed CMB polarization field).  The algorithm is derived by first noticing that the sum of the connected terms of order $\phi^2$ in $\langle \hat\phi^\text{\tiny $X\!Y$}_\bl  \hat\phi^{\text{\tiny $X\!Y$}^*}_{\bl^\prime} \rangle$ is given by
\begin{align*}
\nonumber
\delta_{\bl - \bl^\prime} \bigl[N_\bl^{(1)} +C_\bl^{\phi\phi}\bigr]
 = &\aw_\bl \aw_{\bl^\prime} \int \frac{d^2\bk}{2\pi} \int \frac{d^2\bk^\prime}{2\pi} \gw_{\bs l, \bk\,} g^{\xy }_{\bs l^\prime,\bk^\prime\,}\\
 &\times
\Bigl[
\bigl\langle 
\contraction{}{\delta^0\!E_{\bk+\bl}}{}{\delta^1\!B^*_{\bk}}
\contraction{\delta^0\!E_{\bk+\bl}\delta^1\!B^*_{\bk}}{\delta^0\!E^*_{\bk^\prime+\bl^\prime}}{}{\delta^1\!B_{\bk^\prime}}
\bcontraction{\delta^0\!E_{\bk+\bl}}{\delta^1\!B^*_{\bk}}{\delta^0\!E^*_{\bk^\prime+\bl^\prime}}{\delta^1\!B_{\bk^\prime}}
{\delta^0\!E_{\bk+\bl}}{\delta^1\!B^*_{\bk}}{\delta^0\!E^*_{\bk^\prime+\bl^\prime}}{\delta^1\!B_{\bk^\prime}}
\bigr\rangle\nonumber\\
&\qquad+
\bigl\langle 
\contraction[2ex]{}{\delta^0\!E_{\bk+\bl}}{\delta^1\!B^*_{\bk}\delta^0\!E^*_{\bk^\prime+\bl^\prime}}{\delta^1\!B_{\bk^\prime}}
\contraction{\delta^0\!E_{\bk+\bl}}{\delta^1\!B^*_{\bk}}{}{\delta^0\!E^*_{\bk^\prime+\bl^\prime}}
\bcontraction{\delta^0\!E_{\bk+\bl}}{\delta^1\!B^*_{\bk}}{\delta^0\!E^*_{\bk^\prime+\bl^\prime}}{\delta^1\!B_{\bk^\prime}}
{\delta^0\!E_{\bk+\bl}}{\delta^1\!B^*_{\bk}}{\delta^0\!E^*_{\bk^\prime+\bl^\prime}}{\delta^1\!B_{\bk^\prime}}
\bigr\rangle \Bigr] 
\end{align*}
These two Wick contraction terms have a Monte Carlo characterization as follows. Let $(Q_\bl^{\phantom{\prime}},U_\bl^{\phantom{\prime}})$ and    $(Q_\bl^\prime,U_\bl^\prime)$ denote two independent realizations of the CMB polarization. Now let $\hat\phi^{\xandy EB}_{\bl,1}$ and $\hat\phi^{\xandy EB}_{\bl,2}$ denote the EB quadratic estimator applied to the pairs $(E_\bl^{\phantom{\prime}}, \delta^1\!B_\bl^{\phantom{\prime}})$ and $(E^\prime_\bl, \delta^1\!B_\bl^\prime)$, respectively, where $\delta^1\!B^{\phantom{\prime}}$ and $\delta^1\!B^\prime $  use the same lensing potential $\phi$
 so that 
\begin{align*}
\delta^1\!B_\bl &\equiv \sin(2\varphi_\bl) \delta^1 Q_\bl - \cos(2\varphi_\bl) \delta^1 U_\bl \\
\delta^1\!B^\prime_\bl &\equiv \sin(2\varphi_\bl) \delta^1 Q^\prime_\bl - \cos(2\varphi_\bl) \delta^1 U^\prime_\bl
\end{align*}
where
$ \delta^1\!Q^{\phantom{\prime}}(\bs x) \equiv \nabla^aQ(\bs x) \nabla_a \phi(\bs x)$ and $\delta^1\!Q^\prime(\bs x) \equiv \nabla^aQ^\prime(\bs x) \nabla_a \phi (\bs x) $ (similar definitions for $ \delta^1\!U^{\phantom{\prime}}(\bs x)$ and  $\delta^1\!U^{\prime}(\bs x)$). Notice the independence structure of the simulated fields--that $E$ and $B$ are independent of  $E^\prime$ and $B^\prime$---implies
\begin{align*}
   \bigl\langle  \hat\phi^{\xandy EB}_{\bl,1}\hat\phi^{\text{\tiny $E\!B$}^*}_{\bl^\prime,2} \bigr\rangle 
 = \aw_\bl \aw_{\bl^\prime}& \int \frac{d^2\bk}{2\pi} \int \frac{d^2\bk^\prime}{2\pi} \gw_{\bs l, \bk\,} g^{\xy }_{\bs l^\prime,\bk^\prime\,}\\
 &\times
\bigl\langle 
\contraction{}{\delta^0\!E_{\bk+\bl}}{}{\delta^1\!B^*_{\bk}}
\contraction{\delta^0\!E_{\bk+\bl}\delta^1\!B^*_{\bk}}{\delta^0\!E^*_{\bk^\prime+\bl^\prime}}{}{\delta^1\!B_{\bk^\prime}}
\bcontraction{\delta^0\!E_{\bk+\bl}}{\delta^1\!B^*_{\bk}}{\delta^0\!E^*_{\bk^\prime+\bl^\prime}}{\delta^1\!B_{\bk^\prime}}
{\delta^0\!E_{\bk+\bl}}{\delta^1\!B^*_{\bk}}{\delta^0\!E^*_{\bk^\prime+\bl^\prime}}{\delta^1\!B_{\bk^\prime}}
\bigr\rangle.
\end{align*}
 Similarly let $\hat\phi^{\xandy EB}_{\bl,3}$ and $\hat\phi^{\xandy EB}_{\bl,4}$ denote the EB quadratic estimator applied to the pairs $(E_\bl^\prime, \delta^1\!B_\bl^{\phantom{\prime}})$ and $(E_\bl^{\phantom{\prime}}, \delta^1\!B_\bl^\prime)$ respectively (again using the same lensing potential $\phi$). 
 Then
  \begin{align*}
    \bigl\langle  \hat\phi^{\xandy EB}_{\bl,3}\hat\phi^{\text{\tiny $E\!B$}^*}_{\bl^\prime,4} \bigr\rangle 
   = \aw_\bl \aw_{\bl^\prime} &\int \frac{d^2\bk}{2\pi} \int \frac{d^2\bk^\prime}{2\pi} \gw_{\bs l, \bk\,} g^{\xy }_{\bs l^\prime,\bk^\prime\,}\\
 &\times
\bigl\langle 
\contraction[2ex]{}{\delta^0\!E_{\bk+\bl}}{\delta^1\!B^*_{\bk}\delta^0\!E^*_{\bk^\prime+\bl^\prime}}{\delta^1\!B_{\bk^\prime}}
\contraction{\delta^0\!E_{\bk+\bl}}{\delta^1\!B^*_{\bk}}{}{\delta^0\!E^*_{\bk^\prime+\bl^\prime}}
\bcontraction{\delta^0\!E_{\bk+\bl}}{\delta^1\!B^*_{\bk}}{\delta^0\!E^*_{\bk^\prime+\bl^\prime}}{\delta^1\!B_{\bk^\prime}}
{\delta^0\!E_{\bk+\bl}}{\delta^1\!B^*_{\bk}}{\delta^0\!E^*_{\bk^\prime+\bl^\prime}}{\delta^1\!B_{\bk^\prime}}
\bigr\rangle.
\end{align*}
This leads to the following Monte Carlo averaging characterization of $N^{(1)}_\bl$
\begin{equation}
\label{uuu}
 N_\bl^{(1)}=   \delta_0^{-1} \bigl\langle  \hat\phi^{\xandy EB}_{\bl,1}\hat\phi^{\text{\tiny $E\!B$}^*}_{\bl,2} +  \hat\phi^{\xandy EB}_{\bl,3}\hat\phi^{\text{\tiny $E\!B$}^*}_{\bl,4} \bigr\rangle  - C_\bl^{\phi\phi}.  
 \end{equation}
 The above formula also holds for the lensed quadratic estimator  $\tilde\phi^{\xandy EB}_{\bl} $ as well.
It is easy to see that this method can be extended to all other polarization quadratic estimators, but with decidedly more connected terms.

Notice that there is an additional simplification when using the quadratic estimator, $\hat\phi^{\xandy EB}_\bl$, without lensed weights. In particular,  $ \delta_0^{-1} \bigl\langle   \hat\phi^{\xandy EB}_{\bl,3}\hat\phi^{\text{\tiny $E\!B$}^*}_{\bl,4} \bigr\rangle  = C_\bl^{\phi\phi}$ so that equation (\ref{uuu}) simplifies to  $N_\bl^{(1)}=   \delta_0^{-1} \bigl\langle  \hat\phi^{\xandy EB}_{\bl,1}\hat\phi^{\text{\tiny $E\!B$}^*}_{\bl,2}  \bigr\rangle$. However, this simplification does not hold for $\tilde \phi^{\xandy EB}_\bl$ since the bias factor $B_\bl^{\xandy EB}$, defined in Section \ref{defB},  implies $ \delta_0^{-1} \bigl\langle   \tilde\phi^{\xandy EB}_{\bl,3}\tilde\phi^{\text{\tiny $E\!B$}^*}_{\bl,4} \bigr\rangle  = [B_\bl^{\xandy EB}]^2 C_\bl^{\phi\phi}\neq C_\bl^{\phi\phi}$.

One advantage of using this coupling technique is that the simulation of the $\delta^0X$ and $\delta^0Y$ can be done without including the additive experimental noise. 
To see why, notice that any CMB Wick contraction connecting $\delta^0X$ and $\delta^0Y$ (which depends on the additive experimental noise) must have  bottom contraction symbols connect $\delta^1X$ and $\delta^1Y$. This must yield a disconnected term which does not contribute  to $N^{(1)}_\bl$.

\begin{figure*}[ht]
\includegraphics[height = 2.8in]{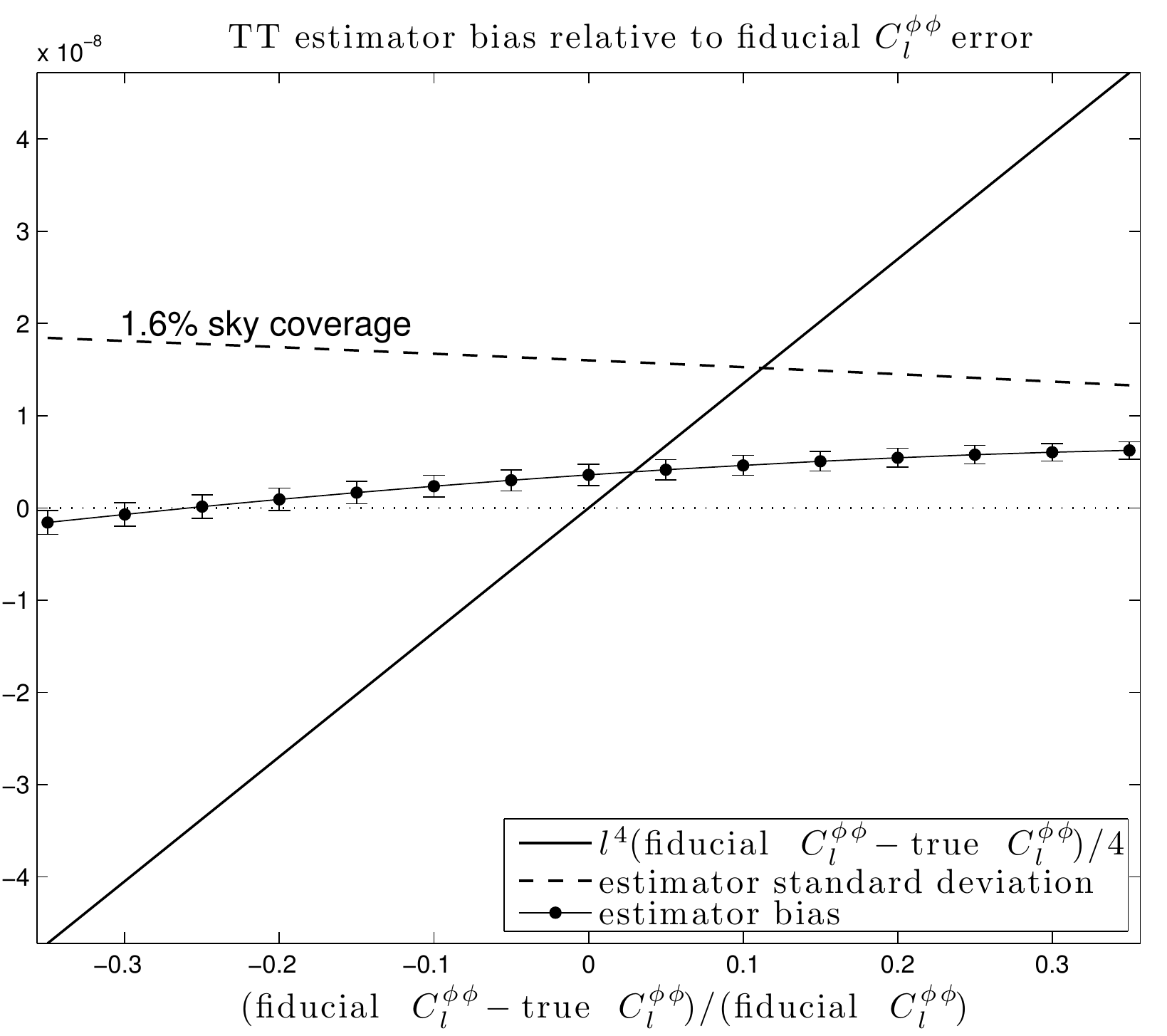}
\includegraphics[height = 2.8in]{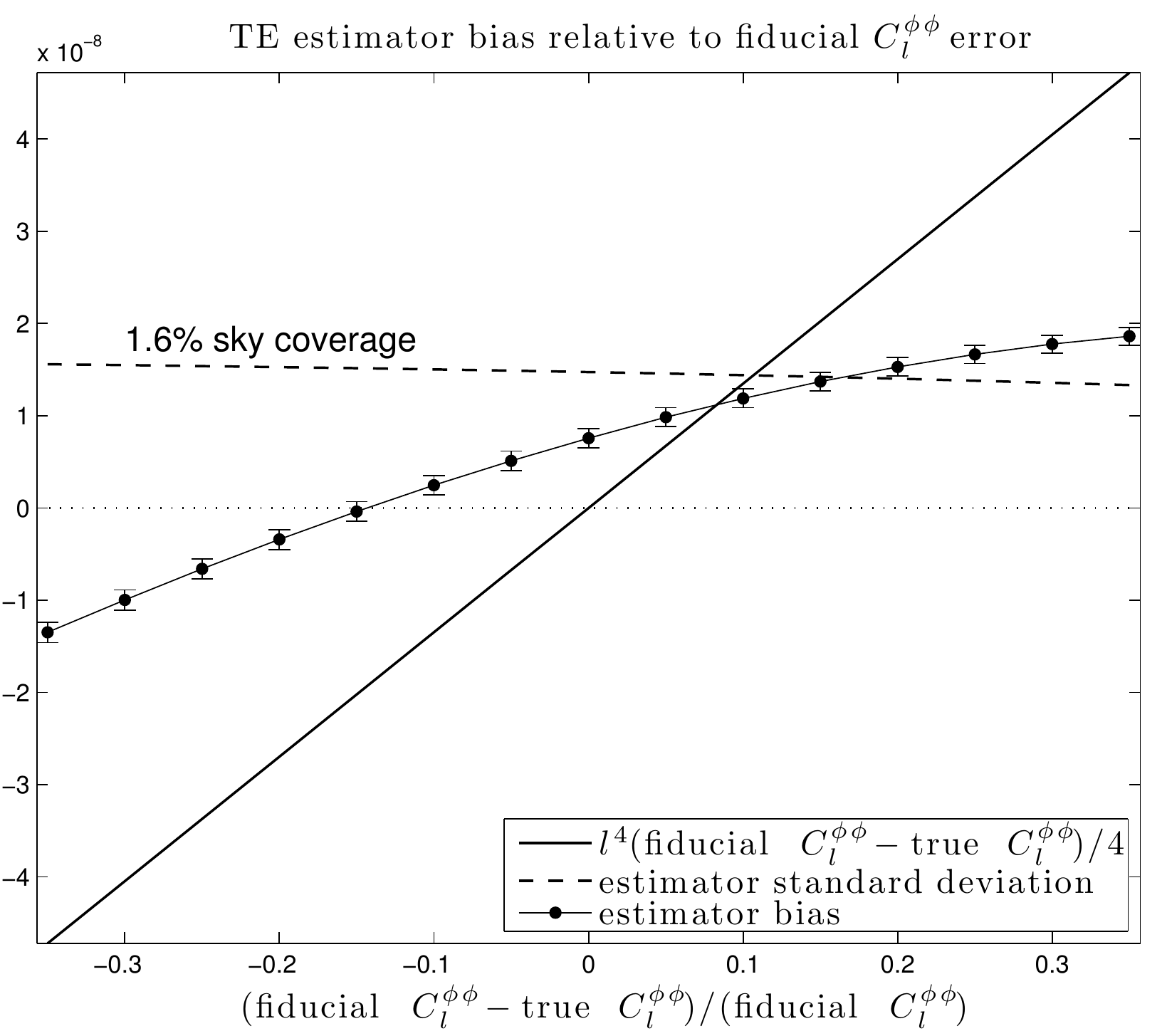}
\caption{
Spectral estimation bias averaged over the multiple bin $l\in[10,200]$ as a function of fiducial uncertainty. See Section \ref{sim} for details. \label{fig5}
 }
\end{figure*}

\section{Simulation}
\label{sim}
We perform two  simulation experiments under experimental conditions similar to those found in future ACTpol/SPTpol experiments.
The first simulation explores the bias terms  $ N^{(j)}_{\bl}$ and $\bigl\langle N^{(j)}_{\bl, \text{\tiny obs}}\bigr\rangle$ for the two quadratic estimators $\hat\phi^\xy_\bl$ and $\tilde \phi^\xy_\bl$. The results are summarized in figures \ref{fig1}, \ref{fig2} and \ref{fig3}. 
For these simulations, to compute the lensed weights in $\tilde \phi^\xy_\bl$, we use the same fiducial model for $C_\bl^{\phi\phi}$ as the simulation model for $C_\bl^{\phi\phi}$. In contrast, for the second set of simulations we explore the effect of  uncertainty in the fiducial model for $C_\bl^{\phi\phi}$ when computing $\tilde \phi^\xy_\bl$. The results are summarized in figures \ref{fig4}, \ref{fig5} and \ref{fig6}. The main conclusion of the first set of simulations is that although $\tilde \phi^\xy_\bl$ does reduce low $l$ estimation bias, this is accomplished by increasing $ N^{(1)}_{\bl}$ (or  $\bigl\langle N^{(1)}_{\bl, \text{\tiny obs}}\bigr\rangle$ in the case that one uses $N^{(0)}_{\bl, \text{\tiny obs}}$ instead of  $N^{(0)}_{\bl}$) to the point of canceling with $ N^{(2)}_{\bl}$ (or $\bigl\langle N^{(2)}_{\bl, \text{\tiny obs}}\bigr\rangle$ as the case may be) when the correct model for $C_\bl^{\phi\phi}$ is used to generate the lensed weights. 
The  second set of simulations show that this cancelation can be sensitive to the fiducial model for $C_\bl^{\phi\phi}$ depending on which estimator one uses: TE and EE are most sensitive, EB is least sensitive. At the end of this section we discuss the inferential implications for future experiments.

The cosmology used in our simulations are based on a flat,
power law $\Lambda$CDM cosmological model, with baryon density
$\Omega_b=0.044$; cold dark matter density $\Omega_\text{cdm}=0.21$;
cosmological constant density $\Omega_\Lambda=0.74$; Hubble parameter
$h=0.71$ in units of 100$\,$km$\,$s$^{-1}\,$Mpc$^{-1}$; primordial
scalar fluctuation amplitude $A_s(k=0.002\,$Mpc$^{-1}) = 2.45\times
10^{-9}$; scalar spectral index $n_s(k=0.002\,$Mpc$^{-1}) = 0.96$;
primordial helium abundance $Y_P=0.24$; and reionization optical depth
$\tau_r=0.088$. The CAMB code is used to generate the theoretical
power spectra \cite{CAMB}.

To construct the lensed CMB simulation  we first  generate a high resolution simulation of $\Theta(\bs x)$ and the gravitational potential $\phi(\bs x)$ on a periodic $25.6^\text{o} \times 25.6^\text{o}$ patch of the flat sky. The lensing operation is performed by taking the numerical gradient of $\phi$, then using linear interpolation to obtain the lensed field $\Theta(\bs x + \nabla \phi(\bs x))$.  
We down-sample the lensed field to obtain the desired 1.5 arcmin pixel resolution for the simulation output.  
 Finally, the observational noise  (with a standard deviation of $5\mu$K-arcmin on  $T$ and   $\sqrt{2} \times 5\mu$K-arcmin  on  $E$, $B$ and Gaussian beam FWHM=1.5 arcmin deconvolution)  is added in Fourier space.  For all of the simulations we assume a zero $B$ mode and a lensing potential $\phi$ which is uncorrelated with the CMB.
In contrast to the full lensing simulation, the pertabive expansions given in Section \ref{fft1} only require simulation of unlensed CMB fields at the low resolution 1.5 arcmin pixels. 

\subsection*{Figures \ref{fig1}, \ref{fig2} and \ref{fig3} }

Each plot in figures \ref{fig1}, \ref{fig2} and \ref{fig3} 
correspond to a different quadratic pairing $X,Y\in\{ T, E,B\}$ and shows the Monte Carlo approximations to   $l^4\bigl\langle N^{(1)}_{\bl, \text{\tiny obs}}\bigr\rangle/4$ and  $l^4\bigl\langle N^{(2)}_{\bl, \text{\tiny obs}}\bigr\rangle/4$ along with the expected value of the spectral density estimates  $l^4( \delta_0^{-1}  |\hat \phi_\bl^\xy  |^2 -N^{(0)}_{\bl,\text{\tiny obs}})/4$ and $l^4( \delta_0^{-1} |\tilde  \phi_\bl^\xy  |^2 -N^{(0)}_{\bl,\text{\tiny obs}})/4$ over different realizations of $\phi$, the CMB and the observational noise. Although not shown, the bias terms  $l^4 N^{(1)}_{\bl}/4$ were also computed, using  the coupling technique given in Section \ref{fft2}, and resulted in very similar plots (mostly indistinguishable above the Monte Carlo error).
   The spectral density estimates are computed from the all-order lensed simulations whereas the bias terms are computed using  perturbative expansions discussed in Section \ref{fft1}. The Monte Carlo approximations are based on 2000 independent  realizations for the $TT$, $EE$ and $EB$ estimators and 18000 independent realizations for the $TE$ and $TB$ estimators.  These estimates are then radially averaged on sliding concentric annuli with wavenumber bins of width $20$ to yield the plots shown in figures \ref{fig1}, \ref{fig2} and \ref{fig3}.  

 The main feature in these simulations is the large increase in $\bigl\langle N^{(1)}_{\bl, \text{\tiny obs}}\bigr\rangle$ bias at low $l$ for $\tilde \phi^\xy_\bl$ as compared to the corresponding quantity for $\hat\phi^\xy_\bl$, especially  for the EE, TE and TT estimators. 
In contrast, $\bigl\langle N^{(2)}_{\bl, \text{\tiny obs}}\bigr\rangle$ also increases in magnitude but to a lesser extent, enabling the cancelation with $\bigl\langle N^{(1)}_{\bl, \text{\tiny obs}}\bigr\rangle$. Since the terms $\bigl\langle N^{(1)}_{\bl, \text{\tiny obs}}\bigr\rangle$ and $\bigl\langle N^{(2)}_{\bl, \text{\tiny obs}}\bigr\rangle$ are not individually small but instead cancel, there 
is the potential for this cancelation to be offset when there is
 uncertainty in the fiducial model for $C_\bl^{\phi\phi}$ used to compute the lensing weights for the estimate $\tilde \phi^\xy_\bl$. In the next section we explore this sensitivity by analyzing the resulting estimation bias as a function of fiducial sensitivity.

\subsection*{Figures \ref{fig4}, \ref{fig5} and \ref{fig6} }

To explore the effect of fiducial uncertainty when computing the lensed weights in $\tilde \phi^\xy_\bl$ we fixed a fiducial model $C_\bl^{\phi\phi}$ used to compute the lensed weights,  then analyzed simulations under perturbations of $C_\bl^{\phi\phi}$. In particular, we considered simulation models which differ from the fiducial model by a maximum of $35$\% only in the multipole range $[10, 200]$. The simulation models are of the form $T_\bl C_\bl^{\phi\phi}$ where $T_\bl = 1$ when $l \notin [10,200]$ and $T_\bl = c$ when $l\in [10,200]$ where  $c$ ranges from $1.35$ to $0.65$.
For each  scalar $c$ we simulated 200 different CMB, lensing and noise fields.  At each simulation we recorded the estimation error, namely $l^4\bigl(\delta_0^{-1}  |\tilde \phi_\bl^\xy  |^2 -N^{(0)}_{\bl,\text{\tiny obs}}-T_\bl C_\bl^{\phi\phi}\bigr)/4$,  averaged over all frequencies  in the $l$ bin $[10,200]$. This error is then averaged all $200$ simulations to estimate the bias. This bias is then plotted for each quadratic pairing $X,Y\in\{ T, E,B\}$ in figures  \ref{fig4}, \ref{fig5} and \ref{fig6}.
The estimation bias is shown as `\textcolor{black}{$-\!\!\bullet\!\!-$}' with $1\sigma$ monte carlo error bars attached as a function of the fiducial uncertainty  $1-c$ in the bin $l\in[10,200]$.  The dashed line shows the standard deviation of  $l^4(\delta_0^{-1}  |\tilde \phi_\bl^\xy  |^2 -N^{(0)}_{\bl,\text{\tiny obs}})/4$ and the solid black line plots the fiducial error $(1-T_\bl) l^4C_\bl^{\phi\phi}/4$ averaged over $l\in[10,200]$.

\begin{figure}[ht]
\includegraphics[height = 2.8in]{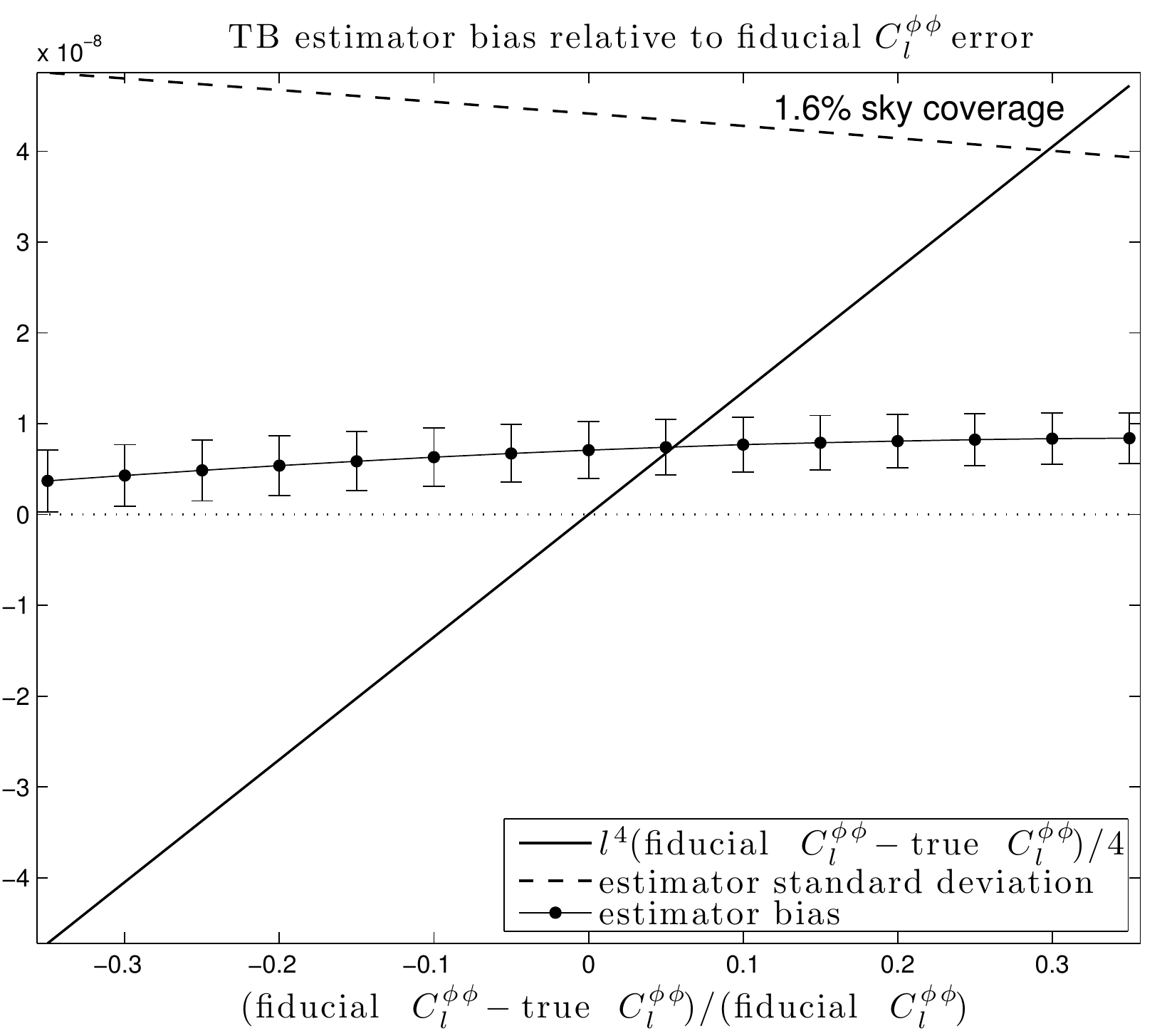}
\caption{
Spectral estimation bias averaged over the multiple bin $l\in[10,200]$ as a function of fiducial uncertainty. See Section \ref{sim} for details.  \label{fig6}
 }
\end{figure}

\subsubsection*{Does this effect future inference?}

Table \ref{tab} summarizes fiducial uncertainty (at 15\%),  estimation standard deviation, and bias range when using  $\tilde \phi_\bl^\xy$ to estimate the average power in  $ l^4C_\bl^{\phi\phi}/4$ over $l\in[10,200]$. 
Each row corresponds to a different quadratic pairing $X,Y\in\{T, E, B\}$. The second column shows the bias range corresponding to $15$\% fiducial uncertainty taken from figures \ref{fig4}, \ref{fig5} and \ref{fig6}. We list bias range, versus absolute bias,  since any baseline bias can be estimated with simulation and subsequently subtracted from any estimate. 
The third column shows $4\sigma$ full sky error bars where the estimation standard deviation is extrapolated from figures \ref{fig4}, \ref{fig5} and \ref{fig6} to full sky $\sigma$ by multiplying $\sqrt{f_\text{sky}}=0.126$. 
 Notice that, ignoring the bias, the $EE$ and $EB$ estimator have the power to constrain the fiducial uncertainty by a factor of $3$. However,  accounting for bias  this constraining power is mitigated, especially for estimators $EE$ and $TE$. In contrast, the $EB$ estimator bias can nearly ignored completely.

\begin{table}[h]
\begin{tabular}{c|c|c|c}
 $\tilde \phi_\bl^\xy$ &  $\sim$ bias range  &    $\sim$ full sky $4\sigma$ & fiducial error ($15\%$)\\
 \hline
TE &  $\pm 0.70$ &  $\pm 0.76$ & $\pm 2.02$ \\
EE &  $\pm 0.53$ & $\pm0.62$ & $\pm 2.02$ \\
TT &   $\pm 0.16$ & $\pm0.86$ &$\pm 2.02$ \\
TB &  $\pm 0.10$ & $\pm2.33$ &$\pm 2.02$ \\
EB &  $\pm 0.01$ & $\pm0.67$ & $\pm 2.02$
 \end{tabular}
\caption{ In units of spectral power, per $10^{-8}$, the above table shows bias range,  approximate full sky standard deviation and  fiducial error (at 15\%) for each quadratic estimate $\tilde \phi_\bl^{\xandy TE}$, $\tilde \phi_\bl^{\xandy EE}$, $\tilde \phi_\bl^{\xandy TT}$, $\tilde \phi_\bl^{\xandy TB}$ and $\tilde \phi_\bl^{\xandy EB}$.\label{tab}}
\end{table}

\section{Discussion}

 The state-of-the-art estimator of weak lensing, the quadratic estimator developed by Hu and Okomoto  \cite{Hu2001b, HuOka2002}, works in part through a delicate cancelation of terms in a Taylor expansion of the lensing effect on the CMB. In this paper we present two simulation based approaches for exploring the nature of this cancelation for both the CMB intensity and the polarization fields.
 In particular, we use these two simulation algorithms to 
 analyze the so called $N^{(1)}_\bl$ and $N^{(2)}_\bl$ bias for 
 two modifications of the  full set of quadratic temperature/polarization lensing estimators: one which  incorporates lensed rather than unlensed spectra into the estimator weights to mitigate the effect of higher order terms; 
 and  one which uses the observed lensed CMB fields to correct for the, so called, $N^{(0)}_\bl$ bias.
 Our simulation algorithms, which can simulate all higher order bias terms $N^{(j)}_\bl$ for $j\geq 0$, utilize an extension of the FFT techniques developed in  \cite{Hu2001b}. These FFT characterizations are key to making estimates of the $N^{(j)}_\bl$ fast and additionally provide fast (non-stochastic) algorithms for approximating the $N^{(0)}_\bl$ bias using the observed lensed CMB fields.

  In Section \ref{sim} we use our algorithm to analyze  the modified quadratic temperature/polarization lensing estimators for future ACTpol/SPTpol experiments. We find that the modified estimates  do reduce low $l$ estimation bias. However this is accomplished by effectively increasing the magnitude of $ N^{(1)}_{\bl}$ and  $ N^{(2)}_{\bl}$  to the point of cancelation when  the correct model for $C_\bl^{\phi\phi}$ is used to generate the lensed weights. We also demonstrate, through an analysis of estimator bias versus fiducial uncertainty,  that  this cancelation can be sensitive to the fiducial model for $C_\bl^{\phi\phi}$ depending on which estimator one uses: TE and EE are most sensitive, EB is least sensitive. For low $l$ estimation in future ACTpol/SPTpol experiments we conclude that the bias in the EB estimator can be effectively ignored. For the TE and the EE estimators, however, the bias does contribute significantly to projected error bars and may need to be corrected  to give the estimator inferential power beyond a nominal  fiducial uncertainty. 

\appendix
%
%

\section{FFT algorithms}
\label{fftappen}
In this section we derive the FFT algorithms  which allow fast simulation of the fields $\mathscr L^{(j)}_\bl$,  $\mathscr O^{(j)}_\bl$ and $\mathscr N^{(j)}_{\bl}$ as described in Section \ref{fft1}. 
 We begin with some notational conventions which greatly simplify the subsequent formulas. 
 First let $\varphi_{\bs k}$ denote the phase angle of frequency ${\bs k}$ and $\Delta \varphi \equiv \varphi_{\bk+\bl}-\varphi_{\bk}$. Also let  $l_a$ denote the $a^\text{th}$ coordinate of $\bl$.
 For any field $X\in\{T, E, B\}$ we let  
\begin{align*}
\hat X_\bk\equiv e^{i2\varphi_\bk} X_\bk, \qquad\check X_\bk\equiv e^{i4\varphi_\bk} X_\bk.
\end{align*}
Furthermore, we will utilize a super-script/sub-script notation to  denote multiplication/division by particular power spectra.
An example serves to illustrate the notation:
\begin{align*}
  \bigl(\check X_{\xandy TT} \bigr)_\bk &\equiv \check X_\bk \frac{1}{\widetilde C_{\bk,\text{\tiny exp}}^{\xandy TT}},\qquad
 \bigl(\check X_{\xandy TT, \xandy EE}^{\xandy TE}\bigr)_\bk \equiv \check X_\bk \frac{C_\bk^{\xandy TE}}{\widetilde C_{\bk,\text{\tiny exp}}^{\xandy TT}\widetilde C_{\bk,\text{\tiny exp}}^{\xandy EE}}\\
  \bigl(\check X_{\xandy TT}^{\xandy TE}\bigr)_\bk &\equiv \check X_\bk \frac{C_\bk^{\xandy TE}}{\widetilde C_{\bk,\text{\tiny exp}}^{\xandy TT}},
 \qquad
  \bigl(\check X_{\xandy TT, \xandy EE}^{\xandy TE,\xandy TT}\bigr)_\bk \equiv\check X_\bk \frac{C_\bk^{\xandy TE} C_\bk^{\xandy TT}}{\widetilde C_{\bk,\text{\tiny exp}}^{\xandy TT}\widetilde C_{\bk,\text{\tiny exp}}^{\xandy EE}}.
  \end{align*}
Notice that the above denominators always use lensed spectra with experimental noise, whereas the numerators always use unlensed spectra. In doing so, the  formulas found in claims 1 through 5 below can be used for fast algorithms for the quadratic estimate $\hat\phi_\bl^\xy$. 
To obtain the corresponding formulas for the modified quadratic estimate $\tilde \phi^\xy_\bl$ one simply needs to replace the unlensed spectra in the numerator of the above notation, with the lensed spectra (but without experimental noise).


\begin{claim}[TT estimator]
  If $X_\bk$ and $Y_\bk$ are complex functions such that $X_{-\bk}=X^*_\bk$ and $Y_{-\bk}=Y^*_\bk$ then 
  \begin{align*}
 \int \frac{d^2\bs k}{2\pi} g^{\xandy{T}{T}}_{\bs l,\bs k\,}   X_{\bs k+\bs l}  Y^*_{\bs k}  &=  -i  l_a  \int \frac{d^2\bx}{2\pi} e^{-i\bl\cdot\bx} \\
  &\times\Bigl(  [\nabla^aX_{\xandy TT}^{\xandy TT}(\bx)][Y_{\xandy TT}(\bx)] \\
  &\quad+   [\nabla^aY_{\xandy TT}^{\xandy TT}(\bx)][X_{\xandy TT}(\bx)]  \Bigr) ;
 \end{align*}
 \begin{align*}
 \int \frac{d^2\bk}{(2\pi)^2}  
   |g^\text{\tiny T\!T}_{\bl,\bk\,}|^2
   X_{\bk+\bl} Y^*_{\bk}
&= -\frac{ l_a l_b}{2\pi} \int \frac{d^2 \bx}{2\pi}  e^{-i\bl\cdot\bx} \\
&\times \Bigl( 2[\nabla ^a   X^{\xandy TT}_{\xandy TT, \xandy TT} (\bx)][\nabla ^b  Y^{\xandy TT}_{\xandy TT, \xandy TT} (\bx)]\nonumber \\
 &\quad+  [X^{}_{\xandy TT, \xandy TT}(\bx)] [\nabla^a\nabla ^b Y^{\xandy TT,\xandy TT}_{\xandy TT, \xandy TT}(\bx)]\nonumber\\
 &\quad+ [Y^{}_{\xandy TT, \xandy TT}(\bx)] [\nabla^a\nabla ^b  X^{\xandy TT,\xandy TT}_{\xandy TT, \xandy TT}(\bx)]  \Bigr) ;\nonumber
\end{align*}
\begin{align*}
 \int \frac{d^2 \bk}{(2\pi)^2}  g^\text{\tiny $T\!T$}_{\bl,\bk\,}   g^\text{\tiny $T\!T$}_{\bl,-\bk-\bl\,}   X_{\bk+\bl} Y^*_{\bk} = \int \frac{d^2\bk}{(2\pi)^2}  
   |g^\text{\tiny T\!T}_{\bl,\bk\,}|^2
  X_{\bk+\bl} Y^*_{\bk},
\end{align*}
where $g^{\xandy TT}_{\bl,\bk\,} \equiv 2\pi{f^{\xandy TT}_{\bl, \bk}}[\widetilde C^{\xandy TT}_{\bk+\bl,\text{\tiny exp}} \widetilde C^{\xandy TT}_{\bk,\text{\tiny exp}} ]^{-1} $ and  $f^\text{\tiny $TT$}_{\bl, \bk}  \equiv \frac{1}{2\pi} [\bl \cdot (\bk + \bl) C^\text{\tiny $TT$}_{\bk+\bl}  -\bl \cdot \bk  C^\text{\tiny $TT$}_{\bk}]$.
 \end{claim}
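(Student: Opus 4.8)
The plan is to reduce all three identities to two elementary facts about the unitary transform $X_\bk=\int\frac{d^2\bx}{2\pi}e^{-i\bx\cdot\bk}X(\bx)$: that multiplication by a coordinate $k_a$ in frequency corresponds to the operator $-i\nabla^a$ in pixel space, and that the transform of a product of two pixel fields $A,B$ is the convolution
\[
\int\frac{d^2\bk}{2\pi}\,\tilde B_{\bk+\bl}\,\tilde A^*_\bk=\int\frac{d^2\bx}{2\pi}\,e^{-i\bl\cdot\bx}\,B(\bx)\overline{A(\bx)},
\]
where $\tilde A,\tilde B$ denote the transforms of $A,B$. The hypotheses $X_{-\bk}=X^*_\bk$, $Y_{-\bk}=Y^*_\bk$, together with the reality and isotropy of every power spectrum appearing in the sub/superscripts, guarantee that each decorated field ($X^{TT}_{TT}$, $Y_{TT,TT}$, $X^{TT,TT}_{TT,TT}$, and so on) is a genuine real-valued pixel field; I would establish this once at the outset so that all subsequent conjugations reduce to flipping the sign of an accompanying factor of $i$.

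For the first identity I would substitute the definitions of $g^{TT}_{\bl,\bk}$ and $f^{TT}_{\bl,\bk}$, the $2\pi$ in $g$ cancelling the $1/(2\pi)$ in $f$, and split the numerator $\bl\cdot(\bk+\bl)C^{TT}_{\bk+\bl}-\bl\cdot\bk\,C^{TT}_\bk$ into two terms. In the first term the factor $(k+l)_aC^{TT}_{\bk+\bl}/\widetilde C^{TT}_{\bk+\bl,\text{exp}}$ rides on $X_{\bk+\bl}$ and becomes $-i\nabla^aX^{TT}_{TT}$, while $1/\widetilde C^{TT}_{\bk,\text{exp}}$ rides on $Y^*_\bk$ and becomes $(Y_{TT})^*_\bk$; in the second term the factor $k_aC^{TT}_\bk/\widetilde C^{TT}_{\bk,\text{exp}}$ rides on $Y^*_\bk$ and the remaining $1/\widetilde C^{TT}_{\bk+\bl,\text{exp}}$ on $X_{\bk+\bl}$. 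Applying the convolution identity to each and using reality of $Y^{TT}_{TT}$ to turn $\overline{-i\nabla^aY^{TT}_{TT}}$ into $+i\nabla^aY^{TT}_{TT}$ yields the two gradient products, both carrying the common prefactor $-il_a$, which is exactly the first formula.

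For the second identity I would use that $g^{TT}_{\bl,\bk}$ is real, so $|g^{TT}_{\bl,\bk}|^2=(g^{TT}_{\bl,\bk})^2$, and expand the squared numerator into the three pieces proportional to $(k+l)_a(k+l)_b(C^{TT}_{\bk+\bl})^2$, to $-2(k+l)_ak_bC^{TT}_{\bk+\bl}C^{TT}_\bk$, and to $k_ak_b(C^{TT}_\bk)^2$, all divided by $[\widetilde C^{TT}_{\bk+\bl,\text{exp}}]^2[\widetilde C^{TT}_{\bk,\text{exp}}]^2$. Each piece factors cleanly into a $(\bk+\bl)$-part and a $\bk$-part matching one product on the right: the cross piece produces $2[\nabla^aX^{TT}_{TT,TT}][\nabla^bY^{TT}_{TT,TT}]$ (the coefficient $-2$ is from the $-2ab$ in the expansion, while the two single gradients give $(-i)(+i)=+1$), whereas the first and third pieces produce the double-gradient products $[Y_{TT,TT}][\nabla^a\nabla^bX^{TT,TT}_{TT,TT}]$ and $[X_{TT,TT}][\nabla^a\nabla^bY^{TT,TT}_{TT,TT}]$, each picking up a sign $(i)^2=-1$. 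Pulling out $l_al_b$ and noting that the measure $d^2\bk/(2\pi)^2$ carries one more factor of $1/(2\pi)$ than the convolution identity supplies, all three pieces assemble into the stated right-hand side with its common prefactor $-l_al_b/(2\pi)$.

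The third identity is immediate once one observes the symmetry $g^{TT}_{\bl,-\bk-\bl}=g^{TT}_{\bl,\bk}$. Replacing $\bk$ by $-\bk-\bl$ sends $\bk+\bl\mapsto-\bk$ and $\bk\mapsto-\bk-\bl$; by isotropy $C^{TT}_{-\bk}=C^{TT}_\bk$ and $C^{TT}_{-\bk-\bl}=C^{TT}_{\bk+\bl}$ (likewise for the experimental lensed spectra), so the numerator $\bl\cdot(-\bk)C^{TT}_{-\bk}-\bl\cdot(-\bk-\bl)C^{TT}_{-\bk-\bl}$ collapses back to $\bl\cdot(\bk+\bl)C^{TT}_{\bk+\bl}-\bl\cdot\bk\,C^{TT}_\bk$ while the denominator is unchanged. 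Since $g^{TT}$ is real this gives $g^{TT}_{\bl,\bk}g^{TT}_{\bl,-\bk-\bl}=(g^{TT}_{\bl,\bk})^2=|g^{TT}_{\bl,\bk}|^2$ pointwise in $\bk$, so the two integrals coincide. The only real difficulty in the whole argument is bookkeeping rather than conceptual: correctly assigning each momentum factor and spectral ratio to $X$ or $Y$, and tracking the powers of $i$ from gradients versus conjugate gradients so that the single surviving sign and the lone $1/(2\pi)$ in the second formula come out right; the reality of the decorated fields is precisely what keeps this manageable.
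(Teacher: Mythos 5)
Your proposal is correct and follows essentially the same route as the paper's own proof: substituting the definitions of $g^{\xandy TT}_{\bl,\bk}$ and $f^{\xandy TT}_{\bl,\bk}$, factoring out $-il_a$ (respectively $-l_al_b$), recognizing each frequency-space factor as the transform of a decorated real pixel field, and invoking the convolution identity (the paper's Lemma~\ref{LEM1}) with reality handling the conjugations. Your symmetry argument $g^{\xandy TT}_{\bl,-\bk-\bl}=g^{\xandy TT}_{\bl,\bk}$ for the third identity is exactly the observation the paper leaves implicit.
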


\if\Ver\LongVer{ 
{\flushleft\textcolor{blue}{$\downarrow$---------begin long version---------}}\newline
\onecolumngrid

\textcolor{blue}{One thing to note is that all the lensed spectra in the following proofs should be changed to lensed spectra with experimental noise.}
\begin{proof}
For the estimator estimator $\hat\phi^{\xandy{T}{T}}$ notice that $f^\text{\tiny $TT$}_{\bl, \bk}  = \frac{1}{2\pi} [\bl \cdot (\bk + \bl) C^\text{\tiny $TT$}_{\bk+\bl}  -\bl \cdot \bk  C^\text{\tiny $TT$}_{\bk}]$. Since $\gw_{\bs l, \bs k} = 2\pi{f^\xy_{\bl,\bk}}[\widetilde C^\text{\tiny X\!X}_{\bk+\bl,\text{\tiny exp}} \widetilde C^\text{\tiny $Y\!Y$}_{\bk,\text{\tiny exp}} ]^{-1} $ we have that
\begin{align*}
 \int \frac{d^2\bs k}{2\pi} g^{\xandy{T}{T}}_{\bs l, \bs k}   X_{\bs k+\bs l}  Y^*_{\bs k} & =    \int \frac{d^2\bs k}{2\pi}   \Bigl[ \bl\cdot (\bk + \bl) C^\text{\tiny $TT$}_{\bk+\bl}  - \bl\cdot \bk  C^\text{\tiny $TT$}_{\bk}\Bigr]  \frac{ X_{\bs k+\bs l}}{\widetilde C^\text{\tiny T\!T}_{\bk+\bl,\text{\tiny exp}}}  \frac{Y^*_{\bs k} }{\widetilde C^\text{\tiny $T\!T$}_{\bk,\text{\tiny exp}}} \\
 &=  -i l_a \int \frac{d^2\bs k}{2\pi}   \Bigl[ i (k^a + l^a) C^\text{\tiny $TT$}_{\bk+\bl}  - i k^a  C^\text{\tiny $T\!T$}_{\bk}\Bigr]  \frac{ X_{\bs k+\bs l}}{\widetilde C^\text{\tiny T\!T}_{\bk+\bl,\text{\tiny exp}}}  \frac{Y^*_{\bs k} }{\widetilde C^\text{\tiny $T\!T$}_{\bk,\text{\tiny exp}}}.
 \end{align*}
 Now
\[
\int \frac{d^2\bs k}{2\pi} \underbrace{\bigl[i(k^a + l^a)C^\text{\tiny TT}_{\bk+\bl}  \bigr] \frac{ X_{\bs k +\bs l}}{\widetilde C^\text{\tiny $T\!T$}_{\bk+\bl,\text{\tiny exp}}}}_{ [\nabla^a X_{\xandy TT}^{\xandy TT}]_{\bk+\bl}} \underbrace{\frac{ Y^*_\bk}{\widetilde C^\text{\tiny $T\!T$}_{\bk,\text{\tiny exp}}}}_{ [Y_{\xandy TT}]^*_\bk}= \int \frac{d^2\bx}{2\pi} e^{-i\bl\cdot\bx}  [\nabla^a X_{\xandy TT}^{\xandy TT}(\bx)] [Y_{\xandy TT}(\bx)].
\]
which follows from Lemma \ref{LEM1}. Also notice 
\[
\int \frac{d^2\bs k}{2\pi} \underbrace{ \frac{ X_{\bs k +\bs l}}{\widetilde C^\text{\tiny $T\!T$}_{\bk+\bl,\text{\tiny exp}}}}_{  [ X_{\xandy TT}]_{\bk+\bl}} \underbrace{\bigl[ik^a C^\text{\tiny $T\!T$}_{\bk}  \bigr]^*\frac{ Y^*_\bk}{\widetilde C^\text{\tiny $T\!T$}_{\bk,\text{\tiny exp}}}}_{  [\nabla^a Y_{\xandy TT}^{\xandy TT}]^*_\bk}= \int \frac{d^2\bx}{2\pi} e^{-i\bl\cdot\bx} [\nabla^a Y_{\xandy TT}^{\xandy TT}(\bx)]  [ X_{\xandy TT}(\bx)].
\]
which follows from Lemma \ref{LEM1}.

To finish notice that
\begin{align*}
\int \frac{d^2\bk}{(2\pi)^2}  
   |g^\text{\tiny T\!T}_{\bl,\bk}|^2
  X_{\bk+\bl} Y^*_{\bk}
 &=  \int \frac{d^2 \bk}{(2\pi)^2}  \frac{ \bigl[\bl\cdot(\bk + \bl)C^\text{\tiny $T\!T$}_{\bk+\bl} - \bl\cdot\bk C^\text{\tiny $T\!T$}_\bk  \bigr]^2}{[\widetilde C^\text{\tiny T\!T}_{\bk+\bl,\text{\tiny exp}}]^2 [\widetilde C^\text{\tiny $T\!T$}_{\bk,\text{\tiny exp}}]^2}{X_{\bk+\bl}}{Y^*_\bk}\\
&=  - l_a l_b \int \frac{d^2 \bk}{(2\pi)^2}  
\underbrace{\frac{i(k^a+l^a) i(k^b+l^b) [C^\text{\tiny $T\!T$}_{\bk+\bl}]^2 X_{\bk +\bl}}{[\widetilde C^\text{\tiny T\!T}_{\bk+\bl,\text{\tiny exp}}]^2 } }_{ [\nabla^a\nabla ^b  X^{\xandy TT,\xandy TT}_{\xandy TT, \xandy TT}]_{\bk+\bl}}
\underbrace{\frac{Y^*_\bk}{ [\widetilde C^\text{\tiny $T\!T$}_{\bk,\text{\tiny exp}}]^2}}_{ [ Y^{}_{\xandy TT, \xandy TT}]^*_\bk}
-l_a l_b \int \frac{d^2 \bk}{(2\pi)^2}
\underbrace{\frac{X_{\bk+\bl}}{[\widetilde C^\text{\tiny T\!T}_{\bk+\bl,\text{\tiny exp}}]^2}}_{  [X^{}_{\xandy TT, \xandy TT}]_{\bk+\bl}}  
\underbrace{\frac{  ik^a ik^b \bigl[ C^\text{\tiny $T\!T$}_\bk  \bigr]^2Y^*_\bk}{ [\widetilde C^\text{\tiny $T\!T$}_{\bk,\text{\tiny exp}}]^2 }}_{  [\nabla^a\nabla ^b  Y^{\xandy TT,\xandy TT}_{\xandy TT, \xandy TT}]^*_{\bk}} \\
&\qquad\qquad\qquad\qquad
-2 l_a l_b \int \frac{d^2 \bk}{(2\pi)^2} 
\underbrace{\frac{ i(k^a + l^a)C^\text{\tiny $T\!T$}_{\bk+\bl} X_{\bk+\bl}}{[\widetilde C^\text{\tiny T\!T}_{\bk+\bl,\text{\tiny exp}}]^2}}_{  [\nabla^a  X^{\xandy TT}_{\xandy TT, \xandy TT}]_{\bk+\bl}}
\underbrace{\frac{ - ik^b C^\text{\tiny $T\!T$}_\bk Y^*_\bk }{ [\widetilde C^\text{\tiny $T\!T$}_{\bk,\text{\tiny exp}}]^2}} _{  [\nabla^b  Y^{\xandy TT}_{\xandy TT, \xandy TT}]^*_{\bk}}.
\end{align*}
QED.
\end{proof}
{\flushleft\textcolor{blue}{$\uparrow$------------end long version---------}}\newline
} \fi


\begin{claim}[TE estimator]
  If $X_\bk$ and $Y_\bk$ are complex functions such that $X_{-\bk}=X^*_\bk$ and $Y_{-\bk}=Y^*_\bk$ then 
\begin{align*}
 \int \frac{d^2\bs k}{2\pi} g^{\xandy{T}{E}}_{\bs l,\bs k\,}   X_{\bs k+\bs l}  Y^*_{\bs k}  &=  -i  l_a   \int \frac{d^2\bx}{2\pi} e^{-i\bl\cdot\bx} \\
 &\times \Bigl(  Re\{ [\nabla^a \hat X^{\xandy TE}_{\xandy TT}(\bx) ]  [\hat Y^{}_{\xandy EE}(\bx) ]^*\} \\
 &\quad+ [\nabla^a Y^{\xandy TE}_{\xandy EE}(\bx) ]  [X^{}_{\xandy TT}(\bx) ]   \Bigr);
 \end{align*}
 \begin{align*}
 \int \frac{d^2\bk}{(2\pi)^2}  
  & |g^\text{\xandy{T}{E}}_{\bl,\bk\,}|^2
   X_{\bk+\bl} Y^*_{\bk}
= -\frac{ l_a l_b}{4\pi} \int \frac{d^2 \bx}{2\pi}  e^{-i\bl\cdot\bx} \\
&\times\Bigl( 
[ Y^{}_{\xandy EE, \xandy EE}(\bx) ]
[\nabla^a \nabla^b  X^{\xandy TE, \xandy TE }_{\xandy TT, \xandy TT}(\bx) ]\nonumber\\
&\quad +Re\{
 [\check Y^{ }_{\xandy EE, \xandy EE}(\bx) ]
 [\nabla^a \nabla^b \check X^{\xandy TE, \xandy TE }_{\xandy TT, \xandy TT}(\bx) ]^* \}\nonumber\\
 &\quad+ 2 
 [  X^{}_{\xandy TT, \xandy TT}(\bx) ]
 [\nabla^a \nabla^b Y^{\xandy TE, \xandy TE }_{\xandy EE, \xandy EE}(\bx) ] \nonumber\\
 &\quad+ Re\{ 4 
 [\nabla^a \hat X^{ \xandy TE }_{\xandy TT, \xandy TT}(\bx) ]
  [\nabla^b \hat Y^{ \xandy TE }_{\xandy EE, \xandy EE}(\bx) ]^*
 \}\Bigr);
\end{align*}
\begin{align*}
 \int \frac{d^2 \bk}{(2\pi)^2}  g^\text{\tiny $T\!E$}_{\bl,\bk\,}  & g^\text{\tiny $T\!E$}_{\bl,-\bk-\bl\,}   X_{\bk+\bl} Y^*_{\bk} = -\frac{ l_a l_b}{4\pi} \int \frac{d^2 \bx}{2\pi}  e^{-i\bl\cdot\bx}\\
 &\times \Bigl(Re\{ 2 
 [\nabla^a \nabla^b \hat X^{\xandy TE, \xandy TE }_{\xandy TT, \xandy EE}(\bx) ]
[\hat Y^{ }_{\xandy TT, \xandy EE}(\bx) ]^*\} \nonumber \\
 &\quad+Re\{2 
  [\nabla^a \nabla^b \hat Y^{\xandy TE, \xandy TE }_{\xandy TT, \xandy EE}(\bx) ]
   [ \hat X^{ }_{\xandy TT, \xandy EE}(\bx) ]^*
  \}\nonumber\\
 &\quad+  3 
 [\nabla^a X^{ \xandy TE }_{\xandy TT, \xandy EE}(\bx) ]
 [ \nabla^b  Y^{ \xandy TE }_{\xandy TT, \xandy EE}(\bx) ]\nonumber\\
 &\quad+ Re\{
 [\nabla^b \check Y^{ \xandy TE }_{\xandy TT, \xandy EE}(\bx) ]
 [\nabla^a \check X^{\xandy TE }_{\xandy TT, \xandy EE}(\bx) ] ^* \}  \Bigr),
\end{align*}
where $g^{\xandy TE}_{\bl,\bk\,} \equiv 2\pi{f^{\xandy TE}_{\bl, \bk}}[\widetilde C^{\xandy TT}_{\bk+\bl,\text{\tiny exp}} \widetilde C^{\xandy EE}_{\bk,\text{\tiny exp}} ]^{-1} $ and  $f^\text{\tiny $T\!E$}_{\bl, \bk}   \equiv \frac{1}{2\pi} [\bl \cdot (\bk + \bl)  \cos(2\Delta\varphi) C^\text{\tiny $T\!E$}_{\bk+\bl}  -\bl \cdot \bk  C^\text{\tiny $T\!E$}_{\bk}]$.
 \end{claim}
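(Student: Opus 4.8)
\emph{Proof strategy.} The plan is to establish all three identities by the template already used for the $TT$ estimator (the first claim): substitute the definitions of $g^{TE}_{\bl,\bk}$ and $f^{TE}_{\bl,\bk}$, extract the momentum factors $-il_a$ (and $-l_al_b$ for the quadratic identities) by recognizing $i(k^a+l^a)$ and $ik^a$ as the frequency-space versions of $\nabla^a$ acting on the $\bk+\bl$ and $\bk$ arguments, rewrite every surviving factor as a product of one function of $\bk+\bl$ times one function of $\bk$, and then apply the convolution identity sending $\int\frac{d^2\bk}{2\pi}A_{\bk+\bl}B^*_\bk$ to $\int\frac{d^2\bx}{2\pi}e^{-i\bl\cdot\bx}A(\bx)\overline{B(\bx)}$. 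The one new ingredient relative to $TT$ is the angular factor $\cos(2\Delta\varphi)$. I would handle it with $\cos(2\Delta\varphi)=\tfrac12\bigl(e^{2i\Delta\varphi}+e^{-2i\Delta\varphi}\bigr)$ and $e^{\pm2i\Delta\varphi}=e^{\pm2i\varphi_{\bk+\bl}}e^{\mp2i\varphi_\bk}$, so that each exponential deposits a spin phase $e^{2i\varphi}$ on one argument and $e^{-2i\varphi}$ on the other; these phases are exactly what turn the raw fields into the hatted fields $\hat X_\bk=e^{2i\varphi_\bk}X_\bk$, while the fourth-order phases produced by $\cos^2(2\Delta\varphi)$ turn them into the checked fields $\check X_\bk=e^{4i\varphi_\bk}X_\bk$.

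For the first (linear) identity, the $-\bl\cdot\bk\,C^{TE}_\bk$ piece carries no spin phase and, after extracting $-il_a$, reads $[X_{TT}]_{\bk+\bl}[\nabla^aY^{TE}_{EE}]^*_\bk$, giving the purely real term $[\nabla^aY^{TE}_{EE}(\bx)][X_{TT}(\bx)]$. The $\bl\cdot(\bk+\bl)\cos(2\Delta\varphi)C^{TE}_{\bk+\bl}$ piece splits into two halves: the $e^{2i\Delta\varphi}$ half gives $[\nabla^a\hat X^{TE}_{TT}]_{\bk+\bl}[\hat Y_{EE}]^*_\bk$, hence $[\nabla^a\hat X^{TE}_{TT}(\bx)][\hat Y_{EE}(\bx)]^*$ after the convolution identity, and the $e^{-2i\Delta\varphi}$ half gives its pixel-space complex conjugate. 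The key computation here is that, by the reality conditions $X_{-\bk}=X^*_\bk$, $Y_{-\bk}=Y^*_\bk$ together with the realness of the spectra, the field whose coefficients are $e^{-2i\varphi_\bk}Y_\bk$ is precisely $\overline{\hat Y(\bx)}$; this is what pairs the two halves into $\tfrac12\cdot2\,\mathrm{Re}\{\cdots\}=\mathrm{Re}\{[\nabla^a\hat X^{TE}_{TT}(\bx)][\hat Y_{EE}(\bx)]^*\}$.

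For the $|g^{TE}|^2$ identity I would use that $f^{TE}$ is real, so $|g^{TE}_{\bl,\bk}|^2=(2\pi)^2(f^{TE}_{\bl,\bk})^2/[(\widetilde C^{TT}_{\bk+\bl,\text{exp}})^2(\widetilde C^{EE}_{\bk,\text{exp}})^2]$, expand the square into three monomials, and pull out $l_al_b$. The $(\bl\cdot(\bk+\bl))^2\cos^2(2\Delta\varphi)(C^{TE})^2$ monomial places both gradients on the $\bk+\bl$ argument; writing $\cos^2(2\Delta\varphi)=\tfrac12+\tfrac12\cos(4\Delta\varphi)$ separates a non-oscillating part (ordinary fields, yielding $[Y_{EE,EE}][\nabla^a\nabla^bX^{TE,TE}_{TT,TT}]$) from a $\cos(4\Delta\varphi)$ part, whose $e^{\pm4i\varphi}$ phases produce the checked fields and the term $\mathrm{Re}\{[\check Y_{EE,EE}][\nabla^a\nabla^b\check X^{TE,TE}_{TT,TT}]^*\}$. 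The $(\bl\cdot\bk)^2(C^{TE})^2$ monomial places both gradients on the $\bk$ argument, giving $2\,[X_{TT,TT}][\nabla^a\nabla^bY^{TE,TE}_{EE,EE}]$, and the cross monomial $-2\,\bl\cdot(\bk+\bl)\,\bl\cdot\bk\,\cos(2\Delta\varphi)(C^{TE})^2$ carries a single $\cos(2\Delta\varphi)$ (hence hatted fields) with one gradient on each argument, giving $\mathrm{Re}\{4\,[\nabla^a\hat X^{TE}_{TT,TT}][\nabla^b\hat Y^{TE}_{EE,EE}]^*\}$. The third identity follows the same route after the reflection $\bk\mapsto-\bk-\bl$, under which $\Delta\varphi\mapsto-\Delta\varphi$ (so $\cos(2\Delta\varphi)$ is unchanged) and the two denominators swap roles, $\widetilde C^{TT}_{\bk+\bl}\widetilde C^{EE}_{\bk}\mapsto\widetilde C^{TT}_{\bk}\widetilde C^{EE}_{\bk+\bl}$; this swap is exactly what produces the mixed subscripts $(\cdot)_{TT,EE}$ there, and the reality condition $X_{-\bk}=X^*_\bk$ is used to put the reflected factor back into the standard $A_{\bk+\bl}B^*_\bk$ form before the convolution identity is applied.

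The main obstacle is not any single analytic step but the bookkeeping. For each monomial produced by $(f^{TE})^2$ I must track correctly (i) which argument each of $\nabla^a,\nabla^b$ lands on, (ii) which power spectra become numerator superscripts ($C^{TE}$) versus denominator subscripts ($\widetilde C^{TT},\widetilde C^{EE}$), (iii) which spin decoration---none, hat, or check---each field inherits from the accumulated phases, and (iv) the numerical coefficient: the $2$ multiplying $[X_{TT,TT}][\nabla^a\nabla^bY^{TE,TE}_{EE,EE}]$ reflects that this monomial carries no $\cos$ factor and so is not $\tfrac12$-suppressed relative to the common $1/4\pi$ normalization, while the $4$ on the cross term combines its multiplicity $2$ with the factor $2$ from the $\mathrm{Re}$ recombination. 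Checking that the $e^{+}$ and $e^{-}$ halves always reassemble into the stated real parts---which throughout relies on the reality of $X,Y$ and of the spectra---is where an arithmetic slip is most likely, and is the step I would verify most carefully.
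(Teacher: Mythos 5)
Your proposal is correct and follows essentially the same route as the paper's own proof: expand $f^{\xandy TE}_{\bl,\bk}$ into its monomials, pull out $-il_a$ (resp.\ $-l_al_b$) by reading $i(k^a+l^a)$ and $ik^a$ as gradients, convert the spin factors $\cos(2\Delta\varphi)$ and $\cos^2(2\Delta\varphi)=\tfrac12+\tfrac12\cos(4\Delta\varphi)$ into hatted ($e^{2i\varphi}$) and checked ($e^{4i\varphi}$) fields whose conjugate halves recombine into real parts via the reality conditions, and obtain the third identity from the reflection $\bk\mapsto-\bk-\bl$, which leaves $\cos(2\Delta\varphi)$ invariant and mixes the denominators into the $(\cdot)_{\xandy TT,\xandy EE}$ subscripts---precisely the content of the paper's Lemmas \ref{LEM1}, \ref{LEM2} and \ref{LEM11} and its term-by-term expansion $I+II+III(+IV)$. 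Your attribution of the factor $4$ on the cross term is phrased loosely (the $\tfrac12$ from the cosine split cancels against the $z+\bar z=2\,\mathrm{Re}\,z$ recombination, the surviving $2$'s being the monomial multiplicity and the $1/(2\pi)\to1/(4\pi)$ normalization), but the arithmetic and the resulting coefficients all come out as in the claim.
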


\if\Ver\LongVer{ 
{\flushleft\textcolor{blue}{$\downarrow$---------begin long version---------}}\newline

\begin{proof}
For the estimator estimator $\hat\phi^{\xandy{T}{E}}$ notice that if we set $\Delta \varphi \equiv \varphi_{\bk+\bl}-\varphi_{\bk}$ then $f^\text{\tiny $T\!E$}_{\bl, \bk}   = \frac{1}{2\pi} [\bl \cdot (\bk + \bl)  \cos(2\Delta \varphi) C^\text{\tiny $T\!E$}_{\bk+\bl}  -\bl \cdot \bk  C^\text{\tiny $T\!E$}_{\bk}]$. Since $\gw_{\bs l, \bs k} = 2\pi{f^\xy_{\bl,\bk}}[\widetilde C^\text{\tiny X\!X}_{\bk+\bl,\text{\tiny exp}} \widetilde C^\text{\tiny $Y\!Y$}_{\bk,\text{\tiny exp}} ]^{-1} $ we have that
\begin{align*}
 \int \frac{d^2\bs k}{2\pi} g^{\xandy{T}{E}}_{\bs l, \bs k}   X_{\bs k+\bs l}  Y^*_{\bs k} & =    \int \frac{d^2\bs k}{2\pi}   \Bigl[\bl \cdot (\bk + \bl)  \cos(2\Delta \varphi) C^\text{\tiny $T\!E$}_{\bk+\bl}  -\bl \cdot \bk  C^\text{\tiny $T\!E$}_{\bk}\Bigr]  \frac{ X_{\bs k+\bs l}}{\widetilde C^\text{\tiny T\!T}_{\bk+\bl,\text{\tiny exp}}}  \frac{Y^*_{\bs k} }{\widetilde C^\text{\tiny $E\!E$}_{\bk,\text{\tiny exp}}} \\
 &=  -i l_a \int \frac{d^2\bs k}{2\pi}   \Bigl[i (k^a + l^a)  \cos(2\Delta \varphi) C^\text{\tiny $T\!E$}_{\bk+\bl}  - i k^a  C^\text{\tiny $T\!E$}_{\bk}\Bigr] \frac{ X_{\bs k+\bs l}}{\widetilde C^\text{\tiny T\!T}_{\bk+\bl,\text{\tiny exp}}}  \frac{Y^*_{\bs k} }{\widetilde C^\text{\tiny $E\!E$}_{\bk\text{\tiny exp}}}.
 \end{align*}
First notice

\[
 -i l_a\int \frac{d^2\bs k}{2\pi} \cos(2\Delta\varphi)
 \underbrace{   i(k^a + l^a)C^\text{\tiny $T\!E$}_{\bk+\bl}  \frac{X_{\bs k +\bs l}}{\widetilde C^\text{\tiny $T\!T$}_{\bk+\bl,\text{\tiny exp}}}}
 _{  [\nabla^a  X^{ \xandy TE }_{\xandy TT}]_{\bk+\bl}} 
 \underbrace{\frac{Y^*_\bk}{\widetilde C^\text{\tiny $E\!E$}_{\bk,\text{\tiny exp}}}}
 _{ [ Y^{}_{\xandy EE}]^*_\bk}
 =  -i l_a\int \frac{d^2\bx}{2\pi} e^{-i\bl\cdot\bx} Re\bigl\{  [\nabla^a \hat X^{ \xandy TE }_{\xandy TT}(\bx)]  [ \hat Y^{}_{\xandy EE}(\bx)]^* \bigr\}
\]
 by Lemma \ref{LEM2}. Also notice that 
\[
 -i l_a\int \frac{d^2\bs k}{2\pi}
 \underbrace{ \frac{X_{\bs k +\bs l}}{\widetilde C^\text{\tiny $T\!T$}_{\bk+\bl,\text{\tiny exp}}}}
 _{ [X^{}_{\xandy TT}]_{\bk+\bl}}  
 \underbrace{[  ik^a C^\text{\tiny $T\!E$}_\bk  ]^*\frac{Y^*_\bk}{\widetilde C^\text{\tiny $E\!E$}_{\bk,\text{\tiny exp}}}}
 _{  [\nabla^a  Y^{ \xandy TE }_{\xandy EE}]^*_\bk}=  -i l_a\int \frac{d^2\bx}{2\pi} e^{-i\bl\cdot\bx} [X^{}_{\xandy TT}(\bx)] [\nabla^a  Y^{ \xandy TE }_{\xandy EE}(\bx)] 
\]
which follows by Lemma \ref{LEM1}.

To show the next equality notice that  TE notice that $\int \frac{d^2\bk}{(2\pi)^2}  
   |g^\text{\tiny T\!E}_{\bl,\bk}|^2
  X_{\bk+\bl} Y^*_{\bk} =I+ II + III$ where
\begin{align*}
I &= \int \frac{d^2 \bk}{(2\pi)^2}  \frac{[\bl\cdot (\bk + \bl)C^\text{\tiny TE}_{\bk+\bl} ]^2 \cos^2(2\Delta \varphi)}{[\widetilde C^\text{\tiny TT}_{\bk+\bl,\text{\tiny exp}}]^2 [\widetilde C^\text{\tiny EE}_{\bk,\text{\tiny exp}}]^2}  X_{\bk+\bl} Y^*_{\bk}\\
&= -\frac{l_a l_b}{2\pi} \int \frac{d^2 \bk}{2\pi}
 \underbrace{\frac{i(k^a + l^a)i(k^b + l^b)[C^\text{\tiny TE}_{\bk+\bl} ]^2X_{\bk+\bl}}{[\widetilde C^\text{\tiny TT}_{\bk+\bl,\text{\tiny exp}}]^2}}
 _{ [\nabla^a \nabla^b X^{\xandy TE, \xandy TE }_{\xandy TT, \xandy TT} ]_{\bk+\bl}}
 \underbrace{\frac{ Y^*_\bk }{[\widetilde C^\text{\tiny EE}_{\bk,\text{\tiny exp}}]^2}}
 _{ [  Y_{\xandy EE, \xandy EE}]^*_\bk}
  \cos^2(2\Delta \varphi) \\
&=  -\frac{l_a l_b}{4\pi}\int \frac{d^2\bx}{2\pi} e^{-i\bl\cdot\bx}\left(
[  Y^{}_{\xandy EE, \xandy EE} (\bx)][\nabla^a \nabla^b X^{\xandy TE, \xandy TE }_{\xandy TT, \xandy TT} (\bx)]
 +Re\{[ \check Y^{}_{\xandy EE, \xandy EE} (\bx)][\nabla^a \nabla^b \check X^{\xandy TE, \xandy TE }_{\xandy TT, \xandy TT} (\bx)]^* \}\right) 
\end{align*}
also
\begin{align*}
II &= \int \frac{d^2 \bk}{(2\pi)^2}  \frac{[\bl\cdot \bk C^\text{\tiny T\!E}_{\bk} ]^2 }{[\widetilde C^\text{\tiny TT}_{\bk+\bl,\text{\tiny exp}}]^2 [\widetilde C^\text{\tiny EE}_{\bk,\text{\tiny exp}}]^2}  X_{\bk+\bl} Y^*_{\bk}\\
&= - \frac{l_a l_b}{2\pi} \int \frac{d^2 \bk}{2\pi}
 \underbrace{\frac{X_{\bk+\bl}}{[\widetilde C^\text{\tiny TT}_{\bk+\bl,\text{\tiny exp}}]^2}}
 _{[ X^{}_{\xandy TT, \xandy TT} ]_{\bk+\bl}}
 \underbrace{\frac{  ik^a ik^b \bigl[ C^\text{\tiny TE}_\bk  \bigr]^2  Y^*_\bk }{[\widetilde C^\text{\tiny EE}_{\bk,\text{\tiny exp}}]^2}}
 _{[\nabla^a \nabla^b  Y^{\xandy TE, \xandy TE }_{\xandy EE, \xandy EE} ]^*_\bk} \\
&= -\frac{l_a l_b}{4\pi} \int \frac{d^2\bx}{2\pi} e^{-i\bl\cdot\bx}\left( 2 [ X^{}_{\xandy TT, \xandy TT} (\bx)][\nabla^a \nabla^b  Y^{\xandy TE, \xandy TE }_{\xandy EE, \xandy EE}(\bx)] \right) 
\end{align*}
and

\begin{align*}
III &= \int \frac{d^2 \bk}{(2\pi)^2}  \frac{-2[\bl\cdot \bk C^\text{\tiny TE}_{\bk} ][\bl\cdot (\bk + \bl)C^\text{\tiny TE}_{\bk+\bl} ]\cos(2\Delta\varphi) }{[\widetilde C^\text{\tiny TT}_{\bk+\bl,\text{\tiny exp}}]^2 [\widetilde C^\text{\tiny EE}_{\bk,\text{\tiny exp}}]^2}  X_{\bk+\bl} Y^*_{\bk}\\
&=  -\frac{2l_al_b}{2\pi}\int \frac{d^2 \bk}{2\pi}
 \underbrace{\frac{i(k^a + l^a)C^\text{\tiny TE}_{\bk+\bl} X_{\bk+\bl}}{[\widetilde C^\text{\tiny TT}_{\bk+\bl,\text{\tiny exp}}]^2}}
 _{[\nabla^a  X^{ \xandy TE }_{\xandy TT, \xandy TT} ]_{\bk+\bl}}
 \underbrace{\frac{ -ik^b C^\text{\tiny TE}_{\bk}  Y^*_\bk }{[\widetilde C^\text{\tiny EE}_{\bk,\text{\tiny exp}}]^2}}
 _{[\nabla^b  Y^{\xandy TE }_{\xandy EE, \xandy EE} ] ^*_\bk} \cos(2\Delta\varphi) \\
&=  -\frac{l_al_b}{4\pi} \int \frac{d^2\bx}{2\pi} e^{-i\bl\cdot\bx}Re\left\{ 4 [\nabla^a \hat X^{ \xandy TE }_{\xandy TT, \xandy TT} (\bx)][\nabla^b  \hat Y^{\xandy TE }_{\xandy EE, \xandy EE}(\bx)]^* \right\} \\
\end{align*}

Finally notice that 
To show the results for TE notice that $\int \frac{d^2\bk}{(2\pi)^2}  
   g^\text{\tiny T\!E}_{\bl,\bk\,}g^\text{\tiny T\!E}_{\bl,-\bk-\bl }
  X_{\bk+\bl} Y^*_{\bk} = I+ II + III+ IV$ where
\begin{align*}
I &= \int \frac{d^2 \bk}{(2\pi)^2}  \frac{[\bl\cdot (\bk + \bl)C^\text{\tiny TE}_{\bk+\bl} ]^2 \cos(2\Delta\varphi)}{\widetilde C^\text{\tiny TT}_{\bk+\bl,\text{\tiny exp}} \widetilde C^\text{\tiny EE}_{\bk+\bl,\text{\tiny exp}}\widetilde C^\text{\tiny TT}_{\bk,\text{\tiny exp}} \widetilde C^\text{\tiny EE}_{\bk}}  X_{\bk+\bl,\text{\tiny exp}} Y^*_{\bk}\\
&= - l_a l_b \int \frac{d^2 \bk}{(2\pi)^2}
 \underbrace{\frac{i(k^a + l^a)i(k^b + l^b)[C^\text{\tiny TE}_{\bk+\bl} ]^2X_{\bk+\bl}}{\widetilde C^\text{\tiny TT}_{\bk+\bl,\text{\tiny exp}} \widetilde C^\text{\tiny EE}_{\bk+\bl,\text{\tiny exp}}}}
 _{[\nabla^a \nabla^b X^{\xandy TE, \xandy TE }_{\xandy TT, \xandy EE} ]_{\bk+\bl}}
 \underbrace{\frac{ Y^*_\bk }{\widetilde C^\text{\tiny TT}_{\bk} \widetilde C^\text{\tiny EE}_{\bk}}}
 _{[ Y^{ }_{\xandy TT, \xandy EE} ]^*_\bk} \cos(2\Delta\varphi) \\
&= -\frac{l_a l_b}{4\pi} \int \frac{d^2\bx}{2\pi} e^{-i\bl\cdot\bx}Re\{ 2 
[\nabla^a \nabla^b\hat  X^{\xandy TE, \xandy TE }_{\xandy TT, \xandy EE}(\bx)]
[ \hat Y^{ }_{\xandy TT, \xandy EE}(\bx)]^* \}
\end{align*}
For the second term
\begin{align*}
II &= \int \frac{d^2 \bk}{(2\pi)^2}  \frac{[\bl\cdot \bk C^\text{\tiny TE}_{\bk} ]^2 \cos(2\Delta\varphi)}{\widetilde C^\text{\tiny TT}_{\bk+\bl,\text{\tiny exp}} \widetilde C^\text{\tiny EE}_{\bk+\bl,\text{\tiny exp}}\widetilde C^\text{\tiny TT}_{\bk,\text{\tiny exp}} \widetilde C^\text{\tiny EE}_{\bk,\text{\tiny exp}}}  X_{\bk+\bl} Y^*_{\bk}\\
&= - l_a l_b \int \frac{d^2 \bk}{(2\pi)^2}
 \underbrace{\frac{X_{\bk+\bl}}{\widetilde C^\text{\tiny TT}_{\bk+\bl,\text{\tiny exp}} \widetilde C^\text{\tiny EE}_{\bk+\bl,\text{\tiny exp}}}}
 _{[X^{}_{\xandy TT, \xandy EE} ]_{\bk+\bl}}
 \underbrace{\frac{ i(k^a)i(k^b)[C^\text{\tiny TE}_{\bk} ]^2Y^*_\bk }{\widetilde C^\text{\tiny TT}_{\bk,\text{\tiny exp}} \widetilde C^\text{\tiny EE}_{\bk,\text{\tiny exp}}}}
 _{[\nabla^a \nabla^b  Y^{\xandy TE, \xandy TE }_{\xandy TT, \xandy EE} ]^*_\bk} \cos(2\Delta\varphi) \\
&= -\frac{l_a l_b}{4\pi} \int \frac{d^2\bx}{2\pi} e^{-i\bl\cdot\bx}Re\{2 
[\hat X^{}_{\xandy TT, \xandy EE}(\bx)]
[\nabla^a \nabla^b \hat Y^{\xandy TE, \xandy TE }_{\xandy TT, \xandy EE} (\bx)]^* \}.
\end{align*}
Now for the term $III$ we have
\begin{align*}
III &= -\int \frac{d^2 \bk}{(2\pi)^2}  \frac{[\bl\cdot \bk C^\text{\tiny TE}_{\bk} ][\bl\cdot (\bk + \bl)C^\text{\tiny TE}_{\bk+\bl} ] }{\widetilde C^\text{\tiny TT}_{\bk+\bl,\text{\tiny exp}} \widetilde C^\text{\tiny EE}_{\bk+\bl,\text{\tiny exp}}\widetilde C^\text{\tiny TT}_{\bk} \widetilde C^\text{\tiny EE}_{\bk,\text{\tiny exp}}}  X_{\bk+\bl} Y^*_{\bk}\\
&= - l_a l_b \int \frac{d^2 \bk}{(2\pi)^2}
 \underbrace{\frac{i(k^a+l^a)[C^\text{\tiny TE}_{\bk+\bl} ]X_{\bk+\bl}}{\widetilde C^\text{\tiny TT}_{\bk+\bl,\text{\tiny exp}} \widetilde C^\text{\tiny EE}_{\bk+\bl,\text{\tiny exp}}}}
 _{[\nabla^a  X^{ \xandy TE }_{\xandy TT, \xandy EE} ]_{\bk+\bl}}
 \underbrace{\frac{ -i(k^b)[C^\text{\tiny TE}_{\bk} ]Y^*_\bk }{\widetilde C^\text{\tiny TT}_{\bk,\text{\tiny exp}} \widetilde C^\text{\tiny EE}_{\bk,\text{\tiny exp}}}}
 _{[ \nabla^b  Y^{ \xandy TE }_{\xandy TT, \xandy EE} ]^*_\bk}  \\
&= -\frac{l_a l_b}{4\pi} \int \frac{d^2\bx}{2\pi} e^{-i\bl\cdot\bx}\left( 2[\nabla^a  X^{ \xandy TE }_{\xandy TT, \xandy EE} (\bx)][ \nabla^b  Y^{ \xandy TE }_{\xandy TT, \xandy EE}(\bx)] \right) 
\end{align*}
Finally for the last term $IV$ we have
\begin{align*}
IV &= -\int \frac{d^2 \bk}{(2\pi)^2}  \frac{[\bl\cdot \bk C^\text{\tiny TE}_{\bk} ][\bl\cdot (\bk + \bl)C^\text{\tiny TE}_{\bk+\bl} ] \cos^2(2\Delta\varphi )}{\widetilde C^\text{\tiny TT}_{\bk+\bl,\text{\tiny exp}} \widetilde C^\text{\tiny EE}_{\bk+\bl,\text{\tiny exp}}\widetilde C^\text{\tiny TT}_{\bk,\text{\tiny exp}} \widetilde C^\text{\tiny EE}_{\bk,\text{\tiny exp}}}  X_{\bk+\bl,\text{\tiny exp}} Y^*_{\bk}\\
&= - l_a l_b \int \frac{d^2 \bk}{(2\pi)^2}
 \underbrace{\frac{i(k^a+l^a)[C^\text{\tiny TE}_{\bk+\bl} ]X_{\bk+\bl}}{\widetilde C^\text{\tiny TT}_{\bk+\bl,\text{\tiny exp}} \widetilde C^\text{\tiny EE}_{\bk+\bl,\text{\tiny exp}}}}
 _{[\nabla^a  X^{ \xandy TE }_{\xandy TT, \xandy EE} ]_{\bk+\bl}}
 \underbrace{\frac{ -i(k^b)[C^\text{\tiny TE}_{\bk} ]Y^*_\bk }{\widetilde C^\text{\tiny TT}_{\bk,\text{\tiny exp}} \widetilde C^\text{\tiny EE}_{\bk,\text{\tiny exp}}}}
 _{[ \nabla^b  Y^{ \xandy TE }_{\xandy TT, \xandy EE} ]^*_\bk} \cos^2(2\Delta\varphi)  \\
&= -\frac{l_a l_b}{4\pi} \int \frac{d^2\bx}{2\pi} e^{-i\bl\cdot\bx}\left( [\nabla^a  X^{ \xandy TE }_{\xandy TT, \xandy EE} (\bx)][ \nabla^b  Y^{ \xandy TE }_{\xandy TT, \xandy EE}(\bx)]+ Re\{[\nabla^a  \check X^{ \xandy TE }_{\xandy TT, \xandy EE} (\bx)][ \nabla^b  \check Y^{ \xandy TE }_{\xandy TT, \xandy EE}(\bx)]^* \} \right).
\end{align*}
QED.
 
\end{proof}
{\flushleft\textcolor{blue}{$\uparrow$------------end long version---------}}\newline
} \fi


\begin{claim}[TB estimator]
  If $X_\bk$ and $Y_\bk$ are complex functions such that $X_{-\bk}=X^*_\bk$ and $Y_{-\bk}=Y^*_\bk$ then
  \begin{align*}
 \int \frac{d^2\bs k}{2\pi} g^{\xandy{T}{B}}_{\bs l,\bs k\,}   X_{\bs k+\bs l}  Y^*_{\bs k} & =  -i  l_a   \int \frac{d^2\bx}{2\pi} e^{-i\bl\cdot\bx} \\
 &\quad\times Im\{ [\nabla^a\hat  X^{\xandy TE}_{\xandy TT}(\bx)] [\hat Y_{\xandy BB}(\bx)]^*\};
 \end{align*}
 \begin{align*}
 \int \frac{d^2\bk}{(2\pi)^2}  
   &|g^{\xandy TB}_{\bl,\bk\,}|^2
   X_{\bk+\bl} Y^*_{\bk}
= -\frac{ l_a l_b}{4\pi} \int \frac{d^2 \bx}{2\pi}  e^{-i\bl\cdot\bx}\\
&\times \Bigl(  [Y_{\xandy BB, \xandy BB}(\bx)] [\nabla^a\nabla^b X^{\xandy TE, \xandy TE}_{\xandy TT, \xandy TT}(\bx)] \nonumber\\
&\quad -Re\{ [\check Y_{\xandy BB, \xandy BB}(\bx)] [\nabla^a\nabla^b \check X^{\xandy TE, \xandy TE}_{\xandy TT, \xandy TT}(\bx)]^* \} \Bigr);
\end{align*}
\begin{align*}
 \int \frac{d^2 \bk}{(2\pi)^2}  g^{\xandy TB}_{\bl,\bk\,} &  g^{\xandy  T B}_{\bl,-\bk-\bl\,}   X_{\bk+\bl} Y^*_{\bk} = -\frac{ l_a l_b}{4\pi} \int \frac{d^2 \bx}{2\pi}  e^{-i\bl\cdot\bx} \\
 &\times\Bigl(  [\nabla^a X^{\xandy TE}_{\xandy TT, \xandy BB}(\bx)][\nabla^b Y^{\xandy TE}_{\xandy TT, \xandy BB}(\bx)]\nonumber \\
 &\quad-Re\{[\nabla^a  \check X^{\xandy TE}_{\xandy TT, \xandy BB}(\bx)][\nabla^b\check Y^{\xandy TE}_{\xandy TT, \xandy BB}(\bx)]^* \} \Bigr),
\end{align*}
where $g^{\xandy TB}_{\bl,\bk\,} \equiv 2\pi{f^{\xandy TB}_{\bl, \bk}}[\widetilde C^{\xandy TT}_{\bk+\bl,\text{\tiny exp}} \widetilde C^{\xandy BB}_{\bk,\text{\tiny exp}} ]^{-1} $ and  $f^{\xandy TB}_{\bl, \bk}   \equiv  \frac{1}{2\pi} [\bl \cdot (\bk + \bl) C^{\xandy TE}_{\bk+\bl} ]  \sin(2\Delta\varphi)$.
 \end{claim}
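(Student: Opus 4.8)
The plan is to follow the template of the proofs of Claims 1 and 2, taking advantage of the fact that the TB coefficient is structurally the simplest of the three: because the unlensed $B$ mode is assumed to vanish, $C^{\xandy TB}_\bk=0$, so $f^{\xandy TB}_{\bl,\bk}$ carries only the single term $\frac{1}{2\pi}[\bl\cdot(\bk+\bl)C^{\xandy TE}_{\bk+\bl}]\sin(2\Delta\varphi)$, with no companion $\bl\cdot\bk\,C^{\xandy TB}_{\bk}$ piece. Consequently the whole analytic weight $C^{\xandy TE}_{\bk+\bl}/\widetilde C^{\xandy TT}_{\bk+\bl,\text{\tiny exp}}$ sits on the $(\bk+\bl)$ leg while $Y^*_\bk$ carries only $1/\widetilde C^{\xandy BB}_{\bk,\text{\tiny exp}}$, and each integral reduces to fewer terms than its TE analogue. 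Writing $\bl\cdot(\bk+\bl)=l_a(k^a+l^a)$ lets me read $i(k^a+l^a)$ as the Fourier symbol of a gradient $\nabla^a$, exactly as in Claim 1, so that every integral is ultimately fed to the convolution identities (Lemmas \ref{LEM1} and \ref{LEM2}).

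First I would establish the gradient formula. After pulling out $-il_a$, the integrand of $\int\frac{d^2\bk}{2\pi}g^{\xandy TB}_{\bl,\bk}X_{\bk+\bl}Y^*_\bk$ factors as $[\nabla^a X^{\xandy TE}_{\xandy TT}]_{\bk+\bl}[Y_{\xandy BB}]^*_\bk$ times $\sin(2\Delta\varphi)$. The key step is the replacement $\sin(2\Delta\varphi)=Im\{e^{i2\varphi_{\bk+\bl}}e^{-i2\varphi_\bk}\}$: the factor $e^{i2\varphi_{\bk+\bl}}$ promotes the first field to $\hat X$ and $e^{-i2\varphi_\bk}$ promotes the conjugated second field to $\hat Y^*$, after which the imaginary-part version of Lemma \ref{LEM2} yields the claimed $Im\{[\nabla^a\hat X^{\xandy TE}_{\xandy TT}(\bx)][\hat Y_{\xandy BB}(\bx)]^*\}$. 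This is the exact analogue of the cosine/real-part computation in Claim 2, with $\sin$ routing everything through the imaginary part.

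For the two quadratic integrals I would use the half-angle identity $\sin^2(2\Delta\varphi)=\tfrac12[1-\cos(4\Delta\varphi)]$, with $\cos(4\Delta\varphi)=Re\{e^{i4\varphi_{\bk+\bl}}e^{-i4\varphi_\bk}\}$ supplying the double-hatted fields $\check X,\check Y$. In $\int\frac{d^2\bk}{(2\pi)^2}|g^{\xandy TB}_{\bl,\bk}|^2X_{\bk+\bl}Y^*_\bk$ the squared coefficient $[\bl\cdot(\bk+\bl)C^{\xandy TE}_{\bk+\bl}]^2$ places both gradients on the $(\bk+\bl)$ leg, so the $\tfrac12$ piece gives $[Y_{\xandy BB,\xandy BB}][\nabla^a\nabla^b X^{\xandy TE,\xandy TE}_{\xandy TT,\xandy TT}]$ and the $-\tfrac12\cos(4\Delta\varphi)$ piece gives $-Re\{[\check Y_{\xandy BB,\xandy BB}][\nabla^a\nabla^b\check X^{\xandy TE,\xandy TE}_{\xandy TT,\xandy TT}]^*\}$, the two coincident gradient symbols supplying $-l_al_b$ and the measure a $\tfrac1{2\pi}$, which combine into the stated $-\frac{l_al_b}{4\pi}$. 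For the mixed integral I would first reduce the reflected weight: substituting $\bk\to-\bk-\bl$ sends $\bk+\bl\to-\bk$, and using $C^{\xandy TE}_{-\bk}=C^{\xandy TE}_\bk$, the evenness of the lensed spectra, and $\sin(2(\varphi_{-\bk}-\varphi_{-\bk-\bl}))=-\sin(2\Delta\varphi)$ gives $g^{\xandy TB}_{\bl,-\bk-\bl}=\bl\cdot\bk\,C^{\xandy TE}_\bk\sin(2\Delta\varphi)[\widetilde C^{\xandy TT}_{\bk,\text{\tiny exp}}\widetilde C^{\xandy BB}_{\bk+\bl,\text{\tiny exp}}]^{-1}$. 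Its product with $g^{\xandy TB}_{\bl,\bk}$ then places one gradient on each leg, carries $\sin^2(2\Delta\varphi)$, and after the same half-angle split produces the $[\nabla^a X^{\xandy TE}_{\xandy TT,\xandy BB}][\nabla^b Y^{\xandy TE}_{\xandy TT,\xandy BB}]$ and $-Re\{[\nabla^a\check X^{\xandy TE}_{\xandy TT,\xandy BB}][\nabla^b\check Y^{\xandy TE}_{\xandy TT,\xandy BB}]^*\}$ terms via Lemmas \ref{LEM1} and \ref{LEM2}.

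The main obstacle is sign and phase bookkeeping rather than any conceptual difficulty. The most error-prone point is the mixed integral: one must track how the sign flip of the sine under $\bk\to-\bk-\bl$ combines with the sign of $\bl\cdot(-\bk)$ in the reflected weight and with the cross-leg gradient assignment (where one factor contributes $-i$ and the conjugated factor $+i$), and then verify that collecting these against the $\tfrac12$ of the half-angle identity and the $\tfrac1{2\pi}$ of the measure reproduces exactly the prefactor $-\frac{l_al_b}{4\pi}$ stated in the claim. Checking that $e^{i2\varphi}$ versus $e^{i4\varphi}$ route their respective factors to the correct leg, and that the Hermitian-symmetry hypotheses $X_{-\bk}=X^*_\bk$, $Y_{-\bk}=Y^*_\bk$ make the surviving terms genuinely real (or purely imaginary, in the first formula), is the remaining item demanding care.
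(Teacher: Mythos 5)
Your strategy is exactly the one the paper's own (long-version) appendix proof uses: pull the gradient symbols $i(k^a+l^a)$ and $ik^b$ out of $f^{\xandy{T}{B}}_{\bl,\bk}$, convert $\sin(2\Delta\varphi)$ into an imaginary part via the sine analogue of the cosine/real-part convolution lemma for the linear formula, and convert $\sin^2(2\Delta\varphi)$ through the half-angle split $\tfrac12[1-\cos(4\Delta\varphi)]$, whose $\cos(4\Delta\varphi)$ piece generates the checked fields $\check X,\check Y$. Your handling of the first two integrals is correct and coincides with the paper's computation, including the observation that the two coincident gradient symbols in the $|g^{\xandy{T}{B}}_{\bl,\bk}|^2$ integral supply the $-l_al_b$ that combines with the $\tfrac12$ and the $\tfrac1{2\pi}$ of the measure into $-\frac{l_al_b}{4\pi}$.

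The gap sits in the mixed integral, and it is precisely the check you deferred. Your reflected weight $g^{\xandy{T}{B}}_{\bl,-\bk-\bl}=\bl\cdot\bk\,C^{\xandy{T}{E}}_{\bk}\sin(2\Delta\varphi)\,[\widetilde C^{\xandy{T}{T}}_{\bk,\text{\tiny exp}}\widetilde C^{\xandy{B}{B}}_{\bk+\bl,\text{\tiny exp}}]^{-1}$ is right: the sine flip and the sign of $\bl\cdot(-\bk)$ cancel, which is exactly why no minus sign survives in it. But that intermediate result already fixes the final sign. The product with $g^{\xandy{T}{B}}_{\bl,\bk}$ carries $[\bl\cdot(\bk+\bl)][\bl\cdot\bk]=l_al_b(k^a+l^a)k^b$ with a plus sign, and the cross-leg factorization $[i(k^a+l^a)]\cdot[-ik^b]=+(k^a+l^a)k^b$ contributes no further sign (unlike the same-leg case, where $i\cdot i=-1$). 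Collecting the $\tfrac12$ from the half-angle identity and the $\tfrac1{2\pi}$ from the measure therefore yields $+\frac{l_al_b}{4\pi}$ times the bracketed integrand, not the $-\frac{l_al_b}{4\pi}$ that you assert the bookkeeping will reproduce. This is not a defect of your method: the paper's own long-version proof of this identity also ends with $+l_al_b\frac{1}{4\pi}(\cdots)$, so the claim as printed carries an overall sign inconsistent with its own derivation (the same discrepancy appears in the EB claim). The point is that the one step you yourself single out as the most error-prone is left as an assumption that the printed prefactor will emerge; carrying it out shows that it does not, and a complete proof must either conclude with $+\frac{l_al_b}{4\pi}$ or explicitly flag the sign of the claim as stated.
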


\if\Ver\LongVer{ 
{\flushleft\textcolor{blue}{$\downarrow$---------begin long version---------}}\newline

\begin{proof}

For the estimator estimator $\hat\phi^{\xandy{T}{B}}$ notice that if we set $\Delta \varphi \equiv \varphi_{\bk+\bl}-\varphi_{\bk}$ then $f^{\xandy TB}_{\bl, \bk}   =  \frac{1}{2\pi} [\bl \cdot (\bk + \bl) C^{\xandy TE}_{\bk+\bl} ]  \sin(2\Delta\varphi)$. Since $\gw_{\bs l, \bs k} = 2\pi{f^\xy_{\bl,\bk}}[\widetilde C^\text{\tiny X\!X}_{\bk+\bl,\text{\tiny exp}} \widetilde C^\text{\tiny $Y\!Y$}_{\bk,\text{\tiny exp}} ]^{-1} $ we have that
\begin{align*}
 \int \frac{d^2\bs k}{2\pi} g^{\xandy{T}{B}}_{\bs l, \bs k}   X_{\bs k+\bs l}  Y^*_{\bs k} & =    \int \frac{d^2\bs k}{2\pi}   \Bigl[\bl \cdot (\bk + \bl) C^{\xandy TE}_{\bk+\bl}   \sin(2\Delta\varphi) \Bigr]  \frac{ X_{\bs k+\bs l}}{\widetilde C^\text{\tiny T\!T}_{\bk+\bl,\text{\tiny exp}}}  \frac{Y^*_{\bs k} }{\widetilde C^\text{\tiny $B\!B$}_{\bk,\text{\tiny exp}}} \\
 &=  -i l_a \int \frac{d^2\bs k}{2\pi}   \Bigl[i (k^a + l^a) C^{\xandy TE}_{\bk+\bl}   \sin(2\Delta\varphi) \Bigr] \frac{ X_{\bs k+\bs l}}{\widetilde C^\text{\tiny T\!T}_{\bk+\bl,\text{\tiny exp}}}  \frac{Y^*_{\bs k} }{\widetilde C^\text{\tiny $B\!B$}_{\bk,\text{\tiny exp}}}\\
  &=  -i l_a \int \frac{d^2\bs k}{2\pi}  \underbrace{ i (k^a + l^a) C^{\xandy TE}_{\bk+\bl}    \frac{ X_{\bs k+\bs l}}{\widetilde C^\text{\tiny T\!T}_{\bk+\bl,\text{\tiny exp}}}  }
  _{[\nabla^a X^{\xandy TE}_{\xandy TT}]_{\bk+\bl}}
  \underbrace{\frac{Y^*_{\bs k} }{\widetilde C^\text{\tiny $B\!B$}_{\bk,\text{\tiny exp}}}}
  _{[Y_{\xandy BB}]^*_\bk} 
  \sin(2\Delta\varphi)\\
 &=  -i  l_a   \int \frac{d^2\bx}{2\pi} e^{-i\bl\cdot\bx}  Im\{ [\nabla^a\hat  X^{\xandy TE}_{\xandy TT}(\bx)] [\hat Y_{\xandy BB}(\bx)]^*\}
 \end{align*}

Also notice that
\begin{align*}
\int \frac{d^2\bk}{(2\pi)^2}  
   |g^{\xandy TB}_\bl(\bk)|^2
  X_{\bk+\bl} Y^*_{\bk} &= \int \frac{d^2 \bk}{(2\pi)^2}  \frac{[\bl\cdot (\bk + \bl)C^{\xandy TE}_{\bk+\bl} ]^2 \sin^2(2\Delta\varphi)}{[\widetilde C^{\xandy TT}_{\bk+\bl,\text{\tiny exp}}]^2 [\widetilde C^{\xandy BB}_{\bk,\text{\tiny exp}}]^2}  X_{\bk+\bl} Y^*_{\bk}\\
&= - l_a l_b \int \frac{d^2 \bk}{(2\pi)^2}
 \underbrace{\frac{i(k^a + l^a)i(k^b + l^b)[C^{\xandy TE}_{\bk+\bl} ]^2X_{\bk+\bl}}{[\widetilde C^{\xandy TT}_{\bk+\bl,\text{\tiny exp}}]^2}}
 _{ [\nabla^a\nabla^b X^{\xandy TE, \xandy TE}_{\xandy TT, \xandy TT}]_{\bk+\bl}}
 \underbrace{\frac{ Y^*_\bk }{[\widetilde C^\text{\tiny BB}_{\bk,\text{\tiny exp}}]^2}}
 _{[Y_{\xandy BB, \xandy BB}]^*_\bk} 
 \sin^2(2\Delta\varphi) \\
 &= -l_a l_b \frac{1}{4\pi}\int \frac{d^2\bx}{2\pi} e^{-i\bl\cdot\bx}\left( [Y_{\xandy BB, \xandy BB}(\bx)] [\nabla^a\nabla^b X^{\xandy TE, \xandy TE}_{\xandy TT, \xandy TT}(\bx)]-Re\{ [\check Y_{\xandy BB, \xandy BB}(\bx)] [\nabla^a\nabla^b \check X^{\xandy TE, \xandy TE}_{\xandy TT, \xandy TT}(\bx)]^* \}\right) 
 \end{align*}
Also  notice that
\begin{align*}
\int \frac{d^2\bk}{(2\pi)^2}  
  g^{\xandy TB}_\bl(\bk)g^{\xandy TB}_\bl(-\bk-\bl)
  X_{\bk+\bl} Y^*_{\bk} &= \int \frac{d^2 \bk}{(2\pi)^2} 
   \frac{ \bl\cdot (\bk + \bl)\bl\cdot (-\bk) C^\text{\tiny TE}_{\bk+\bl} C^\text{\tiny TE}_{\bk} (-1) \sin^2(2\Delta\varphi)}{\widetilde C^\text{\tiny TT}_{\bk+\bl,\text{\tiny exp}} \widetilde C^\text{\tiny BB}_{\bk,\text{\tiny exp}}\widetilde C^\text{\tiny TT}_{\bk} \widetilde C^\text{\tiny BB}_{\bk+\bl,\text{\tiny exp}}  }  X_{\bk+\bl} Y^*_{\bk}\\
&=  l_a l_b \int \frac{d^2 \bk}{(2\pi)^2}
 \underbrace{\frac{i(k^a + l^a) C^\text{\tiny TE}_{\bk+\bl}  X_{\bk+\bl}}{\widetilde C^\text{\tiny TT}_{\bk+\bl,\text{\tiny exp}}  \widetilde C^\text{\tiny BB}_{\bk+\bl,\text{\tiny exp}}}}_{[\nabla^a X^{\xandy TE}_{\xandy TT, \xandy BB}]_{\bk+\bl}}
 \underbrace{\frac{-ik^b C^\text{\tiny TE}_{\bk}  Y^*_{\bk}}{\widetilde C^\text{\tiny TT}_{\bk,\text{\tiny exp}}  \widetilde C^\text{\tiny BB}_{\bk,\text{\tiny exp}}}}
 _{[\nabla^b Y^{\xandy TE}_{\xandy TT, \xandy BB}]_{\bk}^*}
 \sin^2(2\Delta\varphi) \\
&= l_a l_b \frac{1}{4\pi}\int \frac{d^2\bx}{2\pi} e^{-i\bl\cdot\bx}\left([\nabla^a X^{\xandy TE}_{\xandy TT, \xandy BB}(\bx)][\nabla^b Y^{\xandy TE}_{\xandy TT, \xandy BB}(\bx)]-Re\{[\nabla^a  \check X^{\xandy TE}_{\xandy TT, \xandy BB}(\bx)][\nabla^b\check Y^{\xandy TE}_{\xandy TT, \xandy BB}(\bx)]^* \}\right).
\end{align*}
QED.
 
\end{proof}
{\flushleft\textcolor{blue}{$\uparrow$------------end long version---------}}\newline
} \fi


\begin{claim}[EB estimator]
  If $X_\bk$ and $Y_\bk$ are complex functions such that $X_{-\bk}=X^*_\bk$ and $Y_{-\bk}=Y^*_\bk$ then 
\begin{align*}
 \int \frac{d^2\bs k}{2\pi} g^{\xandy{E}{B}}_{\bs l,\bs k\,}   X_{\bs k+\bs l}  Y^*_{\bs k}  &=  -i  l_a   \int \frac{d^2\bx}{2\pi} e^{-i\bl\cdot\bx} \\
 &\quad\times Im\{ [\nabla^a\hat  X^{\xandy EE}_{\xandy EE}(\bx)] [\hat Y_{\xandy BB}(\bx)]^*\};
 \end{align*}
 \begin{align*}
 \int \frac{d^2\bk}{(2\pi)^2}  
   |g^{\xandy EB}_{\bl,\bk\,}|^2
   &X_{\bk+\bl} Y^*_{\bk}
= -\frac{ l_a l_b}{4\pi} \int \frac{d^2 \bx}{2\pi}  e^{-i\bl\cdot\bx} \\
&\times\Bigl(  [Y_{\xandy BB, \xandy BB}(\bx)] [\nabla^a\nabla^b X^{\xandy EE, \xandy EE}_{\xandy EE, \xandy EE}(\bx)] \nonumber\\
&\quad -Re\{ [\check Y_{\xandy BB, \xandy BB}(\bx)] [\nabla^a\nabla^b \check X^{\xandy EE, \xandy EE}_{\xandy EE, \xandy EE}(\bx)]^* \} \Bigr);
\end{align*}
\begin{align*}
 \int \frac{d^2 \bk}{(2\pi)^2}  g^{\xandy EB}_{\bl,\bk\,} &  g^{\xandy  E B}_{\bl,-\bk-\bl\,}   X_{\bk+\bl} Y^*_{\bk} = -\frac{ l_a l_b}{4\pi} \int \frac{d^2 \bx}{2\pi}  e^{-i\bl\cdot\bx} \\
 &\times \Bigl(  [\nabla^a X^{\xandy EE}_{\xandy EE, \xandy BB}(\bx)][\nabla^b Y^{\xandy EE}_{\xandy EE, \xandy BB}(\bx)]\nonumber \\
 &\quad-Re\{[\nabla^a  \check X^{\xandy EE}_{\xandy EE, \xandy BB}(\bx)][\nabla^b\check Y^{\xandy EE}_{\xandy EE, \xandy BB}(\bx)]^* \} \Bigr),
\end{align*}
where $g^{\xandy EB}_{\bl,\bk\,} \equiv 2\pi{f^{\xandy EB}_{\bl, \bk}}[\widetilde C^{\xandy EE}_{\bk+\bl,\text{\tiny exp}} \widetilde C^{\xandy BB}_{\bk,\text{\tiny exp}} ]^{-1} $ and $f^{\xandy EB}_{\bl, \bk}   \equiv \frac{1}{2\pi} [\bl \cdot (\bk + \bl) C^{\xandy EE}_{\bk+\bl} ]  \sin(2\Delta\varphi)$.
 \end{claim}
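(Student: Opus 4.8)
The plan is to follow the proof of the immediately preceding $TB$ claim essentially verbatim, because $f^{\xandy TB}_{\bl,\bk}$ and $f^{\xandy EB}_{\bl,\bk}$ have the identical functional form — a single gradient factor $\bl\cdot(\bk+\bl)$ multiplying one unlensed cross-spectrum and the rotation $\sin(2\Delta\varphi)$ — so that passing from $TB$ to $EB$ amounts only to the substitution $C^{\xandy TE}\mapsto C^{\xandy EE}$ in the numerator and $\widetilde C^{\xandy TT}_{\bk,\text{\tiny exp}}\mapsto\widetilde C^{\xandy EE}_{\bk,\text{\tiny exp}}$ in the denominator (the temperature leg being replaced by an $E$ leg). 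I would substitute $g^{\xandy EB}_{\bl,\bk}=2\pi f^{\xandy EB}_{\bl,\bk}[\widetilde C^{\xandy EE}_{\bk+\bl,\text{\tiny exp}}\widetilde C^{\xandy BB}_{\bk,\text{\tiny exp}}]^{-1}$ into each of the three integrands and reduce each to the Fourier transform of a pointwise product of gradient fields, using the same convolution identity (and its phase-weighted $\sin/\cos$ variant) that drives the other estimator claims. Note that the vanishing of the unlensed $B$ mode never enters here: it affects only the derivation of $f^{\xandy EB}_{\bl,\bk}$, which I take as given, and not the algebraic reduction of the weight.

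For the first identity I would write $\bl\cdot(\bk+\bl)=-il_a\cdot i(k^a+l^a)$, pull the constant $-il_a$ outside, and regroup the integrand as $[\nabla^a X^{\xandy EE}_{\xandy EE}]_{\bk+\bl}[Y_{\xandy BB}]^*_\bk$ carrying the weight $\sin(2\Delta\varphi)$; the phase-weighted convolution identity then converts this $\sin(2\Delta\varphi)$-weighted $\bk$-convolution into the real-space integral of $\mathrm{Im}\{[\nabla^a\hat X^{\xandy EE}_{\xandy EE}(\bx)][\hat Y_{\xandy BB}(\bx)]^*\}$, the hats encoding the spin-$2$ rotation $e^{i2\varphi}$ and the $\mathrm{Im}$ encoding the sine. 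For the second identity, $|g^{\xandy EB}_{\bl,\bk}|^2$ is a single product (the weight has only one gradient term), so both factors $i(k^a+l^a)$ land on the $\bk+\bl$ leg, producing $[\nabla^a\nabla^b X^{\xandy EE,\xandy EE}_{\xandy EE,\xandy EE}]$ against $[Y_{\xandy BB,\xandy BB}]^*$ with an overall $-l_al_b$ from $i^2$; applying $\sin^2(2\Delta\varphi)=\tfrac12(1-\cos 4\Delta\varphi)$ splits the result into the un-rotated term and, via $\cos 4\Delta\varphi=\mathrm{Re}\{e^{i4\varphi_{\bk+\bl}}e^{-i4\varphi_\bk}\}$, the check term $-\mathrm{Re}\{[\check Y_{\xandy BB,\xandy BB}][\nabla^a\nabla^b\check X^{\xandy EE,\xandy EE}_{\xandy EE,\xandy EE}]^*\}$, the $\tfrac12$ combining with the transform normalization into the prefactor $-l_al_b/4\pi$.

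For the third identity I would first simplify the second weight $g^{\xandy EB}_{\bl,-\bk-\bl}$: its gradient factor becomes $\bl\cdot(-\bk)$, its spectra are even ($C^{\xandy EE}_{-\bk}=C^{\xandy EE}_\bk$), and its rotation is $\sin(2(\varphi_{-\bk}-\varphi_{-\bk-\bl}))$, which by $\varphi_{-\bk}=\varphi_\bk+\pi$ equals $-\sin(2\Delta\varphi)$; the product of the two weights therefore carries $\sin^2(2\Delta\varphi)$ and one derivative on each leg, namely $i(k^a+l^a)$ on $X$ at $\bk+\bl$ and $-ik^b$ on $Y^*$ at $\bk$, giving $[\nabla^a X^{\xandy EE}_{\xandy EE,\xandy BB}]_{\bk+\bl}[\nabla^b Y^{\xandy EE}_{\xandy EE,\xandy BB}]^*_\bk$, after which the same $\tfrac12(1-\cos 4\Delta\varphi)$ split yields the stated un-rotated and check terms. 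The main obstacle is entirely the phase bookkeeping rather than any analytic difficulty: one must keep straight which spin rotation attaches to which leg ($e^{i2\varphi}$, the hat, for the linear-in-sine and cross terms; $e^{i4\varphi}$, the check, for the $\cos 4\Delta\varphi$ pieces), track the $\mathrm{Im}$/$\mathrm{Re}$ structure that guarantees the estimator is real, and propagate the antipodal shift $\varphi_{-\bk}=\varphi_\bk+\pi$ consistently so that the overall sign and normalization $\pm l_al_b/4\pi$ agree with the convention fixed by the earlier claims; the analytic content is otherwise identical to the $TB$ claim.
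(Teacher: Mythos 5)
Your route is the paper's route: substitute the single-term weight $g^{\xandy EB}_{\bl,\bk}$, peel the $\bl$ factors off as $-il_a$ (or $l_al_b$), absorb the spectra into the hat/check fields of the Appendix notation, and finish with the phase-weighted convolution identities ($\sin(2\Delta\varphi)\to \mathrm{Im}$ with hats, $\sin^2(2\Delta\varphi)=\tfrac12(1-\cos 4\Delta\varphi)\to$ check terms). Your first two identities are handled correctly and coincide, step for step, with the paper's own proof.

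The problem is the third identity, and it is exactly the sign you defer to bookkeeping ``so that the overall sign and normalization $\pm l_al_b/4\pi$ agree with the convention fixed by the earlier claims.'' Execute your own steps: the two flips you correctly identify --- the gradient factor $\bl\cdot(-\bk)=-\bl\cdot\bk$ and the rotation $\sin(2(\varphi_{-\bk}-\varphi_{-\bk-\bl}))=-\sin(2\Delta\varphi)$ --- cancel, so the product $g^{\xandy EB}_{\bl,\bk}\,g^{\xandy EB}_{\bl,-\bk-\bl}$ carries $+[\bl\cdot(\bk+\bl)][\bl\cdot\bk]\sin^2(2\Delta\varphi)$. With one derivative on each leg, $[\bl\cdot(\bk+\bl)][\bl\cdot\bk]=l_al_b\,[i(k^a+l^a)]\,[-ik^b]$, i.e.\ the imaginary units contribute $i(-i)=+1$, so the prefactor is $+l_al_b$ and the $\sin^2$ split gives $+\frac{l_al_b}{4\pi}\int\frac{d^2\bx}{2\pi}e^{-i\bl\cdot\bx}(\cdots)$ --- the \emph{opposite} of the claim's printed $-\frac{l_al_b}{4\pi}$. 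This cannot be made to agree by adjusting conventions: the sign difference between the second and third identities is structural, since in the second identity both derivatives land on the $X$ leg ($i\cdot i=-1$) while in the third one lands on each leg and conjugation flips one $i$. Indeed the paper's own long-version proof of this claim ends with $+l_al_b\frac{1}{4\pi}$; the minus sign printed in the claim statement (and likewise in the TB claim) is inconsistent with that derivation, and appears to be a typo propagated from the TE case, where the cross terms genuinely inherit an explicit minus sign from expanding the two-term weight $f^{\xandy TE}_{\bl,\bk}$. So your derivation, carried out carefully, is sound, but your assertion that it ``yields the stated un-rotated and check terms'' is false as to sign; a complete write-up must either conclude with $+\frac{l_al_b}{4\pi}$ or explicitly flag the claim's printed sign as erroneous.
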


\if\Ver\LongVer{ 
{\flushleft\textcolor{blue}{$\downarrow$---------begin long version---------}}\newline

\begin{proof}

For the estimator estimator $\hat\phi^{\xandy{E}{B}}$ notice that if we set $\Delta \varphi \equiv \varphi_{\bk+\bl}-\varphi_{\bk}$ then $f^{\xandy EB}_{\bl, \bk}   = \frac{1}{2\pi} [\bl \cdot (\bk + \bl) C^{\xandy EE}_{\bk+\bl} ]  \sin(2\Delta\varphi)$. Since $\gw_{\bs l, \bs k} = 2\pi{f^\xy_{\bl,\bk}}[\widetilde C^\text{\tiny X\!X}_{\bk+\bl,\text{\tiny exp}} \widetilde C^\text{\tiny $Y\!Y$}_{\bk,\text{\tiny exp}} ]^{-1} $ we have that
\begin{align*}
 \int \frac{d^2\bs k}{2\pi} g^{\xandy{E}{B}}_{\bs l, \bs k}   X_{\bs k+\bs l}  Y^*_{\bs k} & =    \int \frac{d^2\bs k}{2\pi}   \Bigl[\bl \cdot (\bk + \bl) C^{\xandy EE}_{\bk+\bl}   \sin(2\Delta\varphi) \Bigr]  \frac{ X_{\bs k+\bs l}}{\widetilde C^{\xandy  EE}_{\bk+\bl,\text{\tiny exp}}}  \frac{Y^*_{\bs k} }{\widetilde C^\text{\tiny $B\!B$}_{\bk,\text{\tiny exp}}} \\
 &=  -i l_a \int \frac{d^2\bs k}{2\pi}   \Bigl[i (k^a + l^a) C^{\xandy EE}_{\bk+\bl}   \sin(2\Delta\varphi) \Bigr] \frac{ X_{\bs k+\bs l}}{\widetilde C^{\xandy EE}_{\bk+\bl,\text{\tiny exp}}}  \frac{Y^*_{\bs k} }{\widetilde C^\text{\tiny $B\!B$}_{\bk,\text{\tiny exp}}}\\
  &=  -i l_a \int \frac{d^2\bs k}{2\pi}  \underbrace{ i (k^a + l^a) C^{\xandy EE}_{\bk+\bl}    \frac{ X_{\bs k+\bs l}}{\widetilde C^{\xandy EE}_{\bk+\bl,\text{\tiny exp}}}  }
  _{[\nabla^a X^{\xandy EE}_{\xandy EE}]_{\bk+\bl}}
  \underbrace{\frac{Y^*_{\bs k} }{\widetilde C^\text{\tiny $B\!B$}_{\bk,\text{\tiny exp}}}}
  _{[Y_{\xandy BB}]^*_\bk} 
  \sin(2\Delta\varphi)\\
 &=  -i  l_a   \int \frac{d^2\bx}{2\pi} e^{-i\bl\cdot\bx}  Im\{ [\nabla^a\hat  X^{\xandy EE}_{\xandy EE}(\bx)] [\hat Y_{\xandy BB}(\bx)]^*\}
 \end{align*}

Also notice that
\begin{align*}
\int \frac{d^2\bk}{(2\pi)^2}  
   |g^{\xandy EB}_\bl(\bk)|^2
  X_{\bk+\bl} Y^*_{\bk} &= \int \frac{d^2 \bk}{(2\pi)^2}  \frac{[\bl\cdot (\bk + \bl)C^{\xandy EE}_{\bk+\bl} ]^2 \sin^2(2\Delta\varphi)}{[\widetilde C^{\xandy EE}_{\bk+\bl,\text{\tiny exp}}]^2 [\widetilde C^{\xandy BB}_{\bk,\text{\tiny exp}}]^2}  X_{\bk+\bl} Y^*_{\bk}\\
&= - l_a l_b \int \frac{d^2 \bk}{(2\pi)^2}
 \underbrace{\frac{i(k^a + l^a)i(k^b + l^b)[C^{\xandy 
 EE}_{\bk+\bl} ]^2X_{\bk+\bl}}{[\widetilde C^{\xandy EE}_{\bk+\bl,\text{\tiny exp}}]^2}}
 _{ [\nabla^a\nabla^b X^{\xandy EE, \xandy EE}_{\xandy EE, \xandy EE}]_{\bk+\bl}}
 \underbrace{\frac{ Y^*_\bk }{[\widetilde C^\text{\tiny BB}_{\bk,\text{\tiny exp}}]^2}}
 _{[Y_{\xandy BB, \xandy BB}]^*_\bk} 
 \sin^2(2\Delta\varphi) \\
 &= -l_a l_b \frac{1}{4\pi}\int \frac{d^2\bx}{2\pi} e^{-i\bl\cdot\bx}\left( [Y_{\xandy BB, \xandy BB}(\bx)] [\nabla^a\nabla^b X^{\xandy EE, \xandy EE}_{\xandy EE, \xandy EE}(\bx)]-Re\{ [\check Y_{\xandy BB, \xandy BB}(\bx)] [\nabla^a\nabla^b \check X^{\xandy EE, \xandy EE}_{\xandy EE, \xandy EE}(\bx)]^* \}\right) 
 \end{align*}
Also  notice that
\begin{align*}
\int \frac{d^2\bk}{(2\pi)^2}  
  g^{\xandy EB}_\bl(\bk)g^{\xandy EB}_\bl(-\bk-\bl)
  X_{\bk+\bl} Y^*_{\bk} &= \int \frac{d^2 \bk}{(2\pi)^2} 
   \frac{ \bl\cdot (\bk + \bl)\bl\cdot (-\bk) C^\text{\tiny EE}_{\bk+\bl} C^\text{\tiny EE}_{\bk} (-1) \sin^2(2\Delta\varphi)}{\widetilde C^\text{\tiny EE}_{\bk+\bl,\text{\tiny exp}} \widetilde C^\text{\tiny BB}_{\bk,\text{\tiny exp}}\widetilde C^\text{\tiny EE}_{\bk,\text{\tiny exp}} \widetilde C^\text{\tiny BB}_{\bk+\bl,\text{\tiny exp}}  }  X_{\bk+\bl} Y^*_{\bk}\\
&=  l_a l_b \int \frac{d^2 \bk}{(2\pi)^2}
 \underbrace{\frac{i(k^a + l^a) C^\text{\tiny EE}_{\bk+\bl}  X_{\bk+\bl}}{\widetilde C^{\xandy EE}_{\bk+\bl,\text{\tiny exp}}  \widetilde C^\text{\tiny BB}_{\bk+\bl,\text{\tiny exp}}}}_{[\nabla^a X^{\xandy EE}_{\xandy EE, \xandy BB}]_{\bk+\bl}}
 \underbrace{\frac{-ik^b C^\text{\tiny EE}_{\bk}  Y^*_{\bk}}{\widetilde C^\text{\tiny EE}_{\bk,\text{\tiny exp}}  \widetilde C^\text{\tiny BB}_{\bk,\text{\tiny exp}}}}
 _{[\nabla^b Y^{\xandy EE}_{\xandy EE, \xandy BB}]_{\bk}^*}
 \sin^2(2\Delta\varphi) \\
&= l_a l_b \frac{1}{4\pi}\int \frac{d^2\bx}{2\pi} e^{-i\bl\cdot\bx}\left([\nabla^a X^{\xandy EE}_{\xandy EE, \xandy BB}(\bx)][\nabla^b Y^{\xandy EE}_{\xandy EE, \xandy BB}(\bx)]-Re\{[\nabla^a  \check X^{\xandy EE}_{\xandy EE, \xandy BB}(\bx)][\nabla^b\check Y^{\xandy EE}_{\xandy EE, \xandy BB}(\bx)]^* \}\right).
\end{align*}
QED.
 
\end{proof}
{\flushleft\textcolor{blue}{$\uparrow$------------end long version---------}}\newline
} \fi


\begin{claim}[EE estimator]
  If $X_\bk$ and $Y_\bk$ are complex functions such that $X_{-\bk}=X^*_\bk$ and $Y_{-\bk}=Y^*_\bk$ then  
\begin{align*}
 \int \frac{d^2\bs k}{2\pi} g^{\xandy{E}{E}}_{\bs l,\bs k\,}   X_{\bs k+\bs l}  Y^*_{\bs k}  &=  -i  l_a   \int \frac{d^2\bx}{2\pi} e^{-i\bl\cdot\bx} \\
 &\times\Bigl(  Re\{ [\nabla^a\hat  X^{\xandy EE}_{\xandy EE}(\bx)] [\hat Y_{\xandy EE}(\bx)]^*\}  \nonumber\\
 &\quad +  Re\{ [\nabla^a\hat  Y^{\xandy EE}_{\xandy EE}(\bx)] [\hat X_{\xandy EE}(\bx)]^*\} 
 \Bigr);
  \end{align*}
 \begin{align*}
 \int \frac{d^2\bk}{(2\pi)^2}  
   |g^{\xandy EE}_{\bl,\bk\,}|^2
  & X_{\bk+\bl} Y^*_{\bk}
= -\frac{ l_a l_b}{4\pi} \int \frac{d^2 \bx}{2\pi}  e^{-i\bl\cdot\bx} \\
 &\times\Bigl( 
  [\nabla^a\nabla ^b  X^{\xandy EE,\xandy EE}_{\xandy EE, \xandy EE}(\bx)]  [ Y^{}_{\xandy EE, \xandy EE}(\bx)]\nonumber  \\
 &\quad+Re\{ [\nabla^a\nabla ^b \check X^{\xandy EE,\xandy EE}_{\xandy EE, \xandy EE}(\bx)]  [\check Y^{}_{\xandy EE, \xandy EE}(\bx)]^*  \}  \nonumber\\
 &\quad+  [X^{}_{\xandy EE, \xandy EE}(\bx)]  [\nabla^a\nabla ^b  Y^{\xandy EE,\xandy EE}_{\xandy EE, \xandy EE}(\bx)] \nonumber \\
 &\quad+Re\{ [\check X^{}_{\xandy EE, \xandy EE}(\bx)]  [\nabla^a\nabla ^b  \check Y^{\xandy EE,\xandy EE}_{\xandy EE, \xandy EE}(\bx)]^*  \} \nonumber \\
 &\quad+  2 [\nabla^a  X^{\xandy EE}_{\xandy EE, \xandy EE}(\bx)][\nabla^b  Y^{\xandy EE}_{\xandy EE, \xandy EE}(\bx)]  \nonumber\\
 &\quad+Re\{  2 [\nabla^a  \check X^{\xandy EE}_{\xandy EE, \xandy EE}(\bx)][\nabla^b  \check Y^{\xandy EE}_{\xandy EE, \xandy EE}(\bx)]^*  \}
 \Bigr);
\end{align*}
\begin{align*}
 \int \frac{d^2 \bk}{(2\pi)^2}  g^{\xandy EE}_{\bl,\bk\,}   g^{\xandy  EE}_{\bl,-\bk-\bl\,}   X_{\bk+\bl} Y^*_{\bk} =  \int \frac{d^2\bk}{(2\pi)^2}  
   |g^{\xandy EE}_{\bl,\bk\,}|^2
   X_{\bk+\bl} Y^*_{\bk},
\end{align*}
where $g^{\xandy EE}_{\bl,\bk\,} \equiv 2\pi{f^{\xandy EE}_{\bl, \bk}}[\widetilde C^{\xandy EE}_{\bk+\bl,\text{\tiny exp}} \widetilde C^{\xandy EE}_{\bk,\text{\tiny exp}} ]^{-1} $ and $f^{\xandy EE}_{\bl, \bk}   \equiv \frac{1}{2\pi} [\bl \cdot (\bk + \bl) C^\text{\tiny $EE$}_{\bk+\bl}  -\bl \cdot \bk  C^\text{\tiny $EE$}_{\bk}] \cos(2\Delta\varphi) $.
 \end{claim}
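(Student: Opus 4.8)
The plan is to follow the same two-step scheme already used for the $TT$, $TE$ and $TB$ estimators, since the $EE$ weight is simply the fully symmetric analogue: its generating function $f^{\xandy EE}_{\bl,\bk}=\frac{1}{2\pi}[\bl\cdot(\bk+\bl)C^{\xandy EE}_{\bk+\bl}-\bl\cdot\bk\,C^{\xandy EE}_{\bk}]\cos(2\Delta\varphi)$ combines the two-term bracket of $f^{\xandy TT}$ with an overall $\cos(2\Delta\varphi)$ factor. In every case the mechanism is the same: after substituting $g^{\xandy EE}_{\bl,\bk}=2\pi f^{\xandy EE}_{\bl,\bk}[\widetilde C^{\xandy EE}_{\bk+\bl,\text{\tiny exp}}\widetilde C^{\xandy EE}_{\bk,\text{\tiny exp}}]^{-1}$, I write each Cartesian factor as $\bl\cdot(\bk+\bl)=-i l_a\cdot i(k^a+l^a)$ and $\bl\cdot\bk=-il_a\cdot ik^a$, so that $i(k^a+l^a)$ and $ik^a$ act as the position-space gradient $\nabla^a$ on the appropriately weighted fields, while the spectra in numerator and denominator produce the super/sub-script fields of the notation. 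The $\cos(2\Delta\varphi)$ factor is handled exactly as in the $TE$ and $TB$ proofs: writing it through $e^{\pm2i\Delta\varphi}=e^{\pm2i\varphi_{\bk+\bl}}e^{\mp2i\varphi_{\bk}}$ converts the weighted fields into their hatted versions $\hat X_\bk=e^{i2\varphi_\bk}X_\bk$, with the two conjugate exponentials combining into a single $Re\{\cdot\}$ after the frequency integral is turned into a position-space product by the same convolution identities used above.

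For the first identity I split the integrand into its $C^{\xandy EE}_{\bk+\bl}$ piece and its $C^{\xandy EE}_{\bk}$ piece, each still carrying the common $\cos(2\Delta\varphi)$. Applying the hatted convolution identity to the first piece, with the gradient landing on $X_{\bk+\bl}$, yields $Re\{[\nabla^a\hat X^{\xandy EE}_{\xandy EE}(\bx)][\hat Y_{\xandy EE}(\bx)]^*\}$; in the second piece the factor $\bl\cdot\bk=-il_a\cdot ik^a$ places the gradient instead on $Y^*_{\bk}$ (through $[ik^aC^{\xandy EE}_\bk]^*$), giving the $X\!\leftrightarrow\!Y$ mirror term $Re\{[\nabla^a\hat Y^{\xandy EE}_{\xandy EE}(\bx)][\hat X_{\xandy EE}(\bx)]^*\}$ and reproducing the claimed symmetric sum.

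The second identity is the real work and the main obstacle. I expand $|g^{\xandy EE}_{\bl,\bk}|^2\propto[\bl\cdot(\bk+\bl)C^{\xandy EE}_{\bk+\bl}-\bl\cdot\bk\,C^{\xandy EE}_{\bk}]^2\cos^2(2\Delta\varphi)$ into the three monomials of the squared bracket (the $(\bk+\bl)$--$(\bk+\bl)$, the $\bk$--$\bk$, and the doubled cross term), each multiplied by $\cos^2(2\Delta\varphi)$. Using $\cos^2(2\Delta\varphi)=\tfrac12[1+\cos(4\Delta\varphi)]$, the constant half produces ordinary position-space products and supplies the first, third and fifth lines of the claim, namely $[\nabla^a\nabla^b X][Y]$, $[X][\nabla^a\nabla^b Y]$ and $2[\nabla^a X][\nabla^b Y]$ with all-$EE$ weights, while the $\cos(4\Delta\varphi)$ half, written through $e^{\pm4i\Delta\varphi}=e^{\pm4i\varphi_{\bk+\bl}}e^{\mp4i\varphi_{\bk}}$, promotes every factor to a checked field $\check X_\bk=e^{i4\varphi_\bk}X_\bk$ and produces the matching checked $Re$-terms on the even lines. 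Identifying $i(k^a+l^a)i(k^b+l^b)$ and $ik^aik^b$ as the second derivatives $\nabla^a\nabla^b$ and collecting the six contributions gives the stated expression. The difficulty here is purely bookkeeping: keeping the overall $-l_al_b/4\pi$ normalization consistent (the extra $\tfrac12$ from $\cos^2$ is exactly what turns the $2\pi$ of the single-power $TT$ formula into $4\pi$), and correctly assigning which of the six terms carry hats versus checks and which come with the combinatorial factor $2$.

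Finally, the third identity is immediate from a symmetry of the weight rather than a fresh transform. Under the substitution $\bk\mapsto-\bk-\bl$ one has $\bk+\bl\mapsto-\bk$, the magnitude-dependent spectra are invariant, the two bracket terms exchange while the sign flips $\bl\cdot(\bk+\bl)\mapsto-\bl\cdot\bk$ and $\bl\cdot\bk\mapsto-\bl\cdot(\bk+\bl)$ cancel the exchange, and $\Delta\varphi\mapsto-\Delta\varphi$ leaves $\cos(2\Delta\varphi)$ fixed; hence $f^{\xandy EE}_{\bl,-\bk-\bl}=f^{\xandy EE}_{\bl,\bk}$ and likewise $g^{\xandy EE}_{\bl,-\bk-\bl}=g^{\xandy EE}_{\bl,\bk}$. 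Since $g^{\xandy EE}_{\bl,\bk}$ is real, $g^{\xandy EE}_{\bl,\bk}g^{\xandy EE}_{\bl,-\bk-\bl}=|g^{\xandy EE}_{\bl,\bk}|^2$ pointwise, so the two integrals coincide. This is precisely the symmetry that fails for $TE$, $TB$ and $EB$, where $\cos(2\Delta\varphi)$ multiplies only one bracket term, which is why those estimators require a genuinely separate third formula.
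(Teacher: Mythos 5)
Your proposal is correct and follows essentially the same route as the paper's own proof: split the bracket in $f^{\xandy EE}_{\bl,\bk}$ into its $C^{\xandy EE}_{\bk+\bl}$ and $C^{\xandy EE}_{\bk}$ pieces with gradients identified via $i(k^a+l^a)$ and $ik^a$, convert the $\cos(2\Delta\varphi)$ factor into $Re\{\hat{\cdot}\,\hat{\cdot}^*\}$ products, and handle $\cos^2(2\Delta\varphi)$ by the half-angle split that generates the plain terms plus the checked $Re\{\check{\cdot}\,\check{\cdot}^*\}$ terms with the $1/(4\pi)$ normalization (the paper's Lemmas \ref{LEM1}, \ref{LEM2} and \ref{LEM11}). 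Your explicit symmetry argument $g^{\xandy EE}_{\bl,-\bk-\bl}=g^{\xandy EE}_{\bl,\bk}$ for the third identity is also correct; the paper merely asserts that equality without writing out this justification.
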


\if\Ver\LongVer{ 
{\flushleft\textcolor{blue}{$\downarrow$---------begin long version---------}}\newline

\begin{proof}

For the estimator estimator $\hat\phi^{\xandy{E}{E}}$ notice that if we set $\Delta \varphi \equiv \varphi_{\bk+\bl}-\varphi_{\bk}$ then $f^{\xandy EE}_{\bl, \bk}   =\frac{1}{2\pi} [\bl \cdot (\bk + \bl) C^\text{\tiny $EE$}_{\bk+\bl}  -\bl \cdot \bk  C^\text{\tiny $EE$}_{\bk}] \cos(2\Delta\varphi) $. Since $\gw_{\bs l, \bs k} = 2\pi{f^\xy_{\bl,\bk}}[\widetilde C^\text{\tiny X\!X}_{\bk+\bl,\text{\tiny exp}} \widetilde C^\text{\tiny $Y\!Y$}_{\bk,\text{\tiny exp}} ]^{-1} $ we have that
\begin{align*}
 \int \frac{d^2\bs k}{2\pi} g^{\xandy{E}{E}}_{\bs l, \bs k}   X_{\bs k+\bs l}  Y^*_{\bs k} & =    \int \frac{d^2\bs k}{2\pi} [\bl \cdot (\bk + \bl) C^\text{\tiny $EE$}_{\bk+\bl}  -\bl \cdot \bk  C^\text{\tiny $EE$}_{\bk}] \cos(2\Delta\varphi)  \frac{ X_{\bs k+\bs l}}{\widetilde C^{\xandy  EE}_{\bk+\bl,\text{\tiny exp}}}  \frac{Y^*_{\bs k} }{\widetilde C^{\xandy EE}_{\bk,\text{\tiny exp}}} \\
 &=  -i l_a \int \frac{d^2\bs k}{2\pi}   \Bigl[i (k^a + l^a) C^{\xandy EE}_{\bk+\bl}   \cos(2\Delta\varphi) \Bigr] \frac{ X_{\bs k+\bs l}}{\widetilde C^{\xandy EE}_{\bk+\bl,\text{\tiny exp}}}  \frac{Y^*_{\bs k} }{\widetilde C^{\xandy EE}_{\bk,\text{\tiny exp}}}
  +i l_a \int \frac{d^2\bs k}{2\pi}   \Bigl[i k^a  C^{\xandy EE}_{\bk}   \cos(2\Delta\varphi) \Bigr] \frac{ X_{\bs k+\bs l}}{\widetilde C^{\xandy EE}_{\bk+\bl,\text{\tiny exp}}}  \frac{Y^*_{\bs k} }{\widetilde C^{\xandy EE}_{\bk,\text{\tiny exp}}}
 \\
  &=  -i l_a \int \frac{d^2\bs k}{2\pi}  \underbrace{ i (k^a + l^a) C^{\xandy EE}_{\bk+\bl}    \frac{ X_{\bs k+\bs l}}{\widetilde C^{\xandy EE}_{\bk+\bl,\text{\tiny exp}}}  }
  _{[\nabla^a X^{\xandy EE}_{\xandy EE}]_{\bk+\bl}}
  \underbrace{\frac{Y^*_{\bs k} }{\widetilde C^{\xandy EE}_{\bk,\text{\tiny exp}}}}
  _{[Y_{\xandy EE}]^*_\bk} 
  \cos(2\Delta\varphi) 
  - i l_a \int \frac{d^2\bs k}{2\pi}  \underbrace{   \frac{ X_{\bs k+\bs l}}{\widetilde C^{\xandy EE}_{\bk+\bl,\text{\tiny exp}}}  }
  _{[ X^{}_{\xandy EE}]_{\bk+\bl}}
  \underbrace{\frac{ (i k^a C^{\xandy EE}_{\bk} Y_\bk)^* }{\widetilde C^{\xandy EE}_{\bk,\text{\tiny exp}}}}
  _{[\nabla^aY^{\xandy EE}_{\xandy EE}]^*_\bk} 
  \cos(2\Delta\varphi)\\
 &=  -i  l_a   \int \frac{d^2\bx}{2\pi} e^{-i\bl\cdot\bx} \Bigl( Re\{ [\nabla^a\hat  X^{\xandy EE}_{\xandy EE}(\bx)] [\hat Y_{\xandy EE}(\bx)]^*\} +  Re\{ [\nabla^a\hat  Y^{\xandy EE}_{\xandy EE}(\bx)] [\tilde X_{\xandy EE}(\bx)]^*\} \Bigr)
 \end{align*}

To finish notice that
\begin{align*}
\int \frac{d^2\bk}{(2\pi)^2}  
   |g^{\xandy EE}_{\bl,\bk}|^2
  X_{\bk+\bl} Y^*_{\bk}
 &=  \int \frac{d^2 \bk}{(2\pi)^2}  \frac{ \bigl[\bl\cdot(\bk + \bl)C^{\xandy EE}_{\bk+\bl} - \bl\cdot\bk C^{\xandy EE}_\bk  \bigr]^2 \cos^2(2\Delta\varphi) }{[\widetilde C^{\xandy EE}_{\bk+\bl,\text{\tiny exp}}]^2 [\widetilde C^{\xandy EE}_{\bk,\text{\tiny exp}}]^2}{X_{\bk+\bl}}{Y^*_\bk}\\
&= I + II + III
\end{align*}
where
\begin{align*}
I&= - l_a l_b \int \frac{d^2 \bk}{(2\pi)^2}  
\underbrace{\frac{i(k^a+l^a) i(k^b+l^b) [C^{\xandy EE}_{\bk+\bl}]^2 X_{\bk +\bl}}{[\widetilde C^{\xandy EE}_{\bk+\bl,\text{\tiny exp}}]^2 } }_{ [\nabla^a\nabla ^b  X^{\xandy EE,\xandy EE}_{\xandy EE, \xandy EE}]_{\bk+\bl}}
\underbrace{\frac{Y^*_\bk}{ [\widetilde C^{\xandy EE}_{\bk,\text{\tiny exp}}]^2}}_{ [ Y^{}_{\xandy EE, \xandy EE}]^*_\bk} \cos^2(2\Delta\varphi) \\
&=  - l_a l_b\frac{1}{4\pi} \int \frac{d^2 \bx}{2\pi}  e^{-i\bl\cdot\bx} \Bigl(  [\nabla^a\nabla ^b  X^{\xandy EE,\xandy EE}_{\xandy EE, \xandy EE}(\bx)]  [ Y^{}_{\xandy EE, \xandy EE}(\bx)]  +Re\{ [\nabla^a\nabla ^b \check X^{\xandy EE,\xandy EE}_{\xandy EE, \xandy EE}(\bx)]  [\check Y^{}_{\xandy EE, \xandy EE}(\bx)]^*  \}\Bigr)
\end{align*}
and
\begin{align*}
II&=-l_a l_b \int \frac{d^2 \bk}{(2\pi)^2}
\underbrace{\frac{X_{\bk+\bl}}{[\widetilde C^{\xandy EE}_{\bk+\bl,\text{\tiny exp}}]^2}}_{  [X^{}_{\xandy EE, \xandy EE}]_{\bk+\bl}}  
\underbrace{\frac{  ik^a ik^b \bigl[ C^{\xandy EE}_\bk  \bigr]^2Y^*_\bk}{ [\widetilde C^{\xandy EE}_{\bk,\text{\tiny exp}}]^2 }}_{  [\nabla^a\nabla ^b  Y^{\xandy EE,\xandy EE}_{\xandy EE, \xandy EE}]^*_{\bk}}  \cos^2(2\Delta\varphi) \\
&=  - l_a l_b\frac{1}{4\pi} \int \frac{d^2 \bx}{2\pi}  e^{-i\bl\cdot\bx} \Bigl(  [X^{}_{\xandy EE, \xandy EE}(\bx)]  [\nabla^a\nabla ^b  Y^{\xandy EE,\xandy EE}_{\xandy EE, \xandy EE}(\bx)] +Re\{ [\check X^{}_{\xandy EE, \xandy EE}(\bx)]  [\nabla^a\nabla ^b  \check Y^{\xandy EE,\xandy EE}_{\xandy EE, \xandy EE}(\bx)]^*  \}\Bigr)
\end{align*}
and
\begin{align*}
III&=- 2 l_a l_b \int \frac{d^2 \bk}{(2\pi)^2} 
\underbrace{\frac{ i(k^a + l^a)C^{\xandy EE}_{\bk+\bl} X_{\bk+\bl}}{[\widetilde C^{\xandy EE}_{\bk+\bl,\text{\tiny exp}}]^2}}_{  [\nabla^a  X^{\xandy EE}_{\xandy EE, \xandy EE}]_{\bk+\bl}}
\underbrace{\frac{ - ik^b C^{\xandy EE}_\bk Y^*_\bk }{ [\widetilde C^{\xandy EE}_{\bk,\text{\tiny exp}}]^2}} _{  [\nabla^b  Y^{\xandy EE}_{\xandy EE, \xandy EE}]^*_{\bk}} \cos^2(2\Delta\varphi) \\
&=  - l_a l_b\frac{1}{4\pi} \int \frac{d^2 \bx}{2\pi}  e^{-i\bl\cdot\bx} \Bigl(  2 [\nabla^a  X^{\xandy EE}_{\xandy EE, \xandy EE}(\bx)][\nabla^b  Y^{\xandy EE}_{\xandy EE, \xandy EE}(\bx)]  +Re\{  2 [\nabla^a  \check X^{\xandy EE}_{\xandy EE, \xandy EE}(\bx)][\nabla^b  \check Y^{\xandy EE}_{\xandy EE, \xandy EE}(\bx)]^*  \}\Bigr)
\end{align*}

\end{proof}

{\flushleft\textcolor{blue}{$\uparrow$------------end long version---------}}\newline
} \fi

\if\Ver\LongVer{ 
{\flushleft\textcolor{blue}{$\downarrow$---------begin long version---------}}\newline


\begin{lemma}
\label{LEM1}
For any functions $A_\bl$ and $B_\bl$
\begin{equation}
\int \frac{d^2\bs k}{2\pi}  A_{\bs k +\bs l} B^*_\bk= \int \frac{d^2\bx}{2\pi} e^{-i\bl\cdot\bx} A(\bx) B^*(\bx) 
\end{equation}
\end{lemma}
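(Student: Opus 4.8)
This statement is simply the convolution theorem specialized to the paper's unitary Fourier convention $X_{\bl}=\int\frac{d^2\bx}{2\pi}e^{-i\bx\cdot\bl}X(\bx)$, whose inverse reads $X(\bx)=\int\frac{d^2\bl}{2\pi}e^{i\bx\cdot\bl}X_{\bl}$. The plan is to start from the right-hand side, which is nothing but the Fourier coefficient at frequency $\bl$ of the pointwise product $A(\bx)B^*(\bx)$, and show that transforming a product turns into a shifted integral over the individual transforms. Concretely, I would insert the inverse-transform representations $A(\bx)=\int\frac{d^2\bk'}{2\pi}e^{i\bk'\cdot\bx}A_{\bk'}$ and $B^*(\bx)=\int\frac{d^2\bk}{2\pi}e^{-i\bk\cdot\bx}B^*_{\bk}$ into the right-hand side.

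After substitution the right-hand side becomes a triple integral
\begin{align*}
\int\frac{d^2\bx}{2\pi}e^{-i\bl\cdot\bx}A(\bx)B^*(\bx)
&=\int\frac{d^2\bk'}{2\pi}\int\frac{d^2\bk}{2\pi}A_{\bk'}B^*_{\bk}\int\frac{d^2\bx}{2\pi}e^{i(\bk'-\bk-\bl)\cdot\bx}.
\end{align*}
The next step is to interchange the order of integration (Fubini) and evaluate the inner $\bx$-integral. Using the paper's delta convention $\delta_{\bl}\equiv\int\frac{d^2\bx}{(2\pi)^2}e^{i\bx\cdot\bl}$, this inner integral equals $2\pi\,\delta_{\bk'-\bk-\bl}$. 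Substituting and collapsing the $\bk'$-integral against the delta function sets $\bk'=\bk+\bl$, so exactly one factor of $2\pi$ cancels and one is left with $\int\frac{d^2\bk}{2\pi}A_{\bk+\bl}B^*_{\bk}$, which is the left-hand side. The reality conditions $A_{-\bk}=A^*_{\bk}$, $B_{-\bk}=B^*_{\bk}$ stated elsewhere in the paper are not actually needed for this identity; it holds for arbitrary $A,B$.

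There is no conceptual difficulty here: the only thing to watch is the bookkeeping of the $2\pi$ factors, which is dictated entirely by the unitary convention and the normalization of $\delta_{\bl}$, and the legitimacy of the Fubini interchange together with the representation of $\int\frac{d^2\bx}{2\pi}e^{i(\bk'-\bk-\bl)\cdot\bx}$ as $2\pi\delta_{\bk'-\bk-\bl}$. For rapidly decaying (e.g.\ Schwartz) fields these manipulations are rigorous; in the general case the equality is understood in the distributional sense, exactly as the Fourier identities are used throughout the rest of the paper, so no further justification beyond the convolution theorem is required.
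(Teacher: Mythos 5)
Your proof is correct, but it takes a different route from the paper's. The paper proceeds in the opposite direction (LHS $\to$ RHS) and expands only \emph{one} factor: it writes $A_{\bk+\bl}=\int\frac{d^2\bx}{2\pi}e^{-i(\bk+\bl)\cdot\bx}A(\bx)$, swaps the order of integration, and then recognizes the remaining $\bk$-integral $\bigl[\int\frac{d^2\bk}{2\pi}e^{i\bk\cdot\bx}B_\bk\bigr]^*$ as $B^*(\bx)$ — so no delta function ever appears, and the only analytic step needed is Fubini. You instead expand \emph{both} fields in their inverse transforms, which produces a triple integral and forces you to invoke the distributional identity $\int\frac{d^2\bx}{2\pi}e^{i(\bk'-\bk-\bl)\cdot\bx}=2\pi\,\delta_{\bk'-\bk-\bl}$ before collapsing the $\bk'$-integral. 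Both arguments are valid at the level of rigor used throughout the paper, and your $2\pi$ bookkeeping is right (the $2\pi$ from the delta cancels the $\frac{1}{2\pi}$ in the $\bk'$ measure); your observation that the reality conditions $A_{-\bk}=A_\bk^*$, $B_{-\bk}=B_\bk^*$ are not needed here is also correct — the paper states the lemma for arbitrary functions and reserves those conditions for Lemmas \ref{LEM2} and \ref{LEM3}. What the paper's route buys is economy: one substitution, one Fubini swap, no distributions. What your route buys is a mechanical template — insert all transforms, integrate out $\bx$ to get momentum conservation, collapse deltas — which is the pattern actually used in the four-point computations elsewhere in the paper, so it generalizes more transparently even if it is slightly heavier for this particular identity.
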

\begin{proof}
{
\footnotesize
\begin{align*}
\int \frac{d^2\bs k}{2\pi}  A_{\bs k +\bs l} B^*_\bk &= \int \frac{d^2\bs k}{2\pi}   B^*_\bk \int \frac{d^2\bx}{2\pi} e^{-i (\bs k +\bs l)\cdot \bx} A(\bx) \\
&= \int \frac{d^2\bx}{2\pi}e^{-i\bs l\cdot \bx }   A(\bx)\left[ \int \frac{d^2\bs k}{2\pi}    e^{i \bs k \cdot \bx} B_{\bk} \right]^*\\
&= \int \frac{d^2\bx}{2\pi}e^{-i\bs l\cdot \bx }   A(\bx) B^*(\bx) 
\end{align*}
QED.}
\end{proof}

\begin{lemma}
\label{LEM2}
For any functions $A_{-\bl}^* = A_\bl$ and $B_{-\bl}^* = B_\bl$ then
\begin{equation}
\int \frac{d^2\bs k}{2\pi}  A_{\bs k +\bs l} B^*_\bk \cos(2\varphi_{\bk + \bl}- 2\varphi_{\bk})= \int \frac{d^2\bx}{2\pi} e^{-i\bl\cdot\bx} Re[\hat A(\bx) \hat B^*(\bx) ]
\end{equation}
where $\hat A_\bk = A_\bk e^{i2\varphi_\bk}$,  $\hat B_\bk = B_\bk e^{i2\varphi_\bk}$ and $\varphi_\bk$ is the phase (or angle) fo $\bk$
\end{lemma}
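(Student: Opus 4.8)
The plan is to expand the cosine into complex exponentials and recognize each resulting piece as an instance of Lemma~\ref{LEM1} after absorbing the phase factors into the fields. Writing $\cos(2\varphi_{\bk+\bl}-2\varphi_\bk)=\tfrac12\bigl(e^{i(2\varphi_{\bk+\bl}-2\varphi_\bk)}+e^{-i(2\varphi_{\bk+\bl}-2\varphi_\bk)}\bigr)$, the integrand splits into two terms. In the first I would regroup the phases as $e^{i2\varphi_{\bk+\bl}}A_{\bk+\bl}$ and $e^{i2\varphi_\bk}B_\bk$, so that the integrand is exactly $\hat A_{\bk+\bl}\hat B^*_\bk$; Lemma~\ref{LEM1} then produces $\tfrac12\int\frac{d^2\bx}{2\pi}e^{-i\bl\cdot\bx}\hat A(\bx)\hat B^*(\bx)$.

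The second term is where the reality hypotheses enter. Here I would introduce $P_\bk\equiv e^{-i2\varphi_\bk}A_\bk$ and $Q_\bk\equiv e^{-i2\varphi_\bk}B_\bk$, so that $A_{\bk+\bl}B^*_\bk e^{-i(2\varphi_{\bk+\bl}-2\varphi_\bk)}=P_{\bk+\bl}Q^*_\bk$, and Lemma~\ref{LEM1} gives $\tfrac12\int\frac{d^2\bx}{2\pi}e^{-i\bl\cdot\bx}P(\bx)Q^*(\bx)$. The crux is to show $P(\bx)=\hat A(\bx)^*$ and $Q(\bx)=\hat B(\bx)^*$. For this I would use two facts: the planar phase obeys $\varphi_{-\bk}=\varphi_\bk+\pi$, so the spin-two factor is reflection invariant, $e^{\pm i2\varphi_{-\bk}}=e^{\pm i2\varphi_\bk}$; and the reality hypothesis $A_{-\bk}=A_\bk^*$. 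Together these give the coefficient identity $P_\bk=(\hat A_{-\bk})^*$, since $(\hat A_{-\bk})^*=(e^{i2\varphi_{-\bk}}A_{-\bk})^*=e^{-i2\varphi_\bk}A_\bk=P_\bk$. Finally, a frequency relation of the form $g_\bk=(f_{-\bk})^*$ implies $g(\bx)=f(\bx)^*$ in position space via the substitution $\bk\mapsto-\bk$ in the inverse transform; applying this with $f=\hat A$ yields $P(\bx)=\hat A(\bx)^*$, and identically $Q(\bx)=\hat B(\bx)^*$.

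With the identification in hand, the second term becomes $\tfrac12\int\frac{d^2\bx}{2\pi}e^{-i\bl\cdot\bx}\hat A^*(\bx)\hat B(\bx)$, using $Q^*(\bx)=\hat B(\bx)$. Adding the two terms, the bracketed sum $\hat A(\bx)\hat B^*(\bx)+\hat A^*(\bx)\hat B(\bx)$ is $2\,Re[\hat A(\bx)\hat B^*(\bx)]$, and the two factors of $\tfrac12$ collapse to the advertised $\int\frac{d^2\bx}{2\pi}e^{-i\bl\cdot\bx}Re[\hat A(\bx)\hat B^*(\bx)]$.

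The main obstacle is the identification $P(\bx)=\hat A(\bx)^*$ in the middle paragraph: it is the only place the reality conditions are used, and it depends entirely on the doubling of the phase angle rendering $e^{i2\varphi_\bk}$ insensitive to the $\pi$ ambiguity distinguishing $\bk$ from $-\bk$. I would double-check both the coefficient identity $P_\bk=(\hat A_{-\bk})^*$ (writing out $\varphi_{-\bk}=\varphi_\bk+\pi$ explicitly) and the position-space consequence $g(\bx)=f(\bx)^*$ of $g_\bk=(f_{-\bk})^*$, since a sign slip in either would destroy the real-part structure that the lemma delivers.
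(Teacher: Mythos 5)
Your proposal is correct and follows essentially the same route as the paper's proof: split the cosine into the two exponentials $e^{\pm i(2\varphi_{\bk+\bl}-2\varphi_\bk)}$, absorb the phases into the fields (your $P_\bk,Q_\bk$ are exactly the paper's $\tilde A_\bk = A_\bk e^{-i2\varphi_\bk}$, $\tilde B_\bk = B_\bk e^{-i2\varphi_\bk}$), apply Lemma~\ref{LEM1} to each piece, and use $\varphi_{-\bk}=\varphi_\bk+\pi$ together with the reality conditions to identify the second position-space factor as $\hat A^*(\bx)\hat B(\bx)$. The only cosmetic difference is that you factor the key identification into a coefficient identity $P_\bk=(\hat A_{-\bk})^*$ plus a general conjugation fact, whereas the paper carries out the same substitution $\bk\mapsto-\bk$ directly inside the inverse transform.
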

\begin{proof} 
{
\footnotesize
First notice that 
\begin{align*}
\int \frac{d^2\bs k}{2\pi}  A_{\bs k +\bs l} B^*_\bk \cos(2\varphi_{\bk + \bl}- 2\varphi_{\bk}) &= 
\int \frac{d^2\bs k}{2\pi}  A_{\bs k +\bs l} B^*_\bk Re\left[ e^{i2\varphi_{\bk + \bl}- i2\varphi_{\bk}} \right]\\
& = 
\frac{1}{2}\int \frac{d^2\bs k}{2\pi}  A_{\bs k +\bs l} B^*_\bk e^{i2\varphi_{\bk + \bl}- i2\varphi_{\bk}} 
 + \frac{1}{2}\int \frac{d^2\bs k}{2\pi}  A_{\bs k +\bs l} B^*_\bk e^{-i2\varphi_{\bk + \bl}+ i2\varphi_{\bk}}  \\
 & = 
\frac{1}{2}\int \frac{d^2\bs k}{2\pi}  \hat A_{\bs k +\bs l} \hat B^*_\bk 
 + \frac{1}{2}\int \frac{d^2\bs k}{2\pi}  \tilde A_{\bs k +\bs l} \tilde B^*_\bk \\
 & = 
 \frac{1}{2}\int \frac{d^2\bx}{2\pi} e^{-i\bl\cdot\bx} \hat A(\bx) \hat B^*(\bx) 
+  \frac{1}{2}\int \frac{d^2\bx}{2\pi} e^{-i\bl\cdot\bx} \tilde A(\bx) \tilde B^*(\bx) 
\end{align*}
where $\tilde A_\bk = A_\bk e^{-i2\varphi_\bk} $ and $\tilde B_{\bk} = B_\bk e^{-i2\varphi_\bk}$. Notice also that $e^{i2\varphi_{-\bk}}= e^{i2(\varphi_{\bk}+\pi)}= e^{i2\varphi_{\bk}}$. Therefore
\begin{align*}
 \tilde A(\bx) &= \int \frac{d^2\bs k}{2\pi}  e^{i\bk \cdot \bx}  A_\bk e^{-i2\varphi_\bk} = \int \frac{d^2\bs k}{2\pi}  e^{i\bk \cdot \bx}  A^*_{-\bk} e^{-i2\varphi_{-\bk}} \\
   &=\left[ \int \frac{d^2\bs k}{2\pi}  e^{-i\bk \cdot \bx}  A_{-\bk} e^{i2\varphi_{-\bk}}\right]^*=\left[ \int \frac{d^2\bs k}{2\pi}  e^{i\bk \cdot \bx}  A_{\bk} e^{i2\varphi_{\bk}}\right]^*=\hat A^*(\bs x)
\end{align*}
and $ \tilde B(\bx) =  \hat B^*(\bx)$ similarly. Therefore 
\begin{align*}
\int \frac{d^2\bs k}{2\pi}  A_{\bs k +\bs l} B^*_\bk \cos(2\varphi_{\bk + \bl}- 2\varphi_{\bk}) &=  \int \frac{d^2\bx}{2\pi} e^{-i\bl\cdot\bx} \left[\frac{1}{2}\hat A(\bx) \hat B^*(\bx) 
+  \frac{1}{2}\hat A^*(\bx) \hat B(\bx) \right] \\
&=  \int \frac{d^2\bx}{2\pi} e^{-i\bl\cdot\bx} Re \left[\hat A(\bx) \hat B^*(\bx) 
 \right] 
\end{align*}
QED.}
\end{proof}

\begin{lemma}
\label{LEM3}
For any functions $A_{-\bl}^* = A_\bl$ and $B_{-\bl}^* = B_\bl$ then
\begin{equation}
\int \frac{d^2\bs k}{2\pi}  A_{\bs k +\bs l} B^*_\bk \sin(2\varphi_{\bk + \bl}- 2\varphi_{\bk})= \int \frac{d^2\bx}{2\pi} e^{-i\bl\cdot\bx} Im[\hat A(\bx) \hat B^*(\bx) ]
\end{equation}
where $\hat A_\bk = A_\bk e^{i2\varphi_\bk}$,  $\hat B_\bk =  B_\bk e^{i2\varphi_\bk}$ and $\varphi_\bk$ is the phase (or angle) of the vector $\bk$.
\end{lemma}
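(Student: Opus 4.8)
The plan is to mirror the proof of the cosine analogue in Lemma~\ref{LEM2}, replacing the real part by the imaginary part throughout. First I would express the sine factor as the imaginary part of a complex exponential,
\begin{equation*}
\sin(2\varphi_{\bk+\bl} - 2\varphi_\bk) = \frac{1}{2i}\Bigl( e^{i2\varphi_{\bk+\bl}- i2\varphi_\bk} - e^{-i2\varphi_{\bk+\bl}+ i2\varphi_\bk}\Bigr),
\end{equation*}
and substitute this into the left hand side. This splits the $\bk$-integral into two pieces: one carrying the weight $e^{i2\varphi_{\bk+\bl}}e^{-i2\varphi_\bk}$ with coefficient $1/(2i)$, and one carrying $e^{-i2\varphi_{\bk+\bl}}e^{i2\varphi_\bk}$ with coefficient $-1/(2i)$.

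Next I would absorb the phase factors into the fields. With $\hat A_\bk \equiv A_\bk e^{i2\varphi_\bk}$ and $\tilde A_\bk \equiv A_\bk e^{-i2\varphi_\bk}$ (and analogously for $B$), the first integrand becomes $\hat A_{\bk+\bl}\hat B^*_\bk$ and the second becomes $\tilde A_{\bk+\bl}\tilde B^*_\bk$, using that $B^*_\bk e^{-i2\varphi_\bk} = \hat B^*_\bk$ and $B^*_\bk e^{i2\varphi_\bk} = \tilde B^*_\bk$. Applying Lemma~\ref{LEM1} to each term converts them into the position-space integrals $\int \frac{d^2\bx}{2\pi} e^{-i\bl\cdot\bx}\hat A(\bx)\hat B^*(\bx)$ and $\int \frac{d^2\bx}{2\pi} e^{-i\bl\cdot\bx}\tilde A(\bx)\tilde B^*(\bx)$, respectively.

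The crucial step, identical to the one in Lemma~\ref{LEM2}, reduces the tilde transforms to conjugates of the hat transforms. Using $\varphi_{-\bk} = \varphi_\bk + \pi$, so that $e^{i2\varphi_{-\bk}} = e^{i2\varphi_\bk}$, together with the reality conditions $A_{-\bk}^* = A_\bk$ and $B_{-\bk}^* = B_\bk$, one finds $\tilde A(\bx) = \hat A^*(\bx)$ and $\tilde B(\bx) = \hat B^*(\bx)$. Substituting these back and combining the two pieces gives
\begin{equation*}
\frac{1}{2i}\int \frac{d^2\bx}{2\pi} e^{-i\bl\cdot\bx}\Bigl( \hat A(\bx)\hat B^*(\bx) - \hat A^*(\bx)\hat B(\bx)\Bigr) = \int \frac{d^2\bx}{2\pi} e^{-i\bl\cdot\bx}\, Im\bigl[\hat A(\bx)\hat B^*(\bx)\bigr],
\end{equation*}
since $\frac{1}{2i}(z - \bar z) = Im(z)$ with $z = \hat A(\bx)\hat B^*(\bx)$. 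This is the claimed identity.

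I expect the only delicate point to be the sign bookkeeping in the $1/(2i)$ coefficients and the verification that the tilde-to-hat reduction carries the correct conjugation; once the identity $e^{i2\varphi_{-\bk}} = e^{i2\varphi_\bk}$ is established, the remaining manipulations are routine and parallel Lemma~\ref{LEM2} line for line.
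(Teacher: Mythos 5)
Your proof is correct, but it follows a different route than the paper's. The paper derives the sine identity \emph{indirectly} from the cosine one: it applies Lemma~\ref{LEM1} to $\int \frac{d^2\bx}{2\pi} e^{-i\bl\cdot\bx}\,\hat A(\bx)\hat B^*(\bx)$, rewrites the resulting frequency integrand as $A_{\bk+\bl}B^*_\bk\, e^{i(2\varphi_{\bk+\bl}-2\varphi_\bk)}$, splits the exponential into $\cos + i\sin$, and invokes Lemma~\ref{LEM2} to identify the cosine piece with $\int \frac{d^2\bx}{2\pi} e^{-i\bl\cdot\bx}\, Re[\hat A(\bx)\hat B^*(\bx)]$; the sine integral is then forced to be the leftover, since $\hat A\hat B^* - Re[\hat A\hat B^*] = i\, Im[\hat A\hat B^*]$ pointwise under the position-space integral. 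You instead Euler-decompose the sine directly, $\sin\theta = \frac{1}{2i}(e^{i\theta}-e^{-i\theta})$, and re-run the internal mechanics of the paper's Lemma~\ref{LEM2} proof: absorbing phases into the hat/tilde fields, applying Lemma~\ref{LEM1} to each piece, and using the conjugation identity $\tilde A(\bx) = \hat A^*(\bx)$. Note that this identity appears only \emph{inside} the paper's proof of Lemma~\ref{LEM2}, not in its statement, so you were right to re-derive it from $\varphi_{-\bk}=\varphi_\bk+\pi$ and the reality conditions rather than cite the lemma for it. The trade-off: the paper's argument is shorter because it uses Lemma~\ref{LEM2} as a black box and never touches the conjugation identity again, while yours is self-contained, exactly parallel to the cosine case, and would generalize mechanically to other phase multiples; the cost is repeating that conjugation step. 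Your sign bookkeeping ($B^*_\bk e^{-i2\varphi_\bk} = \hat B^*_\bk$, $B^*_\bk e^{i2\varphi_\bk} = \tilde B^*_\bk$, and $\frac{1}{2i}(z-\bar z) = Im(z)$) all checks out.
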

\begin{proof}
{
\footnotesize
First notice that 
\begin{align*}
\int \frac{d^2\bx}{2\pi} e^{-i\bl\cdot\bx} \hat A(\bx) \hat B^*(\bx) 
&=\int \frac{d^2\bs k}{2\pi}  \hat A_{\bs k +\bs l} \hat B^*_\bk, \qquad\text{by Lemma \ref{LEM1}}  \\
&=\int \frac{d^2\bs k}{2\pi}  A_{\bs k +\bs l} B^*_\bk e^{i(2\varphi_{\bk + \bl}- 2\varphi_{\bk})} \\
&= \int \frac{d^2\bs k}{2\pi}  A_{\bs k +\bs l} B^*_\bk \cos(2\varphi_{\bk + \bl}- 2\varphi_{\bk})+i\int \frac{d^2\bs k}{2\pi}  A_{\bs k +\bs l} B^*_\bk \sin(2\varphi_{\bk + \bl}- 2\varphi_{\bk})\\
&=
 \int \frac{d^2\bx}{2\pi} e^{-i\bl\cdot\bx} Re[\hat A(\bx) \hat B^*(\bx) ] +i\int \frac{d^2\bs k}{2\pi}  A_{\bs k +\bs l} B^*_\bk \sin(2\varphi_{\bk + \bl}- 2\varphi_{\bk}), \qquad\text{by Lemma \ref{LEM2}}.
\end{align*}
QED.}
\end{proof}

\begin{lemma}
\label{LEM11}
For any functions $A_{-\bl}^* = A_\bl$ and $B_{-\bl}^* = B_\bl$ then
\begin{align}
\int \frac{d^2\bs k}{2\pi}  A_{\bs k +\bs l} B^*_\bk \sin^2(2\varphi_{\bk + \bl}- 2\varphi_{\bk})&= \frac{1}{2}\int \frac{d^2\bx}{2\pi} e^{-i\bl\cdot\bx}\left( A(\bx)B(\bx)-Re[ {\check A}(\bx) {\check B}^*(\bx) ]\right)\\
\int \frac{d^2\bs k}{2\pi}  A_{\bs k +\bs l} B^*_\bk \cos^2(2\varphi_{\bk + \bl}- 2\varphi_{\bk})&= \frac{1}{2}\int \frac{d^2\bx}{2\pi} e^{-i\bl\cdot\bx}\left( A(\bx)B(\bx)+Re[{\check A}(\bx) {\check B}^*(\bx) ]\right)
\end{align}
where $\check A_\bk = A_\bk e^{i4\varphi_\bk} $, $\check B_{\bk} = B_\bk e^{i4\varphi_\bk}$ and $\varphi_\bk$ is the phase (or angle) of the vector $\bk$.
\end{lemma}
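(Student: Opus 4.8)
The plan is to reduce both integrals to the two building blocks already established in Lemmas \ref{LEM1} and \ref{LEM2} by means of the power-reduction identities $\cos^2\theta = \tfrac12(1+\cos 2\theta)$ and $\sin^2\theta = \tfrac12(1-\cos 2\theta)$ applied with $\theta = 2\Delta\varphi$, where $\Delta\varphi \equiv \varphi_{\bk+\bl}-\varphi_\bk$. This converts the awkward quadratic trigonometric weights into a sum of a constant weight and a single $\cos(4\Delta\varphi)$ weight: explicitly $\cos^2(2\Delta\varphi) = \tfrac12 + \tfrac12\cos(4\Delta\varphi)$ and $\sin^2(2\Delta\varphi) = \tfrac12 - \tfrac12\cos(4\Delta\varphi)$. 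Since the two target formulas differ only in the sign of the $\check A\,\check B^*$ term, it suffices to evaluate the two pieces once and then assemble them with the appropriate sign.

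The constant piece $\tfrac12\int \tfrac{d^2\bk}{2\pi} A_{\bk+\bl}B_\bk^*$ is handled immediately by Lemma \ref{LEM1}, giving $\tfrac12\int\tfrac{d^2\bx}{2\pi}e^{-i\bl\cdot\bx}A(\bx)B^*(\bx)$. Here I would invoke the hypothesis $B_{-\bk}^* = B_\bk$, which says that the pixel-space field $B(\bx)$ is real, so that $B^*(\bx) = B(\bx)$; this is precisely what converts the $A(\bx)B^*(\bx)$ produced by Lemma \ref{LEM1} into the $A(\bx)B(\bx)$ appearing on the right-hand side of both formulas.

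The $\cos(4\Delta\varphi)$ piece requires a degree-four analogue of Lemma \ref{LEM2}, namely $\int\tfrac{d^2\bk}{2\pi}A_{\bk+\bl}B_\bk^*\cos(4\Delta\varphi) = \int\tfrac{d^2\bx}{2\pi}e^{-i\bl\cdot\bx}Re[\check A(\bx)\check B^*(\bx)]$. I would prove this exactly as Lemma \ref{LEM2} was proved, only replacing the phase factor $e^{i2\varphi}$ by $e^{i4\varphi}$: write $\cos(4\Delta\varphi) = Re[e^{i4\varphi_{\bk+\bl}-i4\varphi_\bk}]$, split into the two conjugate exponentials, and absorb the phases to recognize $A_{\bk+\bl}e^{i4\varphi_{\bk+\bl}} = \check A_{\bk+\bl}$ and $B_\bk^* e^{-i4\varphi_\bk} = (\check B_\bk)^*$. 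The first exponential then yields $\tfrac12\int\tfrac{d^2\bk}{2\pi}\check A_{\bk+\bl}(\check B_\bk)^*$, which Lemma \ref{LEM1} converts to $\tfrac12\int\tfrac{d^2\bx}{2\pi}e^{-i\bl\cdot\bx}\check A(\bx)\check B^*(\bx)$.

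The main obstacle, and the only step requiring genuine care, is the second exponential, which involves the field $\widetilde A_\bk \equiv A_\bk e^{-i4\varphi_\bk}$; the crux is to show $\widetilde A(\bx) = \check A^*(\bx)$, mirroring the identity $\tilde A(\bx) = \hat A^*(\bx)$ used in Lemma \ref{LEM2}. This follows from the reality hypothesis $A_{-\bk}^* = A_\bk$ together with the geometric fact that $\varphi_{-\bk} = \varphi_\bk + \pi$, so that $e^{i4\varphi_{-\bk}} = e^{i4\varphi_\bk}$ (the factor $4$ enters only through $e^{i4\pi}=1$, exactly as $e^{i2\pi}=1$ sufficed in Lemma \ref{LEM2}); a change of variables $\bk\to-\bk$ then identifies $\widetilde A(\bx)$ with $\check A^*(\bx)$. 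Granting this, the second exponential contributes $\tfrac12\int\tfrac{d^2\bx}{2\pi}e^{-i\bl\cdot\bx}\check A^*(\bx)\check B(\bx)$, and summing the two conjugate contributions produces the real part $Re[\check A(\bx)\check B^*(\bx)]$. Substituting the constant piece and this $\cos(4\Delta\varphi)$ piece back into the power-reduction decompositions, and using $B^*(\bx)=B(\bx)$ once more, yields the two stated formulas, with the plus sign accompanying $\cos^2$ and the minus sign accompanying $\sin^2$.
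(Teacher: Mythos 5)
Your proposal is correct and follows essentially the same route as the paper's proof: your power-reduction identity $\cos^2(2\Delta\varphi)=\tfrac12+\tfrac12\cos(4\Delta\varphi)$ is precisely the paper's expansion of $\bigl(\tfrac12 e^{i2\Delta\varphi}+\tfrac12 e^{-i2\Delta\varphi}\bigr)^2$ into $\tfrac12+\tfrac14 e^{i4\Delta\varphi}+\tfrac14 e^{-i4\Delta\varphi}$, and both arguments then apply Lemma \ref{LEM1} termwise together with the degree-four reality identity $\widetilde A(\bx)=\check A^*(\bx)$ (the paper's $\bar A(\bx)=\tilde A^*(\bx)$) coming from $A^*_{-\bk}=A_\bk$ and $e^{i4\varphi_{-\bk}}=e^{i4\varphi_\bk}$. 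If anything, you are slightly more careful than the paper, which silently uses $B^*(\bx)=B(\bx)$ in the constant term and only writes out the $\cos^2$ case explicitly.
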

\begin{proof}
First notice that 
\begin{align*}
\int \frac{d^2\bs k}{2\pi}  A_{\bs k +\bs l} B^*_\bk \cos^2(2\varphi_{\bk + \bl}- 2\varphi_{\bk}) &= 
\int \frac{d^2\bs k}{2\pi}  A_{\bs k +\bs l} B^*_\bk \left(Re\left[ e^{i2\varphi_{\bk + \bl}- i2\varphi_{\bk}} \right]\right)^2\\
& = 
\int \frac{d^2\bs k}{2\pi}  A_{\bs k +\bs l} B^*_\bk \left( \frac{1}{2}  e^{i2\varphi_{\bk + \bl}- i2\varphi_{\bk}} 
 + \frac{1}{2} e^{-i2\varphi_{\bk + \bl}+ i2\varphi_{\bk}} \right)^2 \\
&= \int \frac{d^2\bs k}{2\pi}  A_{\bs k +\bs l} B^*_\bk \left( \frac{1}{4}  e^{i4\varphi_{\bk + \bl}- i4\varphi_{\bk}} 
 +\frac{1}{4} e^{-i4\varphi_{\bk + \bl}+ i4\varphi_{\bk}} +\frac{1}{2}\right) \\
 & = 
\frac{1}{4}\int \frac{d^2\bs k}{2\pi}  \bar A_{\bs k +\bs l} \bar B^*_\bk 
+  \frac{1}{4}\int \frac{d^2\bs k}{2\pi}  \tilde A_{\bs k +\bs l} \tilde B^*_\bk + \frac{1}{2}\int \frac{d^2\bs k}{2\pi}   A_{\bs k +\bs l}  B^*_\bk  \\
 & = 
 \frac{1}{2}\int \frac{d^2\bx}{2\pi} e^{-i\bl\cdot\bx} Re\left[\bar A(\bx) \bar B^*(\bx) \right]
 + \frac{1}{2}\int \frac{d^2\bx}{2\pi} e^{-i\bl\cdot\bx} A(\bx) B(\bx) 
\end{align*}
where $\bar A_\bk = A_\bk e^{i4\varphi_\bk} $ and $\bar B_{\bk} = B_\bk e^{i4\varphi_\bk}$ and  $\tilde A_\bk = A_\bk e^{-i4\varphi_\bk} $ and $\tilde B_{\bk} = B_\bk e^{-i4\varphi_\bk}$. This follows since $ \bar A(\bx) =\tilde A^*(\bx)$ and $ \bar B(\bx) =\tilde B^*(\bx)$.
QED.
\end{proof}


 %
{\em Remark:} The continuous noise spectrum for the discrete additive noise (with standard deviation $\sigma$ $\mu K$) is given by $C^{NN}_{\bs \ell}= \sigma^2 \Delta \bs x$ where $\Delta\bs x$ denotes the pixel area in position space.

{\flushleft\textcolor{blue}{$\uparrow$------------end long version---------}}\newline
} \fi

\begin{acknowledgments}
We gratefully acknowledge helpful discussions with D. Hanson, L. Knox and A. van Engelen.
\end{acknowledgments}

\bibliography{ref}

\end{document}